\documentclass[12pt]{article}
\usepackage{lMac}
\usepackage{becMac}


\newcommand{\bbbB}{B}

\begin{document}
\title{The Small Field Parabolic Flow for Bosonic Many--body Models:\\
        \Large Part 2 --- Fluctuation Integral and Renormalization}

\author{Tadeusz Balaban}
\affil{\small Department of Mathematics \authorcr
       Rutgers, The State University of New Jersey \authorcr
       tbalaban@math.rutgers.edu\authorcr
       \  }

\author{Joel Feldman\thanks{Research supported in part by the Natural 
                Sciences and Engineering Research Council 
                of Canada and the Forschungsinstitut f\"ur 
                Mathematik, ETH Z\"urich.}}
\affil{Department of Mathematics \authorcr
       University of British Columbia \authorcr
       feldman@math.ubc.ca \authorcr
       http:/\hskip-3pt/www.math.ubc.ca/\squig feldman/\authorcr
       \  }

\author{Horst Kn\"orrer}
\author{Eugene Trubowitz}
\affil{Mathematik \authorcr
       ETH-Z\"urich \authorcr
       knoerrer@math.ethz.ch, trub@math.ethz.ch \authorcr
       http:/\hskip-3pt/www.math.ethz.ch/\squig knoerrer/}


\maketitle

\begin{abstract}
\noindent
This paper is a contribution to a program to see symmetry breaking in a
weakly interacting many Boson system on a three dimensional lattice at 
low temperature.  It is part of an analysis of the ``small field''  approximation 
to the ``parabolic flow'' which exhibits the formation of a ``Mexican hat''
potential well. Here we complete the analysis of a renormalization group
step, started in \cite{PAR1}, by ``evaluating'' the fluctuation integral and
renormalizing the chemical potential.

\end{abstract}

\newpage
\tableofcontents

\setcounter{section}{4}
\newpage

Part of our program to construct and analyze an interacting many Boson system 
on a three dimensional lattice in the thermodynamic limit is the 
``small field parabolic flow'' which exhibits the formation of a 
potential well in the effective interaction. For an overview of this 
part, see \cite{ParOv}. The starting point of this program is a 
representation of a ``small field approximation'' to the  partition function 
which is written in the form of a functional integral 
$
\int_{\cX_0} e^{\cA_0 } 
$
over a $3+1$ dimensional unit lattice $\cX_0$,
with an action $\cA_0$   of the form described in 
\cite[\S\sectINTstartPoint]{PAR1}.  
This action is
the outcome of the previous step in our program
that had settled the temporal ultraviolet problem in imaginary time 
(see \cite{UV},  \cite[Appendix \appSZrewrite]{PAR1}).
For the ``small field parabolic flow'' we perform a number of  approximate
block spin transformations
$\bbbt_0^{(SF)}, \cdots ,  \bbbt_n^{(SF)}$, each followed by a rescaling.
Our main result \cite[Theorem \thmTHmaintheorem]{PAR1} is a 
representation of 
$\big(\!(\bbbs \bbbt_n^{(SF)}) \circ \cdots \circ (\bbbs \bbbt_0^{(SF)})\!\big) 
\big(\!e^{\cA_0 }\! \big)$ for all integers $n$ smaller than a given number
$\np$ defined in \cite[Definition \defHTbasicnorm.b]{PAR1}. 
The representation clearly shows the development of 
the potential well, see \cite[(\eqnINTdeepwell)]{PAR1}.

The proof of the main theorem consists of several steps, outlined 
in \cite{ParOv}. It is a combination of block spin transformation and 
complex stationary phase techniques. In \cite{PAR1} the algebraic aspects 
of these steps are presented in detail. The estimates needed to show that 
these algebraic steps are meaningful are presented in \cite{BGE} and 
in this paper.
\cite{BGE} deals with the existence of, and estimates on, the 
background fields (introduced in 
\cite[Definition \defHTbackgrounddomaction]{PAR1} ) on
which the represention of the effective action in the main theorem is based.
Here, we use these estimates as input to 
 complete the inductive proof of \cite[Theorem \thmTHmaintheorem]{PAR1}
which rewrites the representation $e^{\cC_n} \cF_n$ for 
$\big(\!(\bbbs \bbbt_n^{(SF)}) \circ \cdots \circ (\bbbs \bbbt_0^{(SF)})\!\big) 
\big(\!e^{\cA_0 }\! \big) (\psi_*,\psi)$ given in 
\cite[Corollary \corSTmainCor]{PAR1} in the form specified in 
\cite[Theorem \thmTHmaintheorem]{PAR1}. The main steps are
\begin{itemize}[topsep=2pt, itemsep=2pt, parsep=0pt]
\item
``evaluation'' of the fluctuation integral and
\item
renormalization of the chemical potential.
\end{itemize} 
They are performed in the  two main sections of this paper.

The symmetry breaking in the many Boson system is 
expected to happen only when the chemical potential is above a critical 
value. The renormalization of the chemical potential performed in 
this paper gives some insight into the leading term of the 
expansion of this critical chemical potential in powers of the 
coupling constant. This is presented in Appendix \ref{appMustar}.
The more technical Appendixes \ref{appLocal} and 
\ref{appSCscaling} deal with the localization operation 
that we use during the  course of renormalization, 
and with the effect of scaling on the norms we use.

We keep the terminology and notation of \cite{PAR1}, which is summarized in 
\cite[Appendix \appDefinitions]{PAR1}.

\newpage
\section{One Block Spin Transformation --- The Fluctuation Integral}\label{chapOSFfluct}

In this section, we evaluate the fluctuation integral. 
Fix any $0\le n\le \np$. 
We assume that, if $n\ge 1$, the conclusions of 
\cite[Theorem \thmTHmaintheorem\ and Remark \remHTpreciseinduction]{PAR1} 
hold. In the case of $n=0$, we use the data of 
\cite[\S\sectINTstartPoint]{PAR1}.  

We start by
introducing the main norm that will be used in this section. 
In this and the following section we abbreviate the weight factors
of \cite[Definition \defHTbasicnorm]{PAR1} by
\begin{equation}\label{eqnOSFabbrevwt}
\begin{aligned}
\ka&=\ka(n) &
\ka'&=\ka'(n) \\
\bar\ka &=\ka(n+1) =L^\eta\ka &
\bar\ka' &=\ka'(n+1) =L^{\eta'}\ka' &
\bar\ka_\fl & = \ka_\fl(n+1)=4r_n
\end{aligned}
\end{equation}
We also use the notation
\begin{equation*}
\fv_n=\sfrac{\fv_0}{L^n} = 2{\|\cV_n^{(u)}\|}_{2m}
\end{equation*}
\begin{remark}\label{remOSFvnmum}
By \cite[Remark \remHTpreciseinduction\ and Corollary \corPARmunvn]{PAR1},
\begin{align*}
{\|\cV_n\|}_{2m}&\le\fv_n\\
|\mu_n|&\le 2L^{2n}(\mu_0-\mu_*)+\fv_0^{1-\eps}
\le 4\min\big\{\fv_0^{5\eps}\,,\,L^{2n}\fv_0^{\frac{8}{9}+\eps}\big\}
\end{align*}
\end{remark}
By Remark \ref{remOSFvnmum}, 
   \cite[Definition \defHTbasicnorm, 
   Remark \remHTbasicnorm\  
   and Lemma \lemPARcompradan.b]{PAR1}, we have, choosing $\fv_0$ small 
enough depending on $\eps$ and $L$,
\begin{equation}\label{eqnOSFweightineqsA}
\max\big\{\ L^2|\mu_n|\ ,\  
  \|V_n\|_{2m} (\bar\ka+L^9\bar\ka_\fl)(\bar\ka+\bar\ka'+L^9\bar\ka_\fl)\ \big\}
\le \half\rrho_\bg
\end{equation}
with the $\rrho_\bg$ of \cite[Convention \convBGEconstants]{BGE}.
Denote by $\lun\tilde\cG\run$
the norm of the analytic function $\tilde\cG(\tilde\psi_*,\tilde\psi,z_*,z)$ 
with mass $2m$ which associates
the weight $\,\bar\ka\,$ to the fields $\,\psi_*, \,\psi\,$, weight
$\,\bar\ka'\,$ to the fields $\,\psi_{\nu *},\psi_\nu\,$, $\,\nu=0,\cdots,3\,$,
and the weight $\bar\ka_\fl$ to the fields $z_*,z$. Similarly we 
denote by $\lutn\tilde F\rutn$
the norm of the field map $\tilde F(\tilde\psi_*,\tilde\psi,z_*,z)$ 
with the same mass and field weights.  See \cite[Definition \defDEFkrnel]{PAR1}.

In Lemma \ref{lemOSFmainlem}, below, we prove the bounds that
will be needed for evaluation of the fluctuation integral. It uses

\begin{definition}[Scaling Divergence Factor]\label{defOSFsdf}
Set, for each constant $C\ge 1$ and each $\vp=(p_u,p_0,p_\sp)$, 
\begin{align*}
\sdf(\vp;C)&=\Big(\frac{C}{L^{3/2}}\frac{\bar\ka}{\ka}\Big)^{p_u}
         \Big(\frac{C}{L^{7/2}}\frac{\bar\ka'}{\ka'}\Big)^{p_0}
         \Big(\frac{C}{L^{5/2}}\frac{\bar\ka'}{\ka'}\Big)^{p_\sp}
= C^{p_u+p_0+p_\sp} L^{-\De(\vp)}L^{\eta p_u+ \eta'p_0+\eta'p_\sp}
\end{align*}
where
\begin{equation*}
\De(\vp)=\sfrac{3}{2}p_u+\sfrac{7}{2}p_0+\sfrac{5}{2}p_\sp
\end{equation*}
Furthermore set
\begin{equation*}
\sdf(C)=\sup_{\vp\notin\fD_\rel}\sdf(\vp;C)
\end{equation*}
\end{definition}

\begin{remark}\label{remOSFsdf} 
Assuming that $L\ge \big(2C^8\big)^{1/\eps}$, 
we have
$
\sdf(C)\le\sfrac{1}{2L^5} 
$

\end{remark}
\begin{proof}
When $\vp =(p_u,p_0,p_\sp)$ and $|\vp|=p_u+p_0+p_\sp$
\begin{align*}
\log_L \sdf(\vp, C)
&= \log_L \sdf(\vp, 1)
   +|\vp|\log_L C
\end{align*}
and
\begin{align*}
\log_L \sdf(\vp, 1)
&= -(\sfrac{3}{2}-\eta)( p_u+p_0+p_\sp) 
   -\big(1+(\eta-\eta')\big)(p_0+ p_\sp)
   - p_0 \\
&\le  -\begin{cases}
  8(\sfrac{3}{2} - \eta) & \text{if $|\vp| \ge 8$}\\
  10 -5\eta -\eta' & \text{if $|\vp| =6 $ and $p_0+p_\sp \ge 1$}\\
  8-3\eta -\eta' & \text{if $|\vp| =4 $ and $p_0 \ge 1$}\\
  8 -2\eta -2\eta' & \text{if $|\vp| =4 $ and $p_\sp \ge 2$}\\
  7 -2\eta' & \text{if $p_u=p_\sp=0, \ p_0=2 $}
    \end{cases}\\
&\le -5-\max\big\{\eps\,,\,\sfrac{1}{2}(|\vp|-8)\big\}
\end{align*}
Consequently, 
\begin{align*}
\log_L\big(2L^5\sdf(\vp,C)\big)
&\le \log_L(2C^8) + (|\vp|-8)\log_L C
      -\max\big\{\eps\,,\,\sfrac{1}{2}(|\vp|-8)\big\}\\
&\le \eps - 
   \begin{cases}\eps & \text{if $|\vp|\le 8$}\\
          \sfrac{1}{4}(|\vp|-8) & \text{if $|\vp|\ge 10$}
   \end{cases} \\
&\le 0
\end{align*}
\end{proof}

\begin{remark}\label{remOSFdrelmotivation}
This remark provides the motivation for our choice $\fD_\rel$ 
(in \cite[Definition \defINTrelmonomial]{PAR1}) and $\fD$ (in \cite[(\eqnINTfDdef)]{PAR1}).

Let $\cM$ be a monomial of type $\vp$, as in \cite[Definition \defINTmonomialtype]{PAR1}. By \cite[Lemma \lemSAscaletoscale.b]{PAR1},
\begin{equation*}
\lun\bbbs\cM\run\le L^5\,\sdf(\vp;1)\,\|\cM\|^{(n)}
\end{equation*}
(If the mass were zero in both norms, this would be an equality.)
So the ``scale (n+1) norm'' of the scaled monomial $\bbbs\cM$
is smaller than the ``scale n norm'' of the monomial $\cM$ when 
$L^5\,\sdf(\vp;1)<1$. This is the case if and only if $\vp\in\fD_\rel$.
In fact it was exactly this that determined our choice of $\fD_\rel$.
Monomials of type $\vp$ with $\vp\notin\fD_\rel$ are said to be 
``scaling--weight irrelevant''. When such terms are generated during the
course of renormalization group step number $n$, they are placed in the 
``high degree'' part, $\cE_n$ of the action.

Now let $\cM$ be a monomial of type $\vp\in\fD_\rel$. For some 
$\vp$'s, the size of the kernel of $\cM$ decreases, or at least does
not increase, under scaling.
(This does not contradict  $\lun\bbbs\cM\run > \|\cM\|^{(n)}$ because
the field weights in $\lun\bbbs\cM\run$ are greater than the field
weights in $\|\cM\|^{(n)}$.) Indeed, by \cite[Lemma \lemSAscaletoscale.a]{PAR1},
\begin{equation*}
\|\bbbs\cM\|_{2m}\le L^5\,L^{-\frac{3}{2}p_u-\frac{7}{2}p_0-\frac{5}{2}p_\sp}\,\|\cM\|_m
\end{equation*}
(Again, if the mass were zero in both norms, this would be an equality.)
The only $\vp$'s with 
$\sfrac{3}{2}p_u+\sfrac{7}{2}p_0+\sfrac{5}{2}p_\sp<5$, i.e. the only scaling
relevent monomials, are those with $\vp=(2,0,0), (1,0,1)$.
Here is what we do with monomials $\cM$ of type $\vp\in\fD_\rel$
that are generated during the course of renormalization group step number $n$.
See \S\ref{chapRen}.
\begin{itemize}[leftmargin=*, topsep=2pt, itemsep=2pt, parsep=0pt]
\item
 If $\vp=(6,0,0),\ (1,1,0),\ (0,1,1),\ (0,0,2)$,
(i.e. if $\vp\in\fD$) the monomial is placed in the ``low degree'' 
part, $\cR_n$ of the action. 
\item
If $\vp=(4,0,0)$, the monomial is placed in the 
``main'' part, $A_n$ of the action, renormalizing $\cV$.
\item
 If $\vp=(3,0,1)$, Lemma \ref{lemLlocalize}.b is used
to express $\cM$ as a sum of monomials of type $\vp'$ with 
$\vp'=(2,1,1),\ (2,0,2)$, which are placed in the 
``high degree'' part, $\cE_n$ of the action.
\item
If $\vp=(2,0,0)$, Lemma \ref{lemLlocalize}.c is used
to express $\cM$ as a sum of a local degree two monomial, which 
is placed in the ``main'' part, $A_n$ of the action, renormalizing 
the chemical potential $\mu$, and a sum of monomials of type $\vp'$ with 
$\vp'=(1,1,0), (0,1,1),\ (0,0,2)$, which are placed in the 
``low degree'' part, $\cR_n$ of the action.
\item 
If $\vp=(1,0,1)$, Lemma \ref{lemLlocalize}.a is used
to express $\cM$ as a sum of monomials of type $\vp'$ with 
$\vp'=(0,1,1),\ (0,0,2)$, which are placed in the 
``low degree'' part, $\cR_n$ of the action. 
\end{itemize}
There is one other complication which we have supressed from these
bullets. Monomials generated by the fluctuation integral are naturally
functions of the fields $\psi_{(*)}$. But the ``low degree'' part 
$\cR_n$  and the chemical potential and interaction parts of the
``main'' part $A_n$ of the action are functions of the background field
$\phi_{(*)n}(\psi_*,\psi,\mu_n,\cV_n)$. Expressing the various functions
above in terms of the ``right'' fields complicates the above procedure,
but does not introduce any serious obstructions.
\end{remark}

\begin{lemma}\label{lemOSFmainlem}
There is a constant $\CC_\fl$
that depends only on $\Gam_\op$,  $\GGa_\bg$
and $\rrho_\bg$ such that the following holds.
\begin{enumerate}[label=(\alph*), leftmargin=*]
\item 
Let $\de A_n^{(2)}$ and $\de A_n^{(\ge 3)}$ be the parts
of $\de A_n$ that are of degree two and of degree at least three,
respectively, in $z_{(*)}$. Then
\begin{align*}
\lun \de A_n^{(2)}\run
&\le L^{42} \CC_\fl
 \,\{\|V_n\|_{2m}(\bar\ka+\bar\ka_\fl)^2+|\mu_n|\} \bar\ka_\fl^2   \\
\lun \de A_n^{(\ge 3)}\run
&\le L^{42} \CC_\fl\,\|V_n\|_{2m}(\bar\ka+\bar\ka_\fl)
              \bar\ka_\fl^3
\end{align*}

\item 
Let $\tilde\cE_n$ refer to the $\tilde\cE_n$ 
of \cite[Theorem \thmTHmaintheorem]{PAR1} for $n\ge 1$ and the $\cE_0$ of
\cite[\S\sectINTstartPoint]{PAR1} for $n=0$. 
There are analytic functions 
$\tilde\cE_{n+1,1}(\tilde\psi_*,\tilde\psi)$ and
$\de\tilde\cE_n(\tilde\psi_*,\tilde\psi,z_*,z)$ such that
\begin{align*}
\cE_{n+1,1}(\psi_*,\psi)&=
\tilde\cE_{n+1,1}\big((\psi_*,\{\partial_\nu\psi_*\})\,,\,
                  (\psi,\{\partial_\nu\psi\})\big)\\
\de\cE_n(\psi_*,\psi,z_*,z)&=
\de\tilde\cE_n\big((\psi_*,\{\partial_\nu\psi_*\})\,,\,
                    (\psi,\{\partial_\nu\psi\})\,,\, z_*\,,\,z\big)
\end{align*}
and
\begin{align*}
\lun \tilde\cE_{n+1,1}\run
  & \le  L^5\,\sdf(\CC_\fl)\, \big\| \tilde\cE_n\big\|^{(n)}   \\
\lun \de\tilde\cE_n\run
&  \le L^{18}\ \sfrac{\bar\ka_\fl}{\bar\ka'}
\ \sdf(\CC_\fl)\ \big\| \tilde\cE_n\big\|^{(n)}
\end{align*}
Furthermore, $\tilde\cE_{n+1,1}$ contains no scaling/weight relevant monomials.

\item
We have
\begin{equation*}
(\bbbs\cR_n)(\Phi_*,\Phi)
=\sum_{\vp\in\fD}
    (\bbbs\tilde\cR_n^{(\vp)})\big((\Phi_*,\{\partial_\nu\Phi_*\})\,,\,
                    (\Phi,\{\partial_\nu\Phi\})\big)
\end{equation*}
and
\begin{equation*}
\big\|\bbbs\tilde\cR_n^{(\vp)}\big\|_{2m}
   \le L^{5-\De(\vp)} \big\|\tilde\cR_n^{(\vp)}\big\|_m
\end{equation*}
For each $\vp=(p_u,p_0,p_\sp)\in\fD$, there is 
an analytic function 
$\de\tilde\cR_n^{(\vp)}(\tilde\psi_*,\tilde\psi,z_*,z)$ such that
\begin{equation*}
\de\cR_n(\psi_*,\psi,z_*,z)=\sum_{\vp\in\fD}
\de \tilde\cR_n^{(\vp)}\big((\psi_*,\{\partial_\nu\psi_*\})\,,\,
                   (\psi,\{\partial_\nu\psi\})\,,\, z_*\,,\,z\big)
\end{equation*}
and
\begin{equation*}
\lun \de \tilde\cR_n^{(\vp)}\run\le 
          \CC_\fl^{\De(\vp)}\ 
          L^{5-\De(\vp)}\ 
          \sfrac{\bar\ka^\vp}{\si_n(\vp)}\ 
            {\|\tilde\cR_n^{(\vp)}\|}_m
\end{equation*}
where
\begin{equation*}
\bar\ka^\vp
=\bar\ka^{p_u}\bar\ka'^{p_0+p_\sp}
\end{equation*}
and
\begin{equation*}
\si_n(\vp)=\frac{1}{L^{11}}
               \begin{cases} \sfrac{1}{L^2}\sfrac{\bar\ka'}{\bar\ka_\fl}
                          & \text{if $p_0+p_\sp\ne 0$}\\ \noalign{\vskip0.05in}
                     \sfrac{\bar\ka}{\bar\ka_\fl} 
                         & \text{if $p_0=p_\sp=0$}
               \end{cases}
\end{equation*}
\medskip
\begin{equation}\label{eqnOSFtable}
\renewcommand{\arraystretch}{1.3}
  \begin{tabular}{!{\vrule width 1pt}c|c|c|c|c!{\vrule width 1pt}}
    \noalign{\hrule height 1pt}
       $\vp$
      &$\bar\ka^\vp$
      &$\sfrac{1}{\si_n(\vp)}$
      &$L^{5-\De(\vp)}$        
      &$L^{5-\De(\vp)} \sfrac{\bar\ka^\vp}{\si_n(\vp)}$ \\ \noalign{\hrule height 1pt}
       $(1,1,0)$
      &$\bar\ka\bar\ka'$
      &${\sst L^{13}}\sfrac{\bar\ka_\fl}{\bar\ka'}$
      &$L^0$
      &$(L^2\bar\ka) (L^{11}\bar\ka_\fl)$ \\ \hline
       $(0,1,1)$
      &$\bar\ka'^2$
      &${\sst L^{13}}\sfrac{\bar\ka_\fl}{\bar\ka'}$
      &$L^{-1}$
      &$(L\bar\ka') (L^{11}\bar\ka_\fl)$ \\ \hline
       $(0,0,2)$
      &$\bar\ka'^2$
      &${\sst L^{13}}\sfrac{\bar\ka_\fl}{\bar\ka'}$
      &$L^0$
      &$(L^2\bar\ka') (L^{11}\bar\ka_\fl)$ \\ \hline
       $(6,0,0)$
      &$\bar\ka^6$
      &${\sst L^{11}}\sfrac{\bar\ka_\fl}{\bar\ka}$
      &$L^{-4}$
      &$L^{-4}\bar\ka^5 (L^{11}\bar\ka_\fl)$ \\ \noalign{\hrule height 1pt}
  \end{tabular}
\renewcommand{\arraystretch}{1.0}
\end{equation}
\end{enumerate}
\end{lemma}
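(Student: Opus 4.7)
The plan is to treat the three parts in parallel. For each, substitute the background field decomposition $\psi_{(*)} \mapsto \phi_{(*)n}(\psi_*,\psi,\mu_n,\cV_n) + z_{(*)}$ into the relevant piece of the action (the main part of $A_n$ for (a), the high-degree part $\cE_n$ for (b), the low-degree part $\cR_n$ for (c)), apply the scaling $\bbbs$, and estimate in the norm $\lun\cdot\run$ with the new weights $\bar\ka,\bar\ka',\bar\ka_\fl$. The two common inputs are the background field estimates of \cite{BGE}, which guarantee convergence of the background-field expansions under these weights via (\ref{eqnOSFweightineqsA}), and the scaling estimate Lemma \lemSAscaletoscale\ of \cite{PAR1} recalled in Remark \ref{remOSFdrelmotivation}. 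The constant $\CC_\fl$ is chosen to subsume all fixed factors depending on $\Gam_\op,\GGa_\bg,\rrho_\bg$.

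For part (a), $A_n$ consists of the chemical-potential term $-\mu_n\int|\psi|^2$ and the interaction $-\cV_n$. After the substitution $\psi\mapsto\phi+z$ and subtraction of the $z=0$ contribution, expand $\de A_n$ as a polynomial in $\phi,z$ and split by degree in $z$. The degree-$2$ piece has a chemical-potential contribution bounded by $|\mu_n|\bar\ka_\fl^2$ and an interaction contribution in which the two non-$z$ legs contribute $(\bar\ka+\bar\ka_\fl)^2$, giving $\|V_n\|_{2m}(\bar\ka+\bar\ka_\fl)^2\bar\ka_\fl^2$; the degree-$\ge 3$ piece comes only from the interaction and contributes $\|V_n\|_{2m}(\bar\ka+\bar\ka_\fl)\bar\ka_\fl^3$. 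The overall prefactor $L^{42}$ collects volume losses from rescaling kernels onto the block lattice $\cX_{n+1}$.

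For part (b), apply the scaling estimate monomial-by-monomial to $\tilde\cE_n$. By the construction of $\cE_n$ (summarized in Remark \ref{remOSFdrelmotivation}), every monomial it contains is scaling/weight irrelevant, i.e.\ of type $\vp\notin\fD_\rel$, so each contributes at most $L^5\,\sdf(\vp;\CC_\fl)\,\|\cdot\|^{(n)}$ to $\lun\tilde\cE_{n+1,1}\run$, where $\CC_\fl$ enters through the conversion between scale-$n$ and scale-$(n{+}1)$ weights under the background-field expansion. Taking the supremum over $\vp\notin\fD_\rel$ produces $\sdf(\CC_\fl)$. For $\de\tilde\cE_n$, at least one field has been replaced by a $z$-field weighted by $\bar\ka_\fl$ in place of $\bar\ka'$, producing the extra factor $\bar\ka_\fl/\bar\ka'$; the additional volume loss coming from convolution with the fluctuation propagator is absorbed in the replacement $L^5\to L^{18}$. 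That $\tilde\cE_{n+1,1}$ contains no scaling/weight relevant monomials is automatic because scaling preserves the type $\vp$ and setting $z=0$ only removes monomials.

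For part (c) the structure is the same, but $\vp\in\fD$ forbids taking a supremum, so each $\de\tilde\cR_n^{(\vp)}$ is bounded individually. The factor $L^{5-\De(\vp)}$ is the scaling weight of a type-$\vp$ monomial, and $\bar\ka^\vp$ is the product of new field weights. The denominator $\si_n(\vp)$ records the combinatorial/volume gain from pulling out a single $z$-field: the ratio is $\bar\ka_\fl/\bar\ka'$ whenever $\vp$ contains a derivative index ($p_0+p_\sp\ne 0$) and $\bar\ka_\fl/\bar\ka$ for the undifferentiated type $(6,0,0)$, each dressed with a uniform $L^{-11}$ volume loss. Verifying the entries of (\ref{eqnOSFtable}) is then direct substitution using $\bar\ka = L^\eta\ka$, $\bar\ka' = L^{\eta'}\ka'$, $\bar\ka_\fl = 4r_n$. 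The main obstacle, shared by all three parts, is the bookkeeping noted at the end of Remark \ref{remOSFdrelmotivation}: $\cR_n$ and the non-kinetic parts of $A_n$ are naturally functions of the background field $\phi_{(*)n}$ rather than of $\psi_{(*)}$, so the fluctuation-integral output must be re-expressed via the background-field estimates of \cite{BGE}, and $\CC_\fl$ is chosen to absorb the resulting conversion losses uniformly in $n$.
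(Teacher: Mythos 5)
Your plan bounds the wrong objects, and this is not a presentational issue. The quantities $\de A_n$, $\de\cE_n$, $\de\cR_n$ are \emph{not} obtained by substituting $\psi_{(*)}\mapsto\phi_{(*)n}+z_{(*)}$ into pieces of the action: by their definitions in \cite[(\eqnOSAdeAndef), (\eqnOSAdeEndef), (\eqnOSAdeRndef)]{PAR1}, the fluctuation enters through $L^{3/2}\bbbs D^{(n)(*)}\bbbs^{-1}z_{(*)}$ added to the scaled critical fields $\hat\psi_{(*)n}$, $\hat\psi_{(*)n,\nu}$ (parts (a) with $n=0$ and (b)), and through the fluctuation-induced variation $\de\hat\phi_{(*)n+1}$, $\de\hat\phi_{(*)n+1,\nu}$ of the scale-$(n+1)$ background field (part (a) with $n\ge 1$, and part (c)). In particular, for $n\ge 1$ all of part (a) is the bilinear pairing $\big< z_*,\bbbs D^{(n)}\fQ_n Q_n\bbbs^{-1}\,\de\hat\phi^{(+)}_{(*)n+1}\big>_1$ plus its conjugate, and the stated bound is read off from the single estimate of \cite[Proposition \propBGEdephisoln.b]{BGE}; a direct expansion of $\cV_n(\phi+z)$ and $\mu_n\int|\psi|^2$ is neither the definition of $\de A_n$ nor a route to its constants (that expansion occurs only at $n=0$, and with the correct substitutions). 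Similarly, the gains $1/\si$ and $1/\si_n(\vp)$ in (b) and (c) are not a matter of ``one field replaced by a $z$-field weighted by $\bar\ka_\fl$ instead of $\bar\ka'$'': they come from the substitution calculus of \cite[Proposition \propSUBsubstitution]{SUB} applied with the auxiliary weights $\la,\la'$ of \eqref{eqnOSFlalapsi}, where $\si$ must satisfy $\lutn \hat\psi_{(*)n}\rutn+\si\,\lutn L^{3/2}\bbbs D^{(n)(*)}\bbbs^{-1}z_{(*)}\rutn\le\la$ simultaneously in the underived and derived slots, which is why $\si$ is the minimum in \eqref{eqnOSFsigeone} and not attached to whichever slot was ``replaced''; \cite[Lemma \lemSAscaletoscale.b]{PAR1} applied with these auxiliary weights is then what yields $L^5\,\sdf(\CC_\fl)$ and $\CC_\fl^{\De(\vp)}L^{5-\De(\vp)}$. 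Your proposal supplies no mechanism of this kind, so the quantitative statements are not reachable from it.

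Second, your argument that $\tilde\cE_{n+1,1}$ contains no scaling/weight relevant monomials (``scaling preserves the type and setting $z=0$ only removes monomials'') fails. $\tilde\cE_{n+1,1}$ is not $\bbbs\tilde\cE_n$ with a fluctuation set to zero; it is $\bbbs\tilde\cE_n$ composed with the nonlinear critical-field maps $\hat\psi_{(*)n}$, $\hat\psi_{(*)n,\nu}$, and such a composition can change monomial types: if $\hat\psi_{(*)n,\nu}$ had a component of degree zero in the derivative fields $\psi_{(*)\nu}$, an irrelevant monomial of type $(2,1,1)$ in $\tilde\cE_n$ would produce a relevant monomial of type $(4,0,0)$. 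The absence of relevant monomials therefore rests on the degree properties of $\hat\psi_{(*)n}$ and $\hat\psi_{(*)n,\nu}$ stated in \cite[Proposition \propCFpsisoln]{BGE}, which your argument never invokes.
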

\begin{proof} 
(a) We first consider the case $n\ge 1$. 
We apply \cite[Proposition \propBGEdephisoln.b]{BGE} with
$\fm  = 2m$, $\wf=\bar\ka$, $\wf'=\bar\ka'$ and $\wf_\fl=\bar\ka_\fl$. 
The hypotheses of \cite[Proposition \propBGEdephisoln]{BGE}
are fulfilled by \eqref{eqnOSFweightineqsA} so that
\begin{align*}
\luTN \de\hat\phi_{(*)n+1}^{(+)}\ruTN
  &\le  L^{29} \GGa_\bg\,\{\|V_n\|_{2m}(\bar\ka+\bar\ka_\fl)^2+|\mu_n|\}
              \bar\ka_\fl
\end{align*}
The claim follows easily using \cite[(\eqnOSAdeAndef.a)]{PAR1}.

We now consider $n=0$. The kernel $\half V_0^{(s)}=\half V_1^{(u)}$ 
of $\bbbs\cV_0$ is given in \cite[Remark \remSCscaling.h]{PAR1} and fulfills
$\|V_0^{(s)}\|_{2m} \le\sfrac{1}{L} \|V_0\|_{2m}$
by \cite[Lemma \lemSAscaletoscale.a]{PAR1}. Expanding the quartic
\begin{equation*}
(\bbbs\cV_0)\big(\hat\psi_*+\de\psi_*\,,\,
               \hat\psi+\de\psi\big) 
         \bigg|_{\hat\psi_{(*)}=\hat\psi_{0(*)}(\psi_*,\psi,\mu_0)\atop
                  \de\psi_{(*)}=L^{3/2}\bbbs D^{(0)(*)}\bbbs^{-1}z_{(*)}}
\end{equation*}
in powers of $z_{(*)}$, we get, by 
\cite[Remark \remCFpsisolnZero and Proposition \propBGEphivepssoln.a]{BGE} and 
\cite[Proposition \propSUBsubstitution.a]{SUB}
\begin{align*}
\lun \de A_0^{(2)}\run
&\!\le \!\smchoose{4}{2} \|V_1\|_{2m}
    \big[\|S_1(L^2\mu_0)^{(*)}Q_1^* \fQ_1\|_{2m} 
       \!+\!\GGa_\bg \|V_1\|_{2m}\bar\ka^2\big]^2 \bar\ka^2
     \big[L^{\frac{3}{2}}\|\bbbs D^{(0)}\bbbs^{-1}\|_{2m} \bar\ka_\fl\big]^2\\
&\hskip1in+|\mu_0|\,L^5\,\|\bbbs C^{(0)}\bbbs^{-1}\|_{2m} \bar\ka_\fl^2
\\
&\le L^{21} \CC_\fl \big(\|V_0\|_{2m}\bar\ka^2+|\mu_0|\big)\bar\ka_\fl^2   \\
\lun \de A_0^{(\ge 3)}\run
&\le \smchoose{4}{3} \|V_1\|_{2m}
       \big[\|S_1(L^2\mu_0)^{(*)}Q_1^* \fQ_1\|_{2m} 
       \!+\!\GGa_\bg \|V_1\|_{2m}\bar\ka^2\big]\,\bar\ka\ 
     \big[L^{\frac{3}{2}}\|\bbbs D^{(0)}\bbbs^{-1}\|_{2m} \bar\ka_\fl\big]^3\\
&\hskip1in+\smchoose{4}{4} \|V_1\|_{2m} 
     \big[L^{\frac{3}{2}}\|\bbbs D^{(0)}\bbbs^{-1}\|_{2m} \bar\ka_\fl\big]^4\\
&\le L^{42} \CC_\fl\,\|V_0\|_{2m}(\bar\ka+\bar\ka_\fl)
              \bar\ka_\fl^3
\end{align*}

\Item (b) Set
\begin{align*}
\tilde\cE_{n+1,1}(\tilde\psi_*,\tilde\psi)
&=(\bbbs\tilde\cE_n)\big(\,(\Psi_*,\{\Psi_{*\nu}\})\,,\,
                 (\Psi,\{\Psi_\nu\})\,\big)\Big|
       _{\atop{\Psi_{(*)}=\hat\psi_{(*)n}(\psi_*,\psi,\mu_n,\cV_n)}
         {\Psi_{(*)\nu}= 
            \hat \psi_{(*)n,\nu}(\psi_*,\psi,\psi_{*\nu},\psi_\nu,\mu_n,\cV_n)}}
\end{align*}
and
\begin{align*}
\de\tilde\cE_n(\tilde\psi_*,\tilde\psi,z_*,z)
&=(\bbbs \tilde\cE_n)\big(\tilde\Psi_*,\tilde\Psi\big)\Big|
       _{\atop{\Psi_{(*)}=\hat\psi_{(*)n}(\psi_*,\psi,\mu_n,\cV_n)
                        +L^{3/2}\bbbs D^{(n)(*)}\bbbs^{-1}z_{(*)}}
         {\Psi_{(*)\nu}= 
            \hat \psi_{(*)n,\nu}(\psi_*,\psi,\psi_{*\nu},\psi_\nu,\mu_n,\cV_n)
              +L^{3/2}\bbbs_\nu\partial_\nu D^{(n)(*)}\bbbs^{-1}z_{(*)}}}\\
&\hskip0.5in  -(\bbbs\tilde\cE_n)\big(\tilde\Psi_*,\tilde\Psi\big)\Big|
       _{\atop{\Psi_{(*)}=\hat\psi_{(*)n}(\psi_*,\psi,\mu_n,\cV_n)}
          {\Psi_{(*)\nu}= 
            \hat \psi_{(*)n,\nu}(\psi_*,\psi,\psi_{*\nu},\psi_\nu,\mu_n,\cV_n)}}
\end{align*}
with the $\tilde\cE_n$ of \cite[Theorem \thmTHmaintheorem]{PAR1} 
and the $\hat \psi_{(*)n,\nu}$ of \cite[Proposition \propCFpsisoln]{BGE}.
By \cite[Remark \remSCscaling.b]{PAR1} the two equations of part (b) hold.
That $\tilde\cE_{n+1,1}$ contains no scaling/weight relevant monomials
follows from the degree properties of $\hat\psi_{(*)n}$ and 
$\hat \psi_{(*)n,\nu}$ specified in \cite[Proposition \propCFpsisoln]{BGE}.

We set
\begin{equation}\label{eqnOSFlalapsi}
\begin{split}
\la&=\big\{ \GGa_\bg
           +\sfrac{1}{L^9}\|\bbbs D^{(n)}\bbbs^{-1}\|_{2m}\big\}\ \bar\ka
\\
\la'&= \max_{0\le \nu\le 3}\big\{\GGa_\bg 
  +\sfrac{1}{L^{11}}\|\partial_\nu\bbbs D^{(n)}\bbbs^{-1}\|_{2m}\big\}\bar\ka'
\\
\si&=\frac{1}{L^{13}}\frac{\bar\ka'}{\bar\ka_\fl} 
\end{split}
\end{equation}
As $\fv_0$ is being chosen sufficiently small, depending on $L$, 
\begin{equation}\label{eqnOSFsigeone}
\si=\frac{1}{L^{11}}\min\bigg\{
                        \frac{\bar\ka}{\bar\ka_\fl}\ ,\ 
                         \frac{\bar\ka'}{L^2\bar\ka_\fl}\bigg\} \ge 1
\end{equation}
by \cite[Definition \defHTbasicnorm]{PAR1}.
Denote by $\|\ \cdot\ \|_\la$ the (auxiliary) norm with 
mass $2m$ that assigns the weight factors
$\la$ to the fields $\Psi_{(*)}$ and $\la'$ to the fields $\Psi_{\nu(*)}$.
By \cite[Proposition \propCFpsisoln]{BGE}, with $\wf=\bar\ka$, 
$\wf'=\bar\ka'$ and $\wf_\fl=\bar\ka_\fl$, 
\begin{equation}\label{eqnOSFlabnds}
\begin{split}
\lutn \hat\psi_{(*)n}\rutn 
+\si\,\lutn L^{3/2}\bbbs D^{(n)(*)}\bbbs^{-1}z_{(*)}\rutn &\le\la
\\
\lutn \hat \psi_{(*)n,\nu} \rutn 
+\si\,\lutn L^{3/2}\bbbs_\nu\partial_\nu D^{(n)(*)}\bbbs^{-1}z_{(*)}\rutn 
&\le\la',\ 0\le\nu\le 3
\end{split}
\end{equation}
so that, by \cite[Proposition \propSUBsubstitution.a,b]{SUB},
$$
\lun \tilde\cE_{n+1,1}\run\le  \|\bbbs\tilde\cE_n\|_\la    \qquad
\lun \de\tilde\cE_n\run\le \sfrac{1}{\si}\|\bbbs\tilde\cE_n\|_\la
$$
For each monomial $\cM$ of type $\vp$ in $\tilde\cE_n$,
\cite[Lemma \lemSAscaletoscale.b]{PAR1} with
$\fm=2m$, $\check\fm = m$,
$\wf=\la$, $\wf'=\la'$,
$\check\wf=\ka$ and $\check\wf'=\ka'$,
gives
\begin{align*}
\big\|\bbbs\cM\big\|_\la\le L^5\,\Sdf(\cM)\,\|\cM\big\|^{(n)}
\end{align*}
with
\begin{align*}
\Sdf(\cM)
&=\big(\frac{1}{L^{3/2}}\frac{\la}{\ka}\Big)^{p_u}
         \Big(\frac{1}{L^{7/2}}\frac{\la'}{\ka'}\Big)^{p_0}
         \Big(\frac{1}{L^{5/2}}\frac{\la'}{\ka'}\Big)^{p_\sp} 
\le \sdf(\vp;\CC_\fl)
\end{align*}
provided
\begin{align*}
\GGa_\bg   +\sfrac{1}{L^9}\|\bbbs D^{(n)}\bbbs^{-1}\|_{2m} +
          \max_{0\le\nu\le 3}
         \sfrac{1}{L^{11}}\|\partial_\nu\bbbs D^{(n)}\bbbs^{-1}\|_{2m}
\le \GGa_\bg+3e^{2m}\Gam_\op
\le \CC_\fl
\end{align*}
So we have
\begin{equation}\label{eqnOSFstildecE}
\big\|\bbbs\tilde\cE_n\big\|_\la \le  L^5\,\sdf(\CC_\fl)\,
                                             \big\| \tilde\cE_n\big\|^{(n)} 
\end{equation}
and the conclusion follows.

\Item (c)
The first equation holds by \cite[Remark \remSCscaling.b]{PAR1} and the bound
on $\big\|\bbbs\tilde\cR_n^{(\vp)}\big\|_{2m}$ is an immediate
consequence of \cite[Lemma \lemSAscaletoscale.a]{PAR1}.

Set 
\begin{align*}
\de \tilde\cR_n^{(\vp)}(\tilde\psi_*,\tilde\psi,z_*,z)
&\!=\!(\bbbs \tilde\cR_n^{(\vp)})\big(\tilde\Phi_*,\tilde\Phi\big)\Big|
       _{\atop{\Phi_{(*)}=\phi_{(*)n+1}(\psi_*,\psi,L^2\mu_n,\bbbs\cV_n)
                     +\de\hat\phi_{(*)n+1}(\psi_*,\psi,z_*,z)}
              {\atop{\Phi_{(*)\nu}= 
                 \phi_{n+1(*),\nu}(\psi_*,\psi,\psi_{*\nu},\psi_\nu,L^2\mu_n,\bbbs\cV_n)
                 \hskip0.5in}
                    {\hskip1in
         +\de\hat\phi_{(*)n+1,\nu}(\psi_*,\psi,\psi_{*\nu},\psi_\nu,z_*,z)}}}\\
&\hskip0.5in  
  -(\bbbs\tilde\cR_n^{(\vp)})\big(\tilde\Phi_*,\tilde\Phi\big)\Big|
       _{\atop{\Phi_{(*)}=\phi_{(*)n+1}(\psi_*,\psi,L^2\mu_n,\bbbs\cV_n)}
              {\Phi_{(*)\nu}= 
     \phi_{n+1(*),\nu}(\psi_*,\psi,\psi_{*\nu},\psi_\nu,L^2\mu_n,\bbbs\cV_n)}}
\end{align*}
where, by \cite[Proposition \propBGEphivepssoln]{BGE}, 
\begin{align*}
\phi_{n+1(*),\nu}(\psi_*,\psi,\psi_{*\nu},\psi_\nu,L^2\mu_n,\bbbs\cV_n)
&= B^{(\pm)}_{n+1,L^2\mu_n,\nu}\psi_{(*)\nu} \\ &\hskip0.5in
           + \phi_{(*)n+1,\nu}^{(\ge 3)} 
                    (\psi_*,\psi,\psi_{*\nu},\psi_\nu,L^2\mu_n,\bbbs\cV_n)
\end{align*}
The properties of  $\de\hat\phi_{(*)n+1,\nu}$ are given in
\cite[Proposition \propBGEdephisoln.e]{BGE}.
By \cite[Remark \remSCscaling.b]{PAR1} we have 
$\de\cR_n=\sum_{\vp\in\fD}\de \tilde\cR_n^{(\vp)}$.

To bound $\lun \de \tilde\cR_n^{(\vp)}\run$ we proceed as we did in part
(b), but setting
\begin{align*}
\la&=\big\{ \big\|S_{n+1}(L^2\mu_n)^{(*)}Q_{n+1}^* \fQ_{n+1}\big\|_{2m} 
          +\GGa_\bg \La_\phi + \GGa_\bg\big\}\ \bar\ka
\\
\la'&= \max_{0\le \nu\le 3}\big\{ 
 \max_{\si\in\pm}\big\|B_{n+1,L^2\mu_n,\nu}^{(\si)}\big\|_{2m} 
          +\GGa_\bg \La_\phi+ \GGa_\bg\big\} \bar\ka'
\end{align*}
Denote by $\|\ \cdot\ \|_\la$ the (auxiliary) norm with 
mass $2m$ that assigns the weight factors
$\la$ to the fields $\Phi_{(*)}$ and $\la'$ to the fields $\Phi_{\nu(*)}$.
By \cite[Propositions \propBGEphivepssoln\ and \propBGEdephisoln.a,e]{BGE},
with $\wf=\bar\ka$, $\wf'=\bar\ka'$ and $\wf_\fl=\bar\ka_\fl$, 
\begin{alignat*}{3}
\lutn \phi_{(*)n+1}(\psi_*,\psi,L^2\mu_n,\bbbs\cV_n) \rutn 
+\si_n(\vp)\,\lutn \de\hat\phi_{(*)n+1}\rutn &\le\la  &
\quad&\text{if $p_u\ne 0$}
\\
\lutn B^{(\pm)}_{n+1,L^2\mu_n,\nu}\psi_{(*)\nu}+\phi^{(\ge 3)}_{(*) n+1,\nu} \rutn 
+\si_n(\vp)\,\lutn \de\hat\phi_{(*)n+1,\nu}\rutn 
&\le\la'&
&\text{if $p_0+p_\sp\ne 0$}
\end{alignat*}
As in \eqref{eqnOSFsigeone}, $\si_n(\vp)\ge 1$,
so that, by \cite[Proposition \propSUBsubstitution.b]{SUB},
\begin{align*}
\lun \de \tilde\cR_n^{(\vp)}\run
       &\le \sfrac{1}{\si_n(\vp)}\|\bbbs\tilde\cR_n^{(\vp)}\|_\la
       \le \sfrac{1}{\si_n(\vp)}L^{5-\De(\vp)}
            \|\tilde\cR_n^{(\vp)}\|_m\ 
               \la^{p_{*u}+p_u}\prod_{\nu=0}^3{\la'}^{p_{*\nu}+p_\nu}
\end{align*}
\end{proof}

Parts (b) and (c) of Lemma \ref{lemOSFmainlem} 
provide bounds on the constituents $\cE_{n+1,1}$ and $\bbbs\cR_n$
of the contribution, $\cC_n$, from the critical field in \cite[Corollary \corSTmainCor]{PAR1}. We now provide a bound on the fluctuation
integral $\cF_n$.

\begin{proposition}\label{propOSFmainprop}
There is an analytic function  $\tilde\cE_\fl(\tilde\psi_*,\tilde\psi)$ 
   and a constant $\cZ'_n$ such that 
\begin{equation*}
\cF_n(\psi_*,\psi)
=\sfrac{1}{\cZ'_n}e^{\tilde\cE_\fl((\psi_*,\{\partial_\nu\psi_*\})\,,\,
                  (\psi,\{\partial_\nu\psi\}))}
\qquad
\text{and}
\qquad
\lun\tilde\cE_\fl\run  \le   \fe_\fl(n)
\end{equation*}

\end{proposition}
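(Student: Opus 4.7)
The plan is to interpret $\cF_n$ as a Gaussian integral over the fluctuation field $(z_*,z)$ whose covariance comes from the block-spin kernel, and whose integrand is obtained from $e^{\cA_n}$ after the critical-field substitution. By construction, all of the $z$-dependence is packaged into
\[
W_n(\psi_*,\psi,z_*,z) \;=\; \de A_n \;+\; \de\tilde\cE_n \;+\; \sum_{\vp\in\fD}\de\tilde\cR_n^{(\vp)},
\]
which, by Lemma \ref{lemOSFmainlem}, is small in the norm $\lun\cdot\run$ with weight $\bar\ka_\fl$ on $(z_*,z)$. The goal is to perform the Gaussian integration and rewrite the result in the form $\tfrac{1}{\cZ'_n}e^{\tilde\cE_\fl}$.

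First I would decompose $W_n$ according to degree in $z_{(*)}$, separating the quadratic piece $W_n^{(2)}$ (which includes $\de A_n^{(2)}$) from the rest. The quadratic piece is absorbed into the Gaussian measure, producing a modified covariance and a determinantal prefactor. I would define $\cZ'_n$ to be the original Gaussian normalization times the $\psi$-independent part of this determinant, so that the remaining $\psi$-dependent log-determinant contributes to $\tilde\cE_\fl$. This log-determinant is expanded as a convergent trace-cycle series and each term is bounded using $\lun \de A_n^{(2)}\run$ from Lemma \ref{lemOSFmainlem}(a) together with the covariance bounds of \cite[Convention \convBGEconstants]{BGE}. The linear-in-$z$ piece of $W_n$, if present, is removed by a shift of the $z$-integration variable; its contribution to $\tilde\cE_\fl$ is controlled directly by Lemma \ref{lemOSFmainlem}(b,c).

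Next I would evaluate the residual integral $\int e^{W_n^{(\ge 3)}}\,d\mu(z)$ against the modified Gaussian measure using the standard logarithm-of-Gaussian-integral machinery developed in the BFKT toolkit (\cite{SUB} and earlier installments). Since the norms provided by Lemma \ref{lemOSFmainlem} carry the small factors $\bar\ka_\fl^3$ for $\de A_n^{(\ge 3)}$, $\sdf(\CC_\fl)\le (2L^5)^{-1}$ for $\de\tilde\cE_n$ via Remark \ref{remOSFsdf}, and $\si_n(\vp)^{-1}\le 1$ for $\de\tilde\cR_n^{(\vp)}$, the connected-graph expansion converges and yields an analytic $\tilde\cE_\fl^{(\ge 3)}$. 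Setting $\tilde\cE_\fl = \tilde\cE_\fl^{(\le 2)} + \tilde\cE_\fl^{(\ge 3)}$ and summing the individual bounds of Lemma \ref{lemOSFmainlem} with their respective prefactors $L^{42}\CC_\fl$, $L^5\,\sdf(\CC_\fl)$, and the scaling table of Lemma \ref{lemOSFmainlem}(c) produces $\lun\tilde\cE_\fl\run\le \fe_\fl(n)$, once $\fe_\fl(n)$ is defined to accommodate these contributions.

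The main obstacle is the quadratic contribution $\de A_n^{(2)}$. Its bound carries a factor $\bar\ka_\fl^2$ which, unlike the higher powers of $\bar\ka_\fl$ appearing in $\de A_n^{(\ge 3)}$, need not on its own supply sufficient smallness to control the log-determinant. One therefore has to argue that the composition of the fluctuation covariance with $W_n^{(2)}$ is small in operator norm, so that the trace-log expansion converges and each of its terms has a kernel norm compatible with the target field weights $\bar\ka,\bar\ka'$ in $\lun\cdot\run$. Verifying this uses the smallness estimate \eqref{eqnOSFweightineqsA} on $L^2|\mu_n|$ and $\|V_n\|_{2m}(\bar\ka+L^9\bar\ka_\fl)(\bar\ka+\bar\ka'+L^9\bar\ka_\fl)$, the refined bound $|\mu_n|\le 4\fv_0^{5\eps}$ from Remark \ref{remOSFvnmum}, and the operator-norm estimates on the fluctuation covariance from \cite{BGE}; once these are in place, the remaining bookkeeping is largely mechanical.
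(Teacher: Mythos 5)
Your overall strategy deviates from the paper's in a way that creates a genuine gap: the fluctuation measure here is not a full Gaussian on $\bbbc^{\cX_1^{(n)}}$ but the normalized measure $d\mu_{r_n}(z^*,z)$ supported on the polydisc $|z(w)|\le r_n$. Consequently the two central Gaussian manipulations you propose --- absorbing the quadratic piece $W_n^{(2)}$ into a modified covariance with a determinantal prefactor, and removing the linear-in-$z$ part by a shift of the integration variable --- are not available as stated: completing the square or translating $z$ moves the domain of integration off the polydisc and generates boundary contributions that you never control, and the very point of the small-field setup is that the integration region is bounded by $r_n$. Your proposal would need either a contour-deformation argument with explicit control of the boundary terms or a different treatment of the quadratic and linear pieces, and neither is supplied.

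The paper avoids all of this by keeping the measure fixed and treating the \emph{entire} integrand exponent $\tilde\cD=-\de A_n^{(2)}-\de A_n^{(\ge 3)}+\de\tilde\cE_n+\sum_{\vp\in\fD}\de\tilde\cR_n^{(\vp)}$ perturbatively: using Lemma \ref{lemOSFmainlem} together with the inequalities (\ref{eqnOSFineqs}.a--c) (which rest on Remark \ref{remOSFvnmum} and the estimates of \cite{PAR1} on $\fe_\fl(n)$, $r_n$ and $\fr_\vp(n,\CC_\cR)$) one gets $\lun\tilde\cD\run\le\half\fe_\fl(n)\le\sfrac{1}{32}$, and then \cite[Theorem 3.4]{CPC} applied to the normalized ratio of integrals directly yields the analytic $\tilde\cE_\fl$ with $\lun\tilde\cE_\fl\run\le\fe_\fl(n)$; the constant $\cZ'_n$ is just the $\psi$-independent normalization $\int d\mu_{r_n}\,e^{\tilde\cD(0,0,z_*,z)}$. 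In particular the ``main obstacle'' you identify with $\de A_n^{(2)}$ is illusory in this scheme: although its bound carries only $\bar\ka_\fl^2$, the prefactor $\|V_n\|_{2m}(\bar\ka+\bar\ka_\fl)^2+|\mu_n|$ is itself small (of order $\fv_n\bar\ka^2$ plus the bound on $|\mu_n|$), which is exactly what (\ref{eqnOSFineqs}.a) exploits, so no trace-log resummation or operator-norm smallness of a resolvent is ever needed. Finally, note that $\fe_\fl(n)$ is already fixed in \cite[Definition \defHTbasicnorm]{PAR1}; the substantive work is verifying that the bounds of Lemma \ref{lemOSFmainlem} fit under $\sfrac{1}{16}\fe_\fl(n)$ each, a verification your proposal defers with ``once $\fe_\fl(n)$ is defined to accommodate these contributions'' but which cannot be deferred.
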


\begin{proof} By \cite[\eqnOSAfluctInt]{PAR1} and  Lemma \ref{lemOSFmainlem}
the fluctuation integral  is
\begin{align*}
\cF_n(\psi_*,\psi)
&=\int d\mu_{r_n}(z^*,z)\ 
e^{\tilde\cD(\tilde\psi_*,\tilde\psi,z_*,z)}
        \bigg|_{\tilde\psi_{(*)}=(\psi_{(*)},\{\partial_\nu\psi_{(*)}\})}
\end{align*}
where
\begin{align*}
\int d\mu_{r_n}(z^*,z)
=\Big[\hskip-3pt
 \prod_{w\in\cX_1^{(n)}}  \int\limits_{|z(w)|\le r_n} \hskip -6pt
\sfrac{dz(w)^*\wedge dz(w)}{2\pi i} e^{-|z(w)|^2}\Big]
\end{align*}
and
\refstepcounter{equation}\label{eqnOSFineqs}
\begin{align*}
\tilde\cD(\tilde\psi_*,\tilde\psi,z_*,z)
&=-\de A^{(2)}_n(\psi_*,\psi,z_*,z) -\de A^{(\ge 3)}_n(\psi_*,\psi,z_*,z) \\
&\hskip0.5in  +\de\tilde\cE_n\big(\tilde\psi_*,\tilde\psi,z_*,z\big)    
  +\sum_{\vp\in\fD}
\de \tilde\cR_n^{(\vp)}\big(\tilde\psi_*,\tilde\psi, z_*,z\big)
\end{align*}
By Remark \ref{remOSFvnmum} and \cite[Definition \defHTbasicnorm,
Lemma \lemPARlnmuzeromustar.d and (\eqnPARestrad.b)]{PAR1},
\begin{align}
2 L^{42} \CC_\fl \,\{\|V_n\|_{2m}(\bar\ka+\bar\ka_\fl)^2+|\mu_n|\} \bar\ka_\fl^2
  &\le 2 L^{42} \CC_\fl \,\{4\sfrac{\fv_0}{L^n}\bar\ka^2
              +2L^{2n}(\mu_0-\mu_*)+\fv_0^{1-\eps}\} \bar\ka_\fl^2
\nonumber\\
      &\le \sfrac{1}{16}\fe_\fl(n)
\tag{\ref{eqnOSFineqs}.a}
\end{align}
By \cite[Definition \defHTbasicnorm]{PAR1}, 
\begin{align}
L^{18} \sfrac{\bar\ka_\fl}{\bar\ka'}\ \sdf(\CC_\fl)\ \fv_0^\eps
                    &\le \sfrac{1}{16}\fe_\fl(n)
\tag{\ref{eqnOSFineqs}.b}
\end{align}
By \cite[Lemma \lemPARestrpnC.a]{PAR1} and \eqref{eqnOSFtable}, 
\begin{align}
 L^5\big(\sfrac{\CC_\fl}{L}\big)^{\De(\vp)}
           \ \sfrac{1}{\si_n(\vp)}\ \bar\ka^\vp\,\fr_\vp(n,\CC_\cR)
       &\le \sfrac{1}{16}\fe_\fl(n)\qquad\text{for each $\vp\in \fD$}
&\tag{\ref{eqnOSFineqs}.c}
\end{align}
provided $\fv_0$ is small enough that the hypothesis 
$\,\eps |\log\fv_0| \ge 2\log (1+\CC_\cR)\,\Pi_0^\infty(\CC_\cR)\,$
of \cite[Lemma \lemPARestrpnC]{PAR1} is satisfied.
By Lemma \ref{lemOSFmainlem} and \eqref{eqnOSFineqs} we have
$
\lun \tilde\cD\run
\le \half \fe_\fl(n)
\le\sfrac{1}{32}
$
by \cite[(\eqnPARestrad.a)]{PAR1}.
\cite[Theorem 3.4]{CPC} yields the existence of an
analytic function $\tilde\cE_\fl\big(\tilde\psi_*,\tilde\psi)$
such that 
\begin{equation*}
\frac{\int d\mu_{r_n}(z^*,z)\ 
e^{\tilde\cD(\tilde\psi_*,\tilde\psi,z_*,z)}}
{\int d\mu_{r_n}(z^*,z)\ 
e^{\tilde\cD(0,0,z_*,z)}}
=e^{\tilde\cE_\fl(\tilde\psi_*,\tilde\psi)}
\end{equation*}
and
$
\lun\tilde\cE_\fl\run \le  \fe_\fl(n)
$.
\end{proof}

In order to renormalize the chemical potential we will need 
more detailed information about the monomial in 
$\tilde\cE_\fl(\tilde\psi_*,\tilde\psi)$ that is of type $\psi_*\psi$.
To extract that information, we need more detailed information
about the part of 
\begin{align*}
\tilde\cD(\tilde\psi_*,\tilde\psi,z_*,z)
&=-\de A_n(\psi_*,\psi,z_*,z)  
  +\de\tilde\cE_n\big(\tilde\psi_*,\tilde\psi,z_*,z\big)    
  +\sum_{\vp\in\fD}
\de \tilde\cR_n^{(\vp)}\big(\tilde\psi_*,\tilde\psi, z_*,z\big)
\end{align*}
that is of
degree at most one in each of $\psi_*$ and $\psi$ and is of degree zero
in $\psi_{(*)\nu}$. So we define, on the space of field maps
$\tilde\cG(\tilde\psi_*,\tilde\psi,z_*,z)$ with $\lun\tilde\cG\run<\infty$,
the projections
\begin{itemize}[leftmargin=*, topsep=2pt, itemsep=2pt, parsep=0pt]
\item
$P^\psi_2$ 
which extracts the part which is of degree exactly one in each of $\psi_*$ 
and $\psi$, of degree zero in the $\psi_{(*)\nu}$'s and of arbitrary degree
in $z_{(*)}$ and
\item
$P^\psi_1$ 
which extracts the part which is of degree exactly one in $\psi_{(*)}$, 
of degree zero in the $\psi_{(*)\nu}$'s and of arbitrary degree
in $z_{(*)}$  and
\item
$P^\psi_0$  which extracts the part which is of degree zero in $\psi_{(*)}$ and the $\psi_{(*)\nu}$'s and of arbitrary degree
in $z_{(*)}$. 
\end{itemize}

\begin{lemma}\label{lemOSFcDtwo}
There is a constant $\LLa_1$, depending only on $L$, $\Gam_\op$, 
$\GGa_\bg$ and $\rrho_\bg$, such that the following holds. 
\begin{enumerate}[label=(\alph*), leftmargin=*]
\item
If $n=0$,
$
\ \lun P^\psi_0\de A_0 \run
    \le \LLa_1\big(|\mu_0|+\fv_0\big)\bar\ka_\fl^4\ 
$
,
$
\ \lun P^\psi_1\de A_0 \run
    \le \LLa_1 \fv_0\bar\ka\bar\ka_\fl^3 \ 
$
\newline and
$
\ P^\psi_2\,\de A_0 =-\cM_0 \ 
$
where
\begin{align*}
\cM_0(\psi_*,\psi,z_*,z)&=  - \sfrac{2}{L^3} 
\int_{\cX_0}\! dx_1 \cdots dx_4 \int_{\cX_0}\! dx'_1 dx'_4
\int_{\cX_0^{(1)}}\!\! dx'_2 dx'_3\  V_0(x_1,x_2,x_3,x_4)\\
\noalign{\vskip-0.05in}
&\hskip2.1in  D^{(0)*}(x_1,x'_1)\ z_*(\bbbl^{-1}x'_1)\\
&\hskip2.1in  
  (S_1(L^2\mu_0)Q_1^* \fQ_1)(\bbbl^{-1}x_2,x'_2)\ \psi(x'_2)\\
&\hskip2.1in  
  (S_1(L^2\mu_0)^*Q_1^* \fQ_1)(\bbbl^{-1}x_3,x'_3)\ \psi_*(x'_3)\\
&\hskip2.1in  
  D^{(0)}(x_4,x'_4)\ z(\bbbl^{-1}x'_4)
\end{align*}
If $n\ge 1$,
$
\ \lun P^\psi_0\de A_n \run
    \le \LLa_1\big(|\mu_n|+\fv_{n+1}\big)\bar\ka_\fl^4\ 
$
,
$
\ \lun P^\psi_1\de A_n \run
    \le \LLa_1 \fv_{n+1}\bar\ka\bar\ka_\fl^3 \ 
$\newline
and 
$
\ \lun P^\psi_2\,\de A_n +\cM_n\run
   \le \LLa_1\fv_{n+1}^2\bar\ka^2\bar\ka_\fl^4\ 
$
where
\begin{align*}
\cM_n&=  -\sfrac{2}{L^3}\int_{\cX_n} du_1 \cdots du_4 \int_{\cX_0^{(n+1)}} dx_2 dx_3 \int_{\cX_0^{(n)}}
         dx_1 dx_4\  V_n(u_1,u_2,u_3,u_4)\\
\noalign{\vskip-0.05in}
&\hskip2.4in  (D^{(n)}\fQ_n\, Q_n S_n(\mu_n))(x_1,u_1)\ z_*(\bbbl^{-1}x_1)\\
&\hskip2.4in  
  (S_{n+1}(L^2\mu_n)Q_{n+1}^* \fQ_{n+1})(\bbbl^{-1}u_2,x_2)\ \psi(x_2)\\
&\hskip2.4in  
  (S_{n+1}(L^2\mu_n)^*Q_{n+1}^* \fQ_{n+1})(\bbbl^{-1}u_3,x_3)\ \psi_*(x_3)\\
&\hskip2.4in  
  (S_n(\mu_n) Q_n^* \fQ_n\, D^{(n)})(u_4,x_4)\ 
                                               z(\bbbl^{-1}x_4) \\
\end{align*}
fulfills 
$
\ \lun \cM_n\run
   \le \LLa_1\fv_n\bar\ka^2\bar\ka_\fl^2\ 
$.

\item
$\qquad
\lun \big(P^\psi_2+P^\psi_1+P^\psi_0\big)\de\tilde\cE_n\run
  \le L^{31}\ \sdf(\CC_\fl)\ \sfrac{\bar\ka_\fl^2}{\bar\ka'^2}\ 
        \big\| \tilde\cE_n\big\|^{(n)}
$

\item
For each $\vp\in\fD$,
\begin{align*}
\lun
P^\psi_2\de\tilde\cR_n^{(\vp)}
\run
&\le \LLa_1\,\fr_\vp(n,\CC_\cR)
\begin{cases}
 \bar\ka^2\bar\ka_\fl^4 
& if \text{$\vp=(6,0,0)$}\\ \noalign{\vskip0.05in}
 \bar\ka\bar\ka_\fl
& \text{if $\vp=(1,1,0)$}\\ \noalign{\vskip0.05in}
  \bar\ka_\fl^2
& \text{if $\vp=(0,1,1),(0,0,2)$}
\end{cases}
\\\noalign{\vskip0.05in}
\lun
P^\psi_1\de\tilde\cR_n^{(\vp)}
\run
&\le \LLa_1\,\fr_\vp(n,\CC_\cR)
\begin{cases}
 \bar\ka\bar\ka_\fl^5
& \text{if $\vp=(6,0,0)$}\\ \noalign{\vskip0.05in}
 \bar\ka\bar\ka_\fl
& \text{if $\vp=(1,1,0)$}\\ \noalign{\vskip0.05in}
 \bar\ka_\fl^2
& \text{otherwise}
\end{cases}
\\\noalign{\vskip0.05in}
\lun
P^\psi_0\de\tilde\cR_n^{(\vp)}
\run
&\le \LLa_1\,\fr_\vp(n,\CC_\cR)
\begin{cases}
 \bar\ka_\fl^6
& \text{if $\vp=(6,0,0)$}\\ \noalign{\vskip0.05in}
  \bar\ka_\fl^2
& \text{otherwise}
\end{cases}
\end{align*}
\end{enumerate}
\end{lemma}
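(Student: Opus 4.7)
\noindent\emph{Proof plan.} My overall strategy is to derive all three parts by refining the arguments of Lemma~\ref{lemOSFmainlem}, tracking precisely which contributions survive each projection $P^\psi_0,P^\psi_1,P^\psi_2$. The principal tools are the explicit expansions of the background field maps from \cite[Propositions \propCFpsisoln, \propBGEphivepssoln, \propBGEdephisoln]{BGE} and the substitution bound \cite[Proposition \propSUBsubstitution.b]{SUB}.

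For part~(a), I split $\de A_n$ into its quartic piece (from $\bbbs\cV_n$) and its quadratic chemical-potential piece, and expand each in powers of the fluctuation shift $\de\hat\phi_{(*)n+1}$ (respectively $L^{3/2}\bbbs D^{(0)(*)}\bbbs^{-1}z_{(*)}$ when $n=0$) and of the linear-in-$\psi_{(*)}$ part $B^{(\pm)}_{n+1,L^2\mu_n}\psi_{(*)}$ of $\phi_{(*)n+1}$. Each projection $P^\psi_i$ then prescribes how many of the four quartic legs carry $\psi_{(*)}$ versus $z_{(*)}$. The leading contribution to $P^\psi_2\de A_n$ uses the linear-in-$\psi_{(*)}$ part on exactly two legs and the linear-in-$z_{(*)}$ part on the remaining two: the combinatorial factor $\smchoose{4}{2}=6$, the $\half$ in $\cV=\half\int V\phi^*\phi^*\phi\phi$, and the $1/L^3$ from $\bbbs$ combine (up to sign) to give the prefactor $-2/L^3$ of $\cM_n$, whose kernel-norm bound follows by inserting the norms of $V_n$, $S_{n+1}(L^2\mu_n)^{(*)}Q_{n+1}^*\fQ_{n+1}$, and $S_n(\mu_n)Q_n^*\fQ_n D^{(n)}$ from \cite{BGE}. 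The remainder $P^\psi_2\de A_n+\cM_n$ then collects all contributions in which at least one leg uses a quadratic-or-higher part of $\phi_{(*)n+1}$ or $\de\hat\phi_{(*)n+1}$; by \cite[Proposition \propBGEdephisoln]{BGE} every such correction supplies an extra factor of order $\fv_{n+1}$, producing the $\fv_{n+1}^2\bar\ka^2\bar\ka_\fl^4$ bound. The $P^\psi_1$ and $P^\psi_0$ bounds, and the $n=0$ case, follow identically with different leg counts.

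For part~(b), I rerun the substitution argument in the proof of Lemma~\ref{lemOSFmainlem}(b) with the auxiliary weights in~\eqref{eqnOSFlalapsi} tuned to reflect the reduced $\psi$-degree imposed by $P^\psi_0+P^\psi_1+P^\psi_2$. Every monomial in $\tilde\cE_n$ has $\vp\notin\fD_\rel$; keeping at most two $\psi_{(*)}$ legs and zero $\psi_{(*)\nu}$ legs therefore forces at least two legs of type $\Psi_{(*)}$ or $\Psi_{(*)\nu}$ to be replaced by $z_{(*)}$ factors, and \cite[Proposition \propSUBsubstitution.b]{SUB} gains an additional factor of $1/\si$ (that is, $L^{13}\bar\ka_\fl/\bar\ka'$) relative to~\eqref{eqnOSFstildecE}, which is precisely the claimed $L^{31}\sdf(\CC_\fl)\bar\ka_\fl^2/\bar\ka'^2$ bound. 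Part~(c) proceeds in the same spirit, separately for each $\vp\in\fD$: substitute each $\Phi$-leg of a $\vp$-monomial by $\phi_{n+1}+\de\hat\phi_{n+1}$ and each $\Phi_\nu$-leg by $B^{(\pm)}_{n+1,L^2\mu_n,\nu}\psi_{(*)\nu}+\phi^{(\ge 3)}_{(*)n+1,\nu}+\de\hat\phi_{n+1,\nu}$, enumerate the combinations of leg assignments consistent with the prescribed $\psi$-degree under $P^\psi_i$, and read off the field weights of the surviving monomials to produce the table entries; an auxiliary norm tailored to the projection combined with \cite[Proposition \propSUBsubstitution.b]{SUB} delivers the stated bounds.

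The heaviest bookkeeping is in part~(a): $\cM_n$ must be isolated cleanly, and every subleading contribution to $P^\psi_2\de A_n$ must be seen to carry an extra factor of at least $\fv_{n+1}\bar\ka^2$ or $\bar\ka_\fl^2$ beyond what $\cM_n$ itself provides---any contribution slipping through with only a $\fv_{n+1}$-enhancement would contaminate the chemical-potential renormalization downstream. The structural fact that rules this out is \cite[Proposition \propBGEdephisoln.b]{BGE}, which guarantees that every nonlinear correction to $\de\hat\phi_{(*)n+1}$ carries the full factor $\|V_n\|_{2m}\bar\ka^2+|\mu_n|$---morally a full $\fv_n$.
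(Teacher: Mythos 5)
Your route is the paper's own: for part (b) you exploit exactly the mechanism of the text (the degree structure of $\tilde\cE_n$ forces at least two $z_{(*)}$--legs once at most two undifferentiated $\psi_{(*)}$--legs and no $\psi_{(*)\nu}$--legs are kept, and the substitution estimate then gains a second factor $\sfrac{1}{\si}$, giving $\sfrac{1}{\si^2}\|\bbbs\tilde\cE_n\|_\la\le L^{31}\,\sdf(\CC_\fl)\,\sfrac{\bar\ka_\fl^2}{\bar\ka'^2}\|\tilde\cE_n\|^{(n)}$), and part (c) is the same leg-by-leg substitution with the \cite{BGE} bounds that the paper uses. So the question is only whether your part (a) accounting closes.

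There it has a genuine gap, at exactly the point you flag as the crux. First, for $n\ge1$, $\de A_n$ is not the variation of ``quartic plus quadratic chemical potential'' under the fluctuation shift; it is the paired expression \eqref{eqnOSFdeAn}. If you literally expand $-L^2\mu_n\<\phi_*,\phi\>_{n+1}$ under $\phi_{(*)}\mapsto\phi_{(*)}+\de\phi_{(*)}$, you create cross terms of size $|\mu_n|\,\bar\ka\,\bar\ka_\fl$ of degree one in $\psi_{(*)}$ and one in $z_{(*)}$; such terms are absent from $\de A_n$ (that cancellation is built into its definition through the background/critical field equations), and were they present they would violate the claimed bound $\lun P^\psi_1\de A_n\run\le\LLa_1\fv_{n+1}\bar\ka\bar\ka_\fl^3$, since $|\mu_n|$ need not be bounded by $\fv_{n+1}\bar\ka_\fl^2$. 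Second, the decisive estimate $\lun P^\psi_2\de A_n+\cM_n\run\le\LLa_1\fv_{n+1}^2\bar\ka^2\bar\ka_\fl^4$ cannot be extracted from \cite[Proposition \propBGEdephisoln.b]{BGE}: that bound carries the additive $|\mu_n|$, and $|\mu_n|$ is not ``morally a full $\fv_n$'' --- by Remark \ref{remOSFvnmum} it can be of order $L^{2n}(\mu_0-\mu_*)$ or $\fv_0^{5\eps}$, which is not $O(\fv_{n+1}^2)$. What is actually needed, and what the paper invokes, is the finer decomposition \cite[Proposition \propBGEdephisoln.d]{BGE}: (i) the $\mu_n$--dependent correction to $\de\hat\phi^{(+)}_{(*)n+1}$ is the $\psi$--independent, linear--in--$z_{(*)}$ term $L^{3/2}\bbbs[S_n(\mu_n)^{(*)}-S_n^{(*)}]Q_n^*\fQ_n D^{(n)(*)}\bbbs^{-1}z_{(*)}$, so it feeds only $P^\psi_0$ (the only place $|\mu_n|$ is allowed to appear); (ii) the explicit $\cV'$ term, with the substitutions \eqref{eqndemuSub}, reproduces $\cM_n$ as its one--$\de\phi$ piece, while its two-- and three--$\de\phi$ pieces have $\psi$--degree at most one; (iii) the genuine remainder $\de\hat\phi^{(\ho)}_{(*)}$ is of degree at least five in $(\psi_{(*)},z_{(*)})$ and satisfies the projection-wise bounds $\luTN P^\psi_j\de\hat\phi_{(*)}^{(\ho)}\ruTN\le L^{47}\GGa_\bg\,\fv_{n+1}^2\bar\ka^j\bar\ka_\fl^{5-j}$, which is where the $\fv_{n+1}^2$ really comes from. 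Your blanket claim that ``every such correction supplies an extra factor of order $\fv_{n+1}$'' does not substitute for (i)--(iii). A small further point: the coefficient of $\cM_n$ is $\half\cdot2\cdot2=2$, not $\half\smchoose{4}{2}=3$, because only assignments with exactly one starred and one unstarred $\psi$--leg survive $P^\psi_2$.
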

\begin{proof} In this proof, when we say that a contribution has norm of order $xyz$,
we mean that the $\lun \cdot\ \run$ norm of the contribution is bounded by a
constant, depending only on $L$, $\Gam_\op$, $\GGa_\bg$ and $\rrho_\bg$, 
times $xyz$.

\Item (a) We first consider the case  $n=0$. By 
            \cite[(\eqnOSAdeAndef.b),   
                  Remark \remCFpsisolnZero\
                  and Proposition \propBGEphivepssoln.a]{BGE}  
            and \cite[Definition \defSCscaling]{PAR1}, 
\begin{align*}
&\de A_0(\psi_*,\psi,z_*,z)
=  \!\!\int_0^1\hskip-4pt  (1 - t)\,\sfrac{d^2\hfill}{dt^2} 
   (\bbbs\cV_0)\big(\hat\psi_*+t\de\psi_*\,,\,
               \hat\psi+t\de\psi\big)  \, dt
         \bigg|_{\atop{\hat\psi_{(*)}=\hat\psi_{0(*)}(\psi_*,\psi,\mu_0,\cV_0)}
                  {\de\psi_{(*)}=L^{3/2}\bbbs D^{(0)(*)}\bbbs^{-1}z_{(*)}}}\\
&\hskip2in +\mu_0 L^5\big<z_*,\,
                       \bbbs C^{(0)}\bbbs^{-1}z\big>_1
\displaybreak[0]\\
&\hskip0.5in=  \half\cdot 2\cdot 2\int_0^1\hskip-4pt  (1 - t)\,\sfrac{d^2\hfill}{dt^2} 
\int_{\cX_0}\!\! dx_1 \cdots dx_4 \int_{\cX_0}\! dx'_1 dx'_4
\int_{\cX_0^{(1)}}\!\! dx'_2 dx'_3\  V_0(x_1,x_2,x_3,x_4)\\
\noalign{\vskip-0.05in}
&\hskip3.0in  D^{(0)*}(x_1,x'_1)\ tz_*(\bbbl^{-1}x'_1)\\
&\hskip3.0in  
  L^{-3/2}(S_1(L^2\mu_0)Q_1^* \fQ_1)(\bbbl^{-1}x_2,x'_2)\ \psi(x'_2)\\
&\hskip3.0in  
  L^{-3/2}(S_1(L^2\mu_0)^*Q_1^* \fQ_1)(\bbbl^{-1}x_3,x'_3)\,\psi_*(x'_3)\\
&\hskip3.0in  
  D^{(0)}(x_4,x'_4)\ tz(\bbbl^{-1}x'_4)+\ho
\displaybreak[0]\\
&\hskip0.5in=   \sfrac{2}{L^3} 
\int_{\cX_0}\! dx_1 \cdots dx_4 \int_{\cX_0}\! dx'_1 dx'_4
\int_{\cX_0^{(1)}}\!\! dx'_2 dx'_3\  V_0(x_1,x_2,x_3,x_4)\\
\noalign{\vskip-0.05in}
&\hskip3.0in  D^{(0)*}(x_1,x'_1)\ z_*(\bbbl^{-1}x'_1)\\
&\hskip3.0in  
  (S_1(L^2\mu_0)Q_1^* \fQ_1)(\bbbl^{-1}x_2,x'_2)\ \psi(x'_2)\\
&\hskip3.0in  
  (S_1(L^2\mu_0)^*Q_1^* \fQ_1)(\bbbl^{-1}x_3,x'_3)\ \psi_*(x'_3)\\
&\hskip3.0in  
  D^{(0)}(x_4,x'_4)\ z(\bbbl^{-1}x'_4)+\ho
\end{align*}
with the contributions in $\ho$ being either
\begin{itemize}[leftmargin=*, topsep=2pt, itemsep=2pt, parsep=0pt]
\item
independent of $\psi_{(*)}$ with norms of order 
                       $(|\mu_0|+\fv_0)\bar\ka_\fl^4$ or
\item
of order precisely one in
  $\psi_{(*)}$ with norms of order $\fv_0\bar\ka\bar\ka_\fl^3$ or
\item
of order at least two in $\psi$ or of order at least two
in $\psi_*$ or of order at least three in $\psi_{(*)}$.
\end{itemize}

\medskip\noindent
We now consider the case $n\ge 1$. By
  \cite[(\eqnOSAdeAndef.a)]{PAR1},   
\begin{equation}\label{eqnOSFdeAn}
\begin{split}
\de A_n(\psi_*,\psi,z_*,z)
&= -L^{7/2} \int_0^1 \!dt\,  \big< z_*,\bbbs D^{(n)}\fQ_n\, Q_n\bbbs^{-1}\,
\de{\hat\phi}^{(+)}_{n+1}\big(\psi_*,\psi;
                  t\,z_*,t\,z\big) \big>_1
\\
& \hskip0.2in
-L^{7/2}\int_0^1 \!dt\,  \big< \bbbs D^{(n)*}\fQ_n\, Q_n\bbbs^{-1}\,
\de{\hat\phi_{*n+1}}^{(+)}\big(\psi_*,\psi;t\,z_*,t\,z\big)
,\, z \big>_1 
\end{split}
\end{equation}
By \cite[Proposition \propBGEdephisoln.d]{BGE}, using the notation of 
\cite[Definition \defBGAgradV]{PAR1},
\begin{align*}
\de\hat\phi_{(*)n+1}^{(+)}(\psi_*,\psi,z_*,z)
&=L^{3/2}\bbbs [S_n(\mu_n)^{(*)}\!\!-\!S_n^{(*)}]Q_n^*\fQ_n 
                                   D^{(n)(*)}\bbbs^{-1}z_{(*)}
\\&\hskip0.5in
  - L^{\frac{3}{2}}\bbbl_*^{-1}{S_n(\mu_n)^{(*)}}\cV'_{(*)}(\varphi_{(*)}, \varphi_{(\bar*)},\varphi_{(*)}) 
      \Big|^{\atop{\varphi_*=\phi_*+\de\phi_*}
                  {\varphi=\phi +\de\phi}}
     _{\atop{\varphi_*=\phi_*}
            {\varphi=\phi}}
     + \de\hat\phi_{(*)}^{(\ho)}
\end{align*}
with the substitutions
\begin{equation}\label{eqndemuSub}
\begin{split}
\phi_{(*)}
    &=\bbbs^{-1}S_{n+1}(L^2\mu_n)^{(*)}Q_{n+1}^* \fQ_{n+1}\,\psi_{(*)}\\
\de\phi_{(*)}
   &=S_n(\mu_n)^{(*)} Q_n^* \fQ_n\,L^{3/2} D^{(n)(*)}\bbbs^{-1} z_{(*)}
\end{split}
\end{equation}
and with the contributions in $\de\hat\phi_{(*)}^{(\ho)}$ being 
of order at least five in $(\psi_{(*)},z_{(*)})$ and obeying
\begin{align*}
  \lutn P^\psi_j\de\hat\phi_{(*)}^{(\ho)} \rutn   
        &\le L^{47}\GGa_\bg\, \fv_{n+1}^2\bar\ka^j\bar\ka_\fl^{5-j} \qquad
  \text{for $j=0,1,2$}
\end{align*}
Here $(\bar *)$ means the opposite of $(*)$ --- i.e nothing when $(*)=*$
and $*$ when $(*)=$nothing.

  So the two terms on the right hand side of \eqref{eqnOSFdeAn} are 
(minus)
\begin{align}
&L^{7/2} \int_0^1 \!dt\,  \big< z_*,\bbbs D^{(n)}\fQ_n\, Q_n\bbbs^{-1}\,
\de{\hat\phi}^{(+)}_{n+1}\big(\psi_*,\psi;
                  t\,z_*,t\,z\big) \big>_1\nonumber\\
&\hskip0.5in=-2L^{7/2} \int_0^1 \!dt\,  
\big< z_*,\bbbs D^{(n)}\fQ_n\, Q_n
 S_n(\mu_n)\cV'(\phi, \phi_*,t\de\phi) \big>_1+\ho\nonumber\\
&\hskip0.5in=-L^{3/2}   
\big< \bbbs^{-1} z_*, D^{(n)}\fQ_n\, Q_n
 S_n(\mu_n)\cV'(\phi, \phi_*,\de\phi) \big>_0+\ho\nonumber\\
&\hskip0.5in
=-L^{-3}\int_{\cX_n} du_1 \cdots du_4 \int_{\cX_0^{(n+1)}} dx_2 dx_3 \int_{\cX_0^{(n)}}
         dx_1 dx_4\  V_n(u_1,u_2,u_3,u_4)\label{eqnOSFintA}\\
\noalign{\vskip-0.05in}
&\hskip2.5in  (D^{(n)}\fQ_n\, Q_n S_n(\mu_n))(x_1,u_1)\ z_*(\bbbl^{-1}x_1)\nonumber\\
&\hskip2.5in  
  (S_{n+1}(L^2\mu_n)Q_{n+1}^* \fQ_{n+1})(\bbbl^{-1}u_2,x_2)\ \psi(x_2)\nonumber\\
&\hskip2.5in  
  (S_{n+1}(L^2\mu_n)^*Q_{n+1}^* \fQ_{n+1})(\bbbl^{-1}u_3,x_3)\ \psi_*(x_3)\nonumber\\
&\hskip2.5in  
  (S_n(\mu_n) Q_n^* \fQ_n\, D^{(n)})(u_4,x_4)\ 
                                               z(\bbbl^{-1}x_4)+\ho
\nonumber
\end{align}
and, similarly,
\begin{align}
&L^{7/2}\int_0^1 \!dt\,  \big< \bbbs D^{(n)*}\fQ_n\, Q_n\bbbs^{-1}\,
\de{\hat\phi_{*n+1}}^{(+)}\big(\psi_*,\psi;t\,z_*,t\,z\big)
,\, z \big>_1
\nonumber\\
&\hskip0.5in
=-L^{-3}\int_{\cX_n} du_1 \cdots du_4 \int_{\cX_0^{(n+1)}} dx_1 dx_2\int_{\cX_0^{(n)}}
         dx_3 dx_4\ V_n(u_1,u_2,u_3,u_4)\label{eqnOSFintB}\\
\noalign{\vskip-0.05in}
&\hskip2.5in  (D^{(n)*}\fQ_n\, Q_n S_n(\mu_n)^*)(x_4,u_4)\,  
                                                z(\bbbl^{-1}x_4)
\nonumber\\
&\hskip2.5in  
  (S_{n+1}(L^2\mu_n)^*Q_{n+1}^* \fQ_{n+1})(\bbbl^{-1}u_1,x_1)\,\psi_*(x_1)
\nonumber\\
&\hskip2.5in  
  (S_{n+1}(L^2\mu_n)Q_{n+1}^* \fQ_{n+1})(\bbbl^{-1}u_2,x_2)\,\psi(x_2)\\
&\hskip2.5in  
  (S_n(\mu_n)^* Q_n^* \fQ_n\, D^{(n)*})(u_3,x_3)\,
                                               z_*(\bbbl^{-1}x_3)+\ho
\nonumber
\end{align}
with the contributions in $\ho$ being either
\begin{itemize}[leftmargin=*, topsep=2pt, itemsep=2pt, parsep=0pt]
\item
independent of $\psi_{(*)}$ with norm of order 
                       $\big(|\mu_n|+\fv_{n+1}\big)\bar\ka_\fl^4$ or
\item
of order precisely one in
  $\psi_{(*)}$ with norm of order $\fv_{n+1}\bar\ka\bar\ka_\fl^3$ or
\item
of order precisely two in
  $\psi_{(*)}$ with norm of order $\fv_{n+1}^2\bar\ka^2\bar\ka_\fl^4$ or
\item
of order at least two in $\psi$ or of order at least two
in $\psi_*$ or of order at least three in $\psi_{(*)}$.
\end{itemize}
Finally, observe that the two integrals of the right hand sides
of \eqref{eqnOSFintA} and \eqref{eqnOSFintB}
are equal --- just make the change of variables $x_1\leftrightarrow x_3$,
$u_1\leftrightarrow u_3$ in one of them.

\Item (b)
The bound on $\de\tilde\cE_n$ given in Lemma \ref{lemOSFmainlem}.b is
\begin{align*}
\lun \de\tilde\cE_n\run
  &\le L^{18}\ \sfrac{\bar\ka_\fl}{\bar\ka'}
\ \sdf(\CC_\fl)\ \big\| \tilde\cE_n\big\|^{(n)}
  \le L^{13}\ L^{-(\eta'-\eps/2)(n+1)}\fv_0^{1/3-\eps/2}
\end{align*}
which implies that the kernel of the monomial of type $\psi_*\psi$ in 
$\de\tilde\cE_n$ has $L^1$--$L^\infty$ norm bounded by
\begin{equation*}
L^{13}\ L^{-(\eta'-\eps/2)(n+1)}\fv_0^{1/3-\eps/2}\ \sfrac{1}{\bar\ka^2}
=L^{13}\ L^{-(2\eta+\eta'-\eps/2)(n+1)}\fv_0^{1-5\eps/2}
\end{equation*}
This is not adequate for our present purposes. We want a power of $\fv_0$
that is strictly greater than one  --- the contributions from $\de\tilde\cE_n$
should be of higher order than the dominant contributions coming from $\de
A_n$, which are of order exactly one in the coupling constant. We can get 
that by exploiting the fact that we only care about the part of $\de\tilde\cE_n(\tilde\psi_*,\tilde\psi,z_*,z)$
that is of degree at most $2$ in $\psi_{(*)}$ and of degree zero in the
$\psi_{(*),\nu}$'s.

Recall that
\begin{align*}
\de\tilde\cE_n(\tilde\psi_*,\tilde\psi,z_*,z)
&=(\bbbs \tilde\cE_n)\big(\tilde\Psi_*,\tilde\Psi\big)\Big|
       _{\atop{\Psi_{(*)}=\hat\psi_{(*)n}(\psi_*,\psi,\mu_n,\cV_n)
                        +L^{3/2}\bbbs D^{(n)(*)}\bbbs^{-1}z_{(*)}}
               {\Psi_{(*)\nu}= 
            \hat \psi_{(*)n,\nu}(\psi_*,\psi,\psi_{*\nu},\psi_\nu,\mu_n,\cV_n)
              +L^{3/2}\bbbs_\nu\partial_\nu D^{(n)(*)}\bbbs^{-1}z_{(*)}}}\\
&\hskip0.5in  -(\bbbs\tilde\cE_n)\big(\tilde\Psi_*,\tilde\Psi\big)\Big|
       _{\atop{\Psi_{(*)}=\hat\psi_{(*)n}(\psi_*,\psi,\mu_n,\cV_n)}
              {\Psi_{(*)\nu}= 
            \hat \psi_{(*)n,\nu}(\psi_*,\psi,\psi_{*\nu},\psi_\nu,\mu_n,\cV_n)}}
\end{align*}
with the $\tilde\cE_n$ of \cite[Theorem \thmTHmaintheorem]{PAR1}
and the $\hat \psi_{(*)n,\nu}$ of \cite[Proposition \propCFpsisoln]{BGE}.
Let us denote by $\de\tilde \cE_n^{(\le 2)}
          (\tilde\Psi_*,\tilde\Psi,\de\tilde\Psi_*,\de\tilde\Psi_*)$ 
the part of 
\begin{equation*}
(\bbbs \tilde\cE_n)                      
          \big(\tilde\Psi_*+\de\tilde\Psi_*,\tilde\Psi+\de\tilde\Psi\big)
 -(\bbbs\tilde\cE_n)\big(\tilde\Psi_*,\tilde\Psi\big)
\end{equation*}
that is of degree at most two in $\Psi_{(*)}$ and of degree
zero in $\Psi_{(*)\nu}$. Set 
$
\de\tilde\cE_n^{(\le 2)}(\tilde\psi_*,\tilde\psi,z_*,z)
$
to be $\de\tilde E_n^{(\le 2)}      
           (\tilde\Psi_*,\tilde\Psi,\de\tilde\Psi_*,\de\tilde\Psi_*)$
evaluated at
\begin{align*}
\Psi_{(*)}&=\hat\psi_{(*)n}(\psi_*,\psi,\mu_n,\cV_n)\\
\Psi_{(*)\nu}&=\hat \psi_{(*)n,\nu}(\psi_*,\psi,\psi_{*\nu},\psi_\nu,\mu_n,\cV_n)\\
\de\Psi_{(*)}&=L^{3/2}\bbbs D^{(n)(*)}\bbbs^{-1}z_{(*)}\\
\de\Psi_{(*)\nu}&=L^{3/2}\bbbs_\nu\partial_\nu D^{(n)(*)}\bbbs^{-1}z_{(*)}\\
\end{align*}
Since $\hat\psi_{(*)n}$ is of degree at least one in $\psi_{(*)}$ and
$\hat \psi_{(*)n,\nu}$ is of degree at least one in $\psi_{(*)\nu}$, every
monomial in $\de\tilde\cE_n-\de\tilde\cE_n^{(\le 2)}$ is either of degree
at least one in the $\psi_{(*)\nu}$'s or of degree at least three in 
$\psi_{(*)}$. So
\begin{equation*}
\big(P^\psi_2+P^\psi_1+P^\psi_0\big)\de\tilde\cE_n
=\big(P^\psi_2+P^\psi_1+P^\psi_0\big)
               \de\tilde\cE_n^{(\le 2)}
\end{equation*}
 Note further that, since every monomial in $\tilde\cE_n$
that is of degree at least one in $\psi_{(*)}$ is actually of degree at
least $4$ in $(\psi_{(*)},\psi_{(*)\nu})$, and every monomial in
$\tilde\cE_n$ is of degree at least $2$, we have that every monomial
in $\de\tilde\cE_n^{(\le 2)}$ is of degree at least $2$ in $z_{(*)}$.

We define $\la$, $\la'$ and $\si$ by \eqref{eqnOSFlalapsi}.
As in the proof of Lemma \ref{lemOSFmainlem}.b, denote by $\|\ \cdot\ \|_\la$ 
the (auxiliary) norm with mass $2m$ that assigns the weight factors
$\la$ to the fields $\Psi_{(*)}$ and $\la'$ to the fields $\Psi_{\nu(*)}$.
Then, by \cite[Lemma \lemSUBdiff.a and Proposition \propSUBsubstitution.a]{SUB},
\eqref{eqnOSFsigeone}, \eqref{eqnOSFlabnds} and \eqref{eqnOSFstildecE},
\begin{align*}
\lun \de\tilde\cE_n^{(\le 2)}\run
   &\le \sfrac{1}{\si^2}\|\bbbs\tilde\cE_n\|_\la 
     \le \sfrac{1}{\si^2} L^5\,\sdf(\CC_\fl)\,\big\| \tilde\cE_n\big\|^{(n)}
  \le L^{31}\ \sdf(\CC_\fl)\ \sfrac{\bar\ka_\fl^2}{\bar\ka'^2}\ 
          \big\| \tilde\cE_n\big\|^{(n)}
\end{align*}

\Item (c) 
Recall from Lemma \ref{lemOSFmainlem}.c that 
\begin{align*}
\de \tilde\cR_n^{(\vp)}(\tilde\psi_*,\tilde\psi,z_*,z)
&\!=\!(\bbbs \tilde\cR_n^{(\vp)})\big(\tilde\Phi_*,\tilde\Phi\big)\Big|
       _{\atop{\Phi_{(*)}=\phi_{(*)n+1}(\psi_*,\psi,L^2\mu_n,\bbbs\cV_n)
                     +\de\hat\phi_{(*)n+1}(\psi_*,\psi,z_*,z)}
               {\atop{\Phi_{(*)\nu}= \phi_{n+1(*),\nu}(\psi_*,\psi,\psi_{*\nu},\psi_\nu,L^2\mu_n,\bbbs\cV_n)
                             \hskip0.5in} 
                     {\hskip1in
         +\de\hat\phi_{(*)n+1,\nu}(\psi_*,\psi,\psi_{*\nu},\psi_\nu,z_*,z)}}}\\
&\hskip0.5in  
  -(\bbbs\tilde\cR_n^{(\vp)})\big(\tilde\Phi_*,\tilde\Phi\big)\Big|
       _{\atop{\Phi_{(*)}=\phi_{(*)n+1}(\psi_*,\psi,L^2\mu_n,\bbbs\cV_n)}
              {\Phi_{(*)\nu}= 
     \phi_{n+1(*),\nu}(\psi_*,\psi,\psi_{*\nu},\psi_\nu,L^2\mu_n,\bbbs\cV_n)}}
\end{align*}
The claims follow from the observations that
\begin{itemize}[leftmargin=*, topsep=2pt, itemsep=2pt, parsep=0pt]
\item
$\phi_{(*)n+1}$ is of degree at least one in $\psi_{(*)}$
and $\phi_{(*)n+1,\nu}$ is of degree one in $\psi_{(*),\nu}$,
and, 
\item
by \cite[Propositions \propBGEphivepssoln.a and \propBGEdephisoln.a,e]{BGE},
\begin{align*}
\lutn \phi_{(*)n+1}(\psi_*,\psi,L^2\mu_n,\bbbs\cV_n) \rutn
&\le \big\{\big\|S_{n+1}^{(*)}Q_{n+1}^* \fQ_{n+1}\big\|_{2m} 
          +\GGa_\bg \rrho_\bg\big\}\ \bar\ka \\
\lutn \de\hat\phi_{(*)n+1}\rutn 
&\le L^{11}\GGa_\bg\, \bar\ka_\fl \\
\lutn \de\hat\phi_{(*)n+1,\nu}\rutn 
&\le L_\nu L^{11}\GGa_\bg\, \bar\ka_\fl
\end{align*}
and, 
\item
by \cite[Lemma \lemSAscaletoscale.a]{PAR1},
\begin{equation*}
\|\bbbs\tilde\cR_n^{(\vp)}\|_{2m}
       \le  L^{5-\De(\vp)} \|\tilde\cR_n^{(\vp)}\|_m
       \le  L^{5-\De(\vp)} \fr_\vp(n,\CC_\cR)       
\end{equation*}
\end{itemize}
\end{proof}

\begin{proposition}\label{propOSFcEtwo}
There is a constant $\LLa_2$, depending only on $L$, $\Gam_\op$, 
$\GGa_\bg$ and $\rrho_\bg$ such that the following holds.
\begin{enumerate}[label=(\alph*), leftmargin=*]
\item 
Denote by $P_{\psi_*\psi}$ the projection, on the space of 
analytic functions $\tilde\cG(\tilde\psi_*,\tilde\psi)$, 
which extracts the part which is of 
degree exactly one in each of $\psi_*$ and $\psi$, of degree zero 
in the $\psi_{(*)\nu}$'s. Also set, for $n\ge 0$,
\begin{equation*}
M'_n(\psi_*,\psi) 
   = \frac{\int d\mu_{r_n}(z^*,z)\ \cM_n(\psi_*,\psi,z^*,z)}
          {\int d\mu_{r_n}(z^*,z)}
\end{equation*}
Then 
\begin{align*}
\| P_{\psi_*\psi}\tilde\cE_\fl-M'_n(\psi_*,\psi)\|_{2m}
\le \LLa_2\,\fv_0\,\big(\fv_0^{\sfrac{1}{3}-5\eps}+ |\mu_n|\big)\bar\ka_\fl^6
\end{align*}

\item 
Set 
\begin{align*}
M_0(\psi_*,\psi)
&=-\sfrac{2}{L^3} 
\int_{\cX_0}\! dx_1 \cdots dx_4 \int_{\cX_0^{(1)}}\! dx'_2 dx'_3\  V_0(x_1,x_2,x_3,x_4)\\
\noalign{\vskip-0.05in}
&\hskip1.9in  
  (S_1(L^2\mu_0)Q_1^* \fQ_1)(\bbbl^{-1}x_2,x'_2)\ \psi(x'_2)\\
&\hskip1.9in  
  (S_1(L^2\mu_0)^*Q_1^* \fQ_1)(\bbbl^{-1}x_3,x'_3)\ \psi_*(x'_3)\\
&\hskip1.9in  
  C^{(0)}(x_4,x_1)
\end{align*}
and, for $n\ge 1$,
\begin{align*}
M_n(\psi_*,\psi)
&=-\sfrac{2}{L^3}\int_{\cX_n} du_1 \cdots du_4 \int_{\cX_0^{(n+1)}} dx_2 dx_3\  
           V_n(u_1,u_2,u_3,u_4)\\
\noalign{\vskip-0.1in}
&\hskip2in  (S_n(\mu_n) Q_n^* \fQ_n\, C^{(n)}\fQ_n\, Q_n S_n(\mu_n))(u_4,u_1)\\
&\hskip2in  
  (S_{n+1}(L^2\mu_n)Q_{n+1}^* \fQ_{n+1})(\bbbl^{-1}u_2,x_2)\ \psi(x_2)\\
&\hskip2in  
  (S_{n+1}(L^2\mu_n)^*Q_{n+1}^* \fQ_{n+1})(\bbbl^{-1}u_3,x_3)\ \psi_*(x_3)
\end{align*}
Then 
\begin{align*}
\| M'_n(\psi_*,\psi)-M_n(\psi_*,\psi)\|_{2m}
\le \LLa_2\,\fv_n\, r_n^2e^{-r_n^2}
\end{align*}
\end{enumerate}
\end{proposition}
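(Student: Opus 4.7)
The plan is to treat parts (a) and (b) separately. For (a), I will use the cumulant-generating-function representation of $\tilde\cE_\fl$ from the proof of Proposition \ref{propOSFmainprop} and apply $P_{\psi_*\psi}$ termwise, controlling each piece via Lemma \ref{lemOSFcDtwo}. For (b), I will perform a direct single-site Gaussian calculation that compares $d\mu_{r_n}$ to the uncut Gaussian.

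For part (a), I start from
\begin{equation*}
e^{\tilde\cE_\fl}
=\frac{\int d\mu_{r_n}(z^*,z)\,e^{\tilde\cD(\tilde\psi_*,\tilde\psi,z_*,z)}}
      {\int d\mu_{r_n}(z^*,z)\,e^{\tilde\cD(0,0,z_*,z)}}
\end{equation*}
and use $\lun\tilde\cD\run\le\half\fe_\fl(n)\le\sfrac{1}{32}$ from \eqref{eqnOSFineqs}. Taking the logarithm gives a convergent cumulant expansion in $\tilde\cD$. Applying $P_{\psi_*\psi}$ picks out, at first order, exactly $\langle P^\psi_2\tilde\cD\rangle$ under the normalised cut-off measure, and at second order only those quadratic combinations whose $\psi$-degrees sum to $(1,1)$, namely $P^\psi_2\cdot P^\psi_0$ and $P^\psi_1\cdot P^\psi_1$; higher cumulants involve further $P^\psi_0$ insertions.

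By Lemma \ref{lemOSFcDtwo}.a the dominant first-order piece is $\langle\cM_n\rangle = M'_n$, while the error $P^\psi_2\de A_n+\cM_n$ has $\lun\cdot\run$-norm at most $\LLa_1\fv_{n+1}^2\bar\ka^2\bar\ka_\fl^4$, which after converting $\fv_{n+1}$ to powers of $\fv_0$ via Remark \ref{remOSFvnmum} is of the desired size. The $P^\psi_2$ contributions coming from $\de\tilde\cE_n$ and $\sum_\vp\de\tilde\cR_n^{(\vp)}$ are controlled by Lemma \ref{lemOSFcDtwo}.b,c and the weight hierarchies of \cite[Definition \defHTbasicnorm]{PAR1}. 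Higher-order cumulants are products of at least two norms from Lemma \ref{lemOSFcDtwo}, at least one of which carries positive $\psi$-degree; summing the geometric tail in $\lun\tilde\cD\run$ keeps the total of the same order $\LLa_2\fv_0(\fv_0^{1/3-5\eps}+|\mu_n|)\bar\ka_\fl^6$.

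For part (b), in $\cM_n$ the fields $z_*$ and $z$ each appear linearly, evaluated at single lattice points of $\cX_1^{(n)}$. A one-site computation gives
\begin{equation*}
\frac{\int d\mu_{r_n}(z^*,z)\,z_*(w_1)z(w_2)}{\int d\mu_{r_n}(z^*,z)}
=\delta_{w_1,w_2}\,\Bigl[1-\tfrac{r_n^2 e^{-r_n^2}}{1-e^{-r_n^2}}\Bigr]\,.
\end{equation*}
Substituting this in $\cM_n$ contracts $D^{(n)}(\,\cdot\,,w)$ with $D^{(n)*}(\,\cdot\,,w)$ along $w\in\cX_1^{(n)}$, producing the kernel $C^{(n)}$ appearing in $M_n$. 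The leading piece is $M_n$ itself, and the remainder equals $\frac{r_n^2 e^{-r_n^2}}{1-e^{-r_n^2}}\,M_n$; combined with the bound $\lun\cM_n\run\le\LLa_1\fv_n\bar\ka^2\bar\ka_\fl^2$ of Lemma \ref{lemOSFcDtwo}.a and $r_n$ bounded below, this yields the stated $\LLa_2\fv_n\, r_n^2 e^{-r_n^2}$ estimate.

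The main obstacle is the combinatorial bookkeeping in part (a): one must verify that \emph{every} surviving cumulant contribution other than $\langle\cM_n\rangle$ is of size at most $\LLa_2\fv_0(\fv_0^{1/3-5\eps}+|\mu_n|)\bar\ka_\fl^6$. The most delicate cases pair higher-$z$-degree pieces of $\de A_n$ against the coarse $P^\psi_0\de\tilde\cR_n^{(\vp)}$ bounds of Lemma \ref{lemOSFcDtwo}.c; these require the sharper estimates on $\fr_\vp(n,\CC_\cR)$ collected in \cite[Lemma \lemPARestrpnC]{PAR1}.
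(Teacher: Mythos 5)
Your part (b) is correct and is essentially the paper's argument: the exact one–site moment of the cut-off Gaussian (your formula for $\langle z_*(w_1)z(w_2)\rangle$ agrees with the paper's bound $2r_n^2e^{-r_n^2}$ for $r_n\ge 1$), together with $\int dx\,D^{(n)}(x_1,x)D^{(n)}(x,x_2)=C^{(n)}(x_1,x_2)$ (and the transpose identity at $n=0$), gives $M'_n=(1-\ka_n)M_n$ with $\ka_n=\sfrac{r_n^2e^{-r_n^2}}{1-e^{-r_n^2}}$, and the stated bound follows from Lemma \ref{lemOSFcDtwo}.a.

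For part (a) your overall route coincides with the paper's: restrict $\tilde\cD$ to its part of $\psi$-degree at most two and derivative-degree zero, identify the first moment with $M'_n$ up to the errors of Lemma \ref{lemOSFcDtwo}, and control everything beyond first order using $\lun\tilde\cD\run\le\sfrac{1}{32}$. But the decisive step is missing. First, you never name the estimate that converts smallness of the integrand into a quadratic bound on the logarithm minus its first moment; the paper uses \cite[Corollary 3.5]{CPC}. Second, and more seriously, the claim that ``summing the geometric tail in $\lun\tilde\cD\run$ keeps the total of the same order $\LLa_2\fv_0(\fv_0^{1/3-5\eps}+|\mu_n|)\bar\ka_\fl^6$'' is precisely the delicate point and is left unsubstantiated. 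A generic quadratic bound in the standard weights does not by itself produce the factor $\fv_0$ multiplying $|\mu_n|$: the square of the $\psi$-independent part $P^\psi_0$ of the integrand, whose norm is of order $(|\mu_n|+\fv_0^{2/3-\eps})\bar\ka_\fl^4$, carries no power of $\fv_0$ at all, and deciding whether such terms fit under the stated bound after extracting the $\psi_*\psi$ kernel (division by $\bar\ka^2$) requires an explicit comparison of $\fe_\fl(n)^2/\bar\ka^2$ with $\fv_0^{4/3-8\eps}L^{3\eps(n+1)}$ using the definitions of $\eta$, $\eta_\fl$ and $\ka_\fl$ — a computation you do not carry out, and your heuristic ``each cumulant is a product of projection norms, one of positive $\psi$-degree'' is not what a generic cumulant or \cite{CPC}-type bound delivers. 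The paper sidesteps all of this by applying \cite[Corollary 3.5]{CPC} in an auxiliary norm that assigns the $\psi$-fields the weight $w=\bar\ka_\fl\big((|\mu_n|+\fv_0^{2/3-\eps})/\fv_0\big)^{1/2}$; with this choice the $P^\psi_2$, $P^\psi_1$ and $P^\psi_0$ parts of the integrand all have $w$-norm of order $(|\mu_n|+\fv_0^{2/3-\eps})\bar\ka_\fl^4$, and the quadratic remainder, divided by $w^2$ to return to the $\|\cdot\|_{2m}$ norm of the $\psi_*\psi$ monomial, is exactly $\fv_0(|\mu_n|+\fv_0^{2/3-\eps})\bar\ka_\fl^6$. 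Some device of this kind (or the explicit numerical comparison just described) must be supplied; as written, your part (a) asserts the key estimate rather than proving it.
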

\begin{proof} (a) Set
\begin{align*}
\tilde\cD_2(\tilde\psi_*,\tilde\psi,z_*,z)
&=\big(P^\psi_2+P^\psi_1+P^\psi_0\big)\ 
   \Big\{-\de A_n 
  +\de\tilde\cE_n   
  +\smsum_{\vp\in\fD} \de \tilde\cR_n^{(\vp)}\Big\}
\end{align*}
We are to bound
\begin{align*}
P_{\psi_*\psi}\tilde\cE_\fl-M'_n(\psi_*,\psi)
&=P_{\psi_*\psi}
        \ln\bigg[\frac{\int d\mu_{r_n}(z^*,z)\ 
                           e^{\tilde\cD(\tilde\psi_*,\tilde\psi,z_*,z)}}
                      {\int d\mu_{r_n}(z^*,z)\ 
                           e^{\tilde\cD(0,0,z_*,z)}}\bigg]
   -M'_n(\psi_*,\psi)\\
&=P_{\psi_*\psi}
\ln\bigg[\frac{\int d\mu_{r_n}(z^*,z)\ 
                e^{\tilde\cD_2(\tilde\psi_*,\tilde\psi,z_*,z)}}
{\int d\mu_{r_n}(z^*,z)\ 
              e^{\tilde\cD_2(0,0,z_*,z)}}\bigg]-M'_n(\psi_*,\psi)
\end{align*}
We use \cite[Corollary 3.5]{CPC} with $n=1$,
$d\mu$ being the normalized measure 
\begin{equation*}
d\mu_{r_n}/\int d\mu_{r_n}(z^*,z)
\end{equation*}
with $f=\tilde\cD_2$, and with the norm $\|\ \cdot\ \|_w$ of mass $2m$ which 
assigns a weight $w$ (that we shall choose shortly) to $\psi_*,\psi$ and 
the weight $\bar\ka_\fl$ to the fields $z_*$, $z$. We have, by
Lemma \ref{lemOSFcDtwo} and \cite[Theorem \thmTHmaintheorem,
Definition \defHTbasicnorm\ and Lemma \lemPARestrpnC.b]{PAR1},
\begin{equation}\label{eqnOSFpreppZcDtwo}
\begin{split}
\big\|P^\psi_2\tilde\cD_2-\cM_n\big\|_w \sfrac{1}{w^2}
&\le \LLa_1\fv_{n+1}^2 \bar\ka_\fl^4
    + L^{31}\ \sfrac{\bar\ka_\fl^2}{\bar\ka^2\bar\ka'^2}\ 
        \ \sdf(\CC_\fl)\ \big\| \tilde\cE_n\big\|^{(n)}\\
 &\hskip0.4in +\LLa_1\sum_{\vp\in\fD}\fr_\vp(n,\CC_\cR)
       \begin{cases}
             \bar\ka_\fl/\bar\ka
                & \text{if $\vp=(1,1,0)$}\\ \noalign{\vskip0.05in}
            \bar\ka_\fl^2/\bar\ka^2
                & \text{if $\vp=(0,1,1),\ (0,0,2)$}\\ \noalign{\vskip0.05in}
              \bar\ka_\fl^4  
                & \text{if $\vp=(6,0,0)$}
\end{cases}\\
&\le \LLa_1\fv_n^2 \bar\ka_\fl^4   
   +L^{-2(\eta+\eta')(n+1)}\fv_0^{4/3-5\eps} 
                   \fv_0^\eps\bar\ka_\fl^2
    + \fv_0^{\sfrac{\eps}{2}}\bar\ka_\fl^2
    \min\big\{\fv_0^{\sfrac{4}{3}-7\eps},\sfrac{\fv_0}{L^n}\big\}   \\
&\le \bar\ka_\fl^2
   \min\big\{\fv_0^{\sfrac{4}{3}-7\eps},\fv_n\big\}   
\end{split}
\end{equation}
and
\begin{align*}
\big\|P^\psi_2\tilde\cD_2\big\|_w \sfrac{1}{w^2}
&\le \big\|P^\psi_2\tilde\cD_2-\cM_n\big\|_w \sfrac{1}{w^2}
  +\lun\cM_n\run \sfrac{1}{\bar\ka^2}
\le \LLa'_2\fv_n\bar\ka_\fl^2
\end{align*}
Also 
\begin{align*}
\big\|P^\psi_1\tilde\cD_2\big\|_w \sfrac{1}{w}
 & \le \LLa_1\fv_n \bar\ka_\fl^3
     + L^{31}\ \sfrac{\bar\ka_\fl^2}{\bar\ka\bar\ka'^2}\ 
        \ \sdf(\CC_\fl)\ \big\| \tilde\cE_n\big\|^{(n)}\\
 &\hskip0.5in +\LLa_1\sum_{\vp\in\fD}\fr_\vp(n,\CC_\cR)
       \begin{cases}  \bar\ka_\fl
                  & \text{if $\vp=(1,1,0)$}\\ \noalign{\vskip0.05in}
               \bar\ka_\fl^2/\bar\ka
                 & \text{if $\vp=(0,1,1),\ (0,0,2)$}\\ \noalign{\vskip0.05in}
            \bar\ka_\fl^5 
                  & \text{if $\vp=(6,0,0)$}
              \end{cases}\\
&\le \LLa_1\sfrac{\fv_0}{L^n} \bar\ka_\fl^3
   +L^{31}\sdf(\CC_\fl)\ L^{-(\eta+2\eta')n}\fv_0^{1-2\eps}\ \bar\ka_\fl^2 
   +\fv_0^{1-\sfrac{9}{2}\eps}\bar\ka_\fl^3\\
&\le \fv_0^{1-5\eps}\bar\ka_\fl^3
\end{align*}
if $\fv_0$ is small enough,
by \cite[Definition \defHTbasicnorm, Lemma \lemPARcompradan.d]{PAR1}, in 
the case (1,1,0) and \cite[Lemma \lemPARestrpnC.b]{PAR1} in the other cases.
Furthermore
\begin{align*}
\big\|P^\psi_0\tilde\cD_2\big\|_w 
&\le \LLa_1\big(|\mu_n|+\fv_n\big)\bar\ka_\fl^4
    + L^{31}\ \sfrac{\bar\ka_\fl^2}{\bar\ka'^2}\ 
         \sdf(\CC_\fl)\ \big\| \tilde\cE_n\big\|^{(n)}\\
 &\hskip0.5in +\LLa_1\sum_{\vp\in\fD}\fr_\vp(n,\CC_\cR)
       \begin{cases} \bar\ka_\fl^6 
                  & \text{if $\vp=(6,0,0)$}\\ \noalign{\vskip0.05in}
               \bar\ka_\fl^2
               & \text{otherwise}
\end{cases}\\
&\le \LLa_1\big(|\mu_n|+\fv_n\big)\bar\ka_\fl^4
   + L^{31}\sdf(\CC_\fl)\ L^{-2\eta'n}\fv_0^{2/3-\eps}\ \bar\ka_\fl^2
   +\fv_0^{\frac{2}{3}}\bar\ka_\fl^4 \\
&\le \LLa'_2 
    \big(|\mu_n|+\fv_0^{\frac{2}{3}-\eps}\big)\bar\ka_\fl^4
\end{align*}
by \cite[Definition \defHTbasicnorm\ and Lemma \lemPARcompradan.d]{PAR1}.
We now set
\begin{equation*}
w=\bar\ka_\fl\sqrt{\frac{|\mu_n|+\fv_0^{\frac{2}{3}-\eps}}{\fv_0}}
\end{equation*}
so that
\begin{align*}
\big\|\tilde\cD_2\big\|_w 
&\le 
\LLa'_2\bigg\{
   L^{-n} \big(|\mu_n|+\fv_0^{\frac{2}{3}-\eps}\big)
 + \fv_0^{1-5\eps}
      \sqrt{\frac{|\mu_n|+\fv_0^{\frac{2}{3}-\eps}}{\fv_0}}
  + \big(|\mu_n|+\fv_0^{\frac{2}{3}-\eps}\big)\bigg\}
  \bar\ka_\fl^4 \\
&\le 3 \LLa'_2 \big(|\mu_n|+\fv_0^{\frac{2}{3}-\eps}\big)
  \bar\ka_\fl^4 
\end{align*}
and, by \cite[Corollary 3.5]{CPC} with $d\mu(z^*,z)$ being the normalized
$d\mu_{r_n}(z^*,z)$,
\begin{align*}
\Big\|P_{\psi_*\psi}\Big(\tilde\cE_\fl
   -{\tst\int} d\mu\ \tilde\cD_2(\psi_*,\psi,z^*,z)\Big)\Big\|_w
&\le 40^2 \big\|\tilde\cD_2\big\|_w^2
\le \big(120 \LLa'_2\big)^2 \big(|\mu_n|+\fv_0^{\frac{2}{3}-\eps}\big)^2
  \bar\ka_\fl^8
\end{align*}
Hence
\begin{align*}
\Big\|P_{\psi_*\psi}\Big(\tilde\cE_\fl
   -{\tst\int} d\mu\ \tilde\cD_2(\psi_*,\psi,z^*,z)\Big)\Big\|_{2m}
&\le \sfrac{1}{w^2}\Big\|P_{\psi_*\psi}\Big(\tilde\cE_\fl
   -{\tst\int} d\mu\ \tilde\cD_2(\psi_*,\psi,z^*,z)\Big)\Big\|_w \\
&\le \big(120 \LLa'_2\big)^2 \fv_0
    \big(|\mu_n|+\fv_0^{\frac{2}{3}-\eps}\big)
    \bar\ka_\fl^6
\end{align*}
It now suffices to observe that, by \eqref{eqnOSFpreppZcDtwo},
\begin{align*}
&\Big\|P_{\psi_*\psi}\Big(M'_n(\psi_*,\psi)
   -{\tst\int} d\mu\ \tilde\cD_2(\psi_*,\psi,z^*,z)\Big)\Big\|_{2m} \\
&\hskip1in
 \le \sfrac{1}{w^2}\big\|P^\psi_2\big(\cM_n(\psi_*,\psi,z_*,z)
   -\tilde\cD_2(\psi_*,\psi,z_*,z)\big)\big\|_w\\
&\hskip1in
 \le \fv_0^{\sfrac{4}{3}-7\eps}\bar\ka_\fl^2
\end{align*}

\Item (b) This follows from the observations that, for $r_n\ge 1$,
\begin{align*}
\bigg|\frac{\int d\mu_{r_n}(z^*,z)\ z^*(\bbbl^{-1}x_1)z(\bbbl^{-1}x_2)}
          {\int d\mu_{r_n}(z^*,z)}\ -\ \de_{x_1,x_2}\bigg|
\le  2r_n^2e^{-r_n^2}
\end{align*}
and, recalling that ${D^{(0)}}^*$ is the transpose of $D^{(0)}$,
\begin{align*}
\int_{\cX_0} dx'\ {D^{(0)}}^*(x_1,x)\,D^{(0)}(x_4,x) &= C^{(0)}(x_4,x_1) \\
\int_{\cX_0^{(n)}} dx'\ D^{(n)}(x_1,x)\,D^{(n)}(x,x_2) &= C^{(n)}(x_1,x_2) 
\end{align*}

\end{proof}

\newpage
\section{One Block Spin Transformation --- Renormalization and Conclusion of the Induction}\label{chapRen}

Lemma \ref{lemOSFmainlem} and Proposition \ref{propOSFmainprop}  provide, 
for each $0\le n\le \np$, an integral free representation for the
$e^{\cC_n} \cF_n=\tilde N^{(n)}_\bbbt\cZ_n
\Big(\!(\bbbs \bbbt_n^{(SF)}) \circ \cdots \circ (\bbbs \bbbt_0^{(SF)})\!\Big) 
\big(\!e^{\cA_0 }\! \big) (\psi_*,\psi)$ of
\cite[Corollary \corSTmainCor]{PAR1}. It is
\begin{equation}\label{eqnREBsofar}
\begin{split}
&\log e^{\cC_n} \cF_n +\log\cZ'_n\\
&= \Big[- A_{n+1}(\psi_*,\psi,\phi_*,\phi,L^2\mu_n,\bbbs \cV_n)
+(\bbbs \cR_n)(\phi_*,\phi)\Big]_
{\phi_{(*)}=\phi_{(*)n+1}(\psi_*,\psi,L^2\mu_n,\bbbs\cV_n)}\\\noalign{\vskip-0.05in}
&\hskip2in +\Big[\tilde\cE_{n+1,1}(\tilde\psi_*,\tilde\psi)
+\tilde\cE_\fl(\tilde\psi_*,\tilde\psi)\Big]_
    {\tilde\psi_{(*)}=(\psi_{(*)},\{\partial_\nu\psi_{(*)}\})}
\end{split}
\end{equation}
To convert this representation into the form specified in 
\cite[Theorem \thmTHmaintheorem\  and Remark \remHTpreciseinduction]{PAR1}
(with $n$ replaced by $n+1$) we shall
\begin{itemize}[leftmargin=*, topsep=2pt, itemsep=2pt, parsep=0pt]
\item
 move the scaling/weight relevant part of $\tilde\cE_\fl$
               into $\cR_{n+1}$ and
\item renormalize the chemical potential and the interaction.
\end{itemize}
We again fix any $0\le n\le \np$ and assume that, if $n\ge 1$, 
the conclusions of \cite[Theorem \thmTHmaintheorem\ 
and Remark \remHTpreciseinduction]{PAR1} hold. In the case of $n=0$, we use the 
data of \cite[\S\sectINTstartPoint]{PAR1}. 
In this section, we will show that these assumptions imply 
\cite[Theorem \thmTHmaintheorem\ and Remark \remHTpreciseinduction]{PAR1}
with $n$ replaced by $n+1$, thus concluding the induction step.

We shall construct a chemical potential $\mu_{n+1}$, an interaction $\cV_{n+1}$
and a polynomial $\cR_{n+1}$ that fulfil the conclusions of 
\cite[Theorem \thmTHmaintheorem\ and Remark \remHTpreciseinduction]{PAR1}, 
and an analytic function  
$\,
\tilde\cE_{n+1,2}(\tilde\psi_*,\tilde\psi)
\,$
whose power series expansion does not
contain scaling/weight relevant monomials 
such that 
\begin{equation}\label{eqnRENrearrange}
\begin{split}
&\Big[- A_{n+1}(\psi_*,\psi,\phi_*,\phi,L^2\mu_n,\bbbs \cV_n)
+(\bbbs \cR_n)(\phi_*,\phi)\Big]_
{\phi_{(*)}=\phi_{(*)n+1}(\psi_*,\psi,L^2\mu_n,\bbbs\cV_n)}\\ \noalign{\vskip-0.05in}
&\hskip2.5in
+\tilde\cE_\fl(\tilde\psi_*,\tilde\psi)\Big|_
    {\tilde\psi_{(*)}=(\psi_{(*)},\{\partial_\nu\psi_{(*)}\})}\\
&=\Big[- A_{n+1}(\psi_*,\psi,\phi_*,\phi,\mu_{n+1},\cV_{n+1})
         +\cR_{n+1}(\phi_*,\phi)\Big]_
{\phi_{(*)}=\phi_{(*)n+1}(\psi_*,\psi,\mu_{n+1},\cV_{n+1})}\\ \noalign{\vskip-0.05in}
&\hskip2.5in
+\tilde\cE_{n+1,2}(\tilde\psi_*,\tilde\psi)\Big|_
    {\tilde\psi_{(*)}=(\psi_{(*)},\{\partial_\nu\psi_{(*)}\})}
\end{split}
\end{equation}
With
$ \tilde\cE_{n+1}
=\tilde\cE_{n+1,1}
       +\tilde\cE_{n+1,2}$
we will get the desired representation for 
\begin{equation*}
\Big(\!(\bbbs \bbbt_n^{(SF)}) \circ \cdots \circ (\bbbs \bbbt_0^{(SF)})\!\Big) 
\Big(\!e^{\cA_0 }\! \Big) (\psi_*,\psi)
\end{equation*}

The first term on the left hand side of  \eqref{eqnRENrearrange} is written
in terms of the background fields $\,\phi_{(*)n+1}(\psi_*,\psi,L^2\mu_n,\bbbs\cV_n)\,$,
and the second term in terms of the fields $\,\psi_{(*)}\,$ themselves.
For the proof of \eqref{eqnRENrearrange} we will reshuffle this arrangement
--- i.e. write background fields in terms of $\,\psi_{(*)}\,$ fields 
and conversely.
To take care of the special degree properties of the relevant monomials, 
the chemical potential and the interaction, we have to keep track 
of the degrees of the monomials arising in the conversion process. 
The background fields are defined in terms of the $\,\psi_{(*)}\,$ fields
(see  \cite[Proposition \propHTexistencebackgroundfields]{PAR1}), so conversion from $\phi$ fields to $\psi$ is
in principle ``easy''. The converse is taken care of in Lemma \ref{lemRENpsitophi}.

We set up and solve (in Lemma \ref{lemRENrenormchem})
the equation for $\de\mu_n=\mu_{n+1}-L^2\mu_n$.
Then, we derive and  solve the equation for  $\de\cV_n=\cV_{n+1} -\bbbs\cV_n$. 
See Lemma \ref{lemRENreninteraction}.
The polynomial $\cR_{n+1}$ and the function $\tilde\cE_{n+1,2}$ of
\eqref{eqnRENrearrange} are constructed in Lemma \ref{lemRENcRcE}. 

We again use the abbreviations \eqref{eqnOSFabbrevwt}
and the norms $\lun\ \cdot\ \run$ (for analytic functions) and
$\lutn\ \cdot\ \rutn$ (for field maps) with mass $2m$ and 
weight factors $\,\bar\ka\,$, $\,\bar\ka'\,$ defined at the 
beginning of \S \ref{chapOSFfluct}.
For the output of the renormalization procedure, we use the norm
$\|\ \cdot\ \|^{(n+1)}\,$
with mass $m$ which associates the same weight $\,\bar\ka=\ka(n+1)\,$ 
to the fields $\,\psi_*, \,\psi\,$, and the same weight
$\,\bar\ka'=\ka'(n+1)\,$ to the fields $\,\psi_{\nu *},\psi_\nu\,$, 
$\,\nu=0,\cdots,3\,$. We abbreviate $\lln\ \cdot\ \rln
=\|\ \cdot\ \|^{(n+1)}\,$ and use $\lltn\ \cdot\ \rltn$
to denote the corresponding norm for field maps. Recall also, 
from \cite[Definition \defHTbasicnorm]{PAR1},
that $\|\ \cdot\ \|_{2m}$ is the norm with mass $2m$ which associates the
weight $1$ to all fields.

\smallskip
To keep track of relevant monomials, we use
\begin{definition}\label{defRENspaces}
For a vector $\vp=(p_u,p_0,p_\sp)$ of nonnegative integers denote by 
$\fP_\vp$, respectively $\fR_\vp$,
the space of $\fS$ invariant, particle number preserving,  
polynomials in the fields $\tilde\psi_*$, $\tilde\psi$, 
respectively $\tilde\phi_*$, $\tilde\phi$, of type $\vp$, as in 
\cite[Definition \defINTmonomialtype]{PAR1}.
For any analytic function $\tilde\cF(\tilde\psi_*,\tilde\psi)$, denote 
by $\cM_\vp(\tilde\cF)$
the part of $\cF$ that is in $\fP_\vp$. Let
\begin{itemize}[leftmargin=*, topsep=2pt, itemsep=2pt, parsep=0pt]
\item
$\fP_\fD$, respectively $\fR_\fD$,  denote the space of 
$\fS$ invariant, particle number preserving,  
polynomials in the fields $\tilde\psi_*$,$\tilde\psi$, respectively $\tilde\phi_*$,$\tilde\phi$, 
that contain only monomials of type $\vp\in\fD$ as in \cite[(\eqnINTfDdef)]{PAR1}.
\item
$\fP_\rel$  denote the space of $\fS$ invariant, particle number preserving,  
polynomials in the fields $\tilde\psi_*$, $\tilde\psi$
that contain only monomials of type $\vp\in\fD_\rel$ as in 
\cite[Definition \defINTrelmonomial]{PAR1}.
\item
$\fP_\irr$ denote the space of $\fS$ invariant, particle number preserving,  
analytic functions of the fields $\tilde\psi_*$, $\tilde\psi$, 
that contain only scaling/weight irrelevant monomials, i.e. 
of type $\vp\notin\fD_\rel$.
\end{itemize}
$\fP_\fD$ and $\fP_\rel$ are direct sums of $\fP_\vp$'s with
$\vp$ running over the vectors specified in 
\cite[(\eqnINTfDdef) and Definition \defINTrelmonomial]{PAR1}, respectively.
By construction, $\fP_\fD$ is a subspace of $\fP_\rel$.
\end{definition}

\noindent We also use the projections $\cL_4$, $\cL_\fD$ and $\cI$ 
and mass extraction operator $\ell$ of Proposition \ref{propLOCproj}
and Definition  \ref{defLOCrel}.
\vskip .5cm

By Corollary \ref{corLOCproj} and Proposition \ref{propOSFmainprop},
\begin{equation}\label{eqnRENfluctParts}
\begin{split}
\tilde\cE_\fl\big((\psi_*,\{\partial_\nu\psi_*\})\,,\,
                        (\psi,\{\partial_\nu\psi\})\big)
&=\ell(\tilde\cE_\fl)\hskip-2.5pt\int\hskip-3pt dx \ \psi_*(x) \psi(x)
+ \cL_4(\tilde\cE_\fl) (\psi_*,\psi)
\\
&\hskip1in+\cL_\fD(\tilde\cE_\fl) \big((\psi_*,\{\partial_\nu\psi_*\})\,,\,
                        (\psi,\{\partial_\nu\psi\})\big)\\
&\hskip1in  +\cI(\tilde\cE_\fl)\big((\psi_*,\{\partial_\nu\psi_*\})\,,\,
                        (\psi,\{\partial_\nu\psi\})\big)\\
|\ell(\tilde\cE_\fl)|&\le  \sfrac{1}{\bar\ka^2}\fe_\fl(n)
\\
\lun \cL_4(\tilde\cE_\fl) \run\ ,\ \lln \cL_\fD(\tilde\cE_\fl) \rln\ ,\ \lln \cI(\tilde\cE_\fl)\rln\ &\le\ 
\Big[1+18\cc_\loc\sfrac{\bar\ka'}{\bar\ka}\Big]^2\,\fe_\fl(n)
\end{split}
\end{equation}
Set
\begin{equation}\label{eqnRENdefAvar}
\begin{split}
&A^{\var}(\psi_*,\psi,\de\mu,\de\cV) 
\\
&\hskip 1cm= A_{n+1}(\psi_*,\psi,\phi_*,\phi,L^2\mu_n+\de\mu, \bbbs\cV_n)\Big|_
       {\phi_{(*)}=\phi_{(*)n+1}(\psi_*,\psi,L^2\mu_n+\de\mu, \bbbs\cV_n+\de\cV)}\\
&\hskip 4cm
- A_{n+1}(\psi_*,\psi,\phi_*,\phi,L^2\mu_n, \bbbs\cV_n)\Big|_
      {\phi_{(*)}=\phi_{(*)n+1}(\psi_*,\psi,L^2\mu_n, \bbbs\cV_n)}
\end{split}
\end{equation}
Note that the last argument of the first $A_{n+1}$ on the right hand side
is $\bbbs\cV_n$, rather than $\bbbs\cV_n+\de\cV$. See the definition of 
$\tilde A^{\var}$  before \eqref{eqnRENgoodproptildeAvar}.

\begin{lemma}[Renormalization of the Chemical Potential]\label{lemRENrenormchem}
There is a unique $\de\mu_n$ in 
$\big[-\rrho_\bg,\rrho_\bg\big]$ such that for all 
$\de\cV \in\fP_{(4,0,0)}$
\begin{equation*}
\ell\big(A^\var(\,\cdot\,,\,\cdot\,,\de\mu_n,\de\cV)\big)+\ell(\tilde\cE_\fl)=0
\end{equation*}
Furthermore $\de\mu_n$ has the same sign as $\ell(\tilde\cE_\fl)$ and
\begin{equation*}
\sfrac{1}{4}|\ell(\tilde\cE_\fl)|\le
|\de\mu_n|\le \sfrac{9}{4} |\ell(\tilde\cE_\fl)| 
\le \sfrac{3}{\bar\ka^2}\fe_\fl(n)\end{equation*}
\end{lemma}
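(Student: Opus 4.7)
The plan is to realize $\de\mu_n$ as the unique root of a one-dimensional equation that is, to leading order, linear with a definite sign, and to apply a quantitative intermediate value / fixed point argument.

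First, I would study the structure of $\ell\bigl(A^\var(\,\cdot\,,\,\cdot\,,\de\mu,\de\cV)\bigr)$ as a function of $\de\mu$. Writing
$$A^\var = \int_0^1 \tfrac{d}{dt}\Bigl[A_{n+1}\bigl(\psi_*,\psi,\phi_*,\phi,L^2\mu_n+t\de\mu,\bbbs\cV_n\bigr)\bigr|_{\phi_{(*)}=\phi_{(*)n+1}(\psi_*,\psi,L^2\mu_n+t\de\mu,\bbbs\cV_n+t\de\cV)}\Bigr]\,dt,$$
the chain rule produces two pieces: an ``explicit'' one, $\de\mu\cdot\partial_\mu A_{n+1}$, and an ``implicit'' one coming from variations of the background field. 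Using the explicit form of the chemical potential piece of $A_{n+1}$ from \cite[\S\sectINTstartPoint]{PAR1}, the leading order of the explicit contribution evaluated on the background and then acted on by $\ell$ is precisely $-\de\mu$ times a positive constant close to $1$ (a trace of the two-point background kernels $S_{n+1}(L^2\mu_n)^{(*)}Q_{n+1}^*\fQ_{n+1}$). The higher-order corrections in $\de\mu$ coming from the implicit piece are controlled by the background field estimates \cite[Proposition \propBGEphivepssoln]{BGE} and the hypothesis \eqref{eqnOSFweightineqsA}, since $|\de\mu|\le\rrho_\bg$.

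Second, I would show that the $\de\cV$ dependence is killed by $\ell$. Since $\de\cV\in\fP_{(4,0,0)}$ is quartic, the induced change in $\phi_{(*)n+1}$ is of degree $\ge 3$ in $\psi_{(*)}$ (again by \cite[Proposition \propBGEphivepssoln]{BGE}). Substituting into $A_{n+1}$ and taking the difference with the unperturbed background, every resulting monomial has $(\psi_*,\psi)$-degree $\ge 4$, so $\ell$ annihilates it. Consequently $\ell\bigl(A^\var(\,\cdot\,,\,\cdot\,,\de\mu,\de\cV)\bigr)$ is actually independent of $\de\cV$, which immediately reduces the problem to a one-variable equation $h(\de\mu)+\ell(\tilde\cE_\fl)=0$ with $h(\de\mu):=\ell\bigl(A^\var(\,\cdot\,,\,\cdot\,,\de\mu,0)\bigr)$.

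Third, I would combine the two steps: write $h(\de\mu)=-c\,\de\mu+q(\de\mu)$ where $c$ is a real number with $|c-1|$ small (bounded by $\CC_\fl\rrho_\bg$, say, from the second-order background perturbation estimates) and $q$ is an analytic function on $[-\rrho_\bg,\rrho_\bg]$ satisfying $|q(\de\mu)|\le C|\de\mu|^2$ and $|q'(\de\mu)|\le C|\de\mu|$ for a constant $C$ depending only on $L$, $\Gam_\op$, $\GGa_\bg$, $\rrho_\bg$. Then $h'(\de\mu)$ is bounded away from $0$ with a fixed sign on $[-\rrho_\bg,\rrho_\bg]$ (choosing $\fv_0$ small so $|\ell(\tilde\cE_\fl)|\ll\rrho_\bg$), so $h$ is strictly monotone and the intermediate value theorem yields a unique root $\de\mu_n$; explicitly, $\de\mu_n=\tfrac{1}{c}\,\ell(\tilde\cE_\fl)+O\bigl(\ell(\tilde\cE_\fl)^2\bigr)$.

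Fourth, the quantitative bounds follow by bookkeeping: from $|c-1|$ small and the quadratic control of $q$, one gets $\tfrac14|\ell(\tilde\cE_\fl)|\le|\de\mu_n|\le\tfrac94|\ell(\tilde\cE_\fl)|$, with $\de\mu_n$ of the same sign as $\ell(\tilde\cE_\fl)$; the final estimate $|\de\mu_n|\le 3\bar\ka^{-2}\fe_\fl(n)$ comes from the bound on $|\ell(\tilde\cE_\fl)|$ in \eqref{eqnRENfluctParts}. The main obstacle will be pinning down the leading coefficient $c$ (verifying it really is close to $1$ and not to $0$) and bounding the implicit-piece contribution to $\de\mu\cdot h'(\de\mu)$, both of which amount to a careful but routine application of the substitution estimates \cite[Proposition \propSUBsubstitution]{SUB} together with the background field calculus of \cite{BGE}.
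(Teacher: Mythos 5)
Your overall route is the same as the paper's: observe that $\ell$ only sees the $(2,0,0)$ part so that $\ell\big(A^\var\big)$ is independent of $\de\cV$ (your degree-counting argument for this is fine), reduce to a scalar equation $h(\de\mu)+\ell(\tilde\cE_\fl)=0$, show $h$ is strictly monotone with derivative of size comparable to $1$, and finish with the intermediate value theorem and the bound \eqref{eqnRENfluctParts}. The gap is in how you propose to control the derivative. Closeness of your coefficient $c$ to $1$ (and the statement that the ``implicit'' piece contributes only at higher order in $\de\mu$) cannot come from the substitution and background-field \emph{norm} estimates you cite: those give upper bounds on sizes, never a lower bound or a value for a specific kernel sum, and a priori the implicit piece $\big<\partial_\phi A_{n+1}\big|_{\phi},\de\mu\,\partial_\mu\phi\big>$ is \emph{first} order in $\de\mu$ with an $O(1)$ coefficient; its leading part cancels only because the background field is a stationary point of $A_{n+1}$ (equivalently, because the linear part of the background solves the linear critical-point equation), a fact you never invoke. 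Without either that stationarity argument or an exact computation, the claim $|c-1|\le\CC_\fl\rrho_\bg$ is unsupported, and with it the constants $\sfrac14$, $\sfrac94$.

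The paper sidesteps this entirely by an exact evaluation: only the part of $A^\var$ quadratic in $\psi_*,\psi$ matters for $\ell$, and that part involves only the \emph{linear} part $\bbbB_{(*)}+\de\mu\,\De\bbbB_{(*)}(\de\mu)$ of the background field, given in closed form in \eqref{eqnRENdefLDeL}. Then $\ell$ is evaluation at constant fields (Corollary \ref{corLOCproj}.b), and constants are eigenfunctions of all the operators involved (Remark \ref{remMXconsteigen}, with $\bbbB_D 1=\De\bbbB_D 1=0$), which yields the closed form \eqref{eqnRENlavar}
$\ell\big(A^\var\big)=-\sfrac{a_{n+1}^2}{a_{n+1}-L^2\mu_n-\de\mu}+\sfrac{a_{n+1}^2}{a_{n+1}-L^2\mu_n}$,
whose $\de\mu$-derivative is exactly $-\big[1-\sfrac{L^2\mu_n+\de\mu}{a_{n+1}}\big]^{-2}\in[-4,-\sfrac49]$ for $|\de\mu|\le\sfrac18$. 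That is precisely the quantitative input your sketch is missing; to repair your argument you should either reproduce this constant-field computation (this is where the eigenvalue $\sfrac{a_{n+1}}{a_{n+1}-L^2\mu_n}$ of $S_{n+1}(L^2\mu_n)^{(*)}Q_{n+1}^*\fQ_{n+1}$ on constants enters, not a norm bound) or explicitly use the stationarity of the background field to kill the first-order implicit contribution before estimating the remainder.
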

\begin{proof} 
The part of 
$\,
\phi_{(*)}=\phi_{(*)n+1}(\psi_*,\psi,L^2\mu_n+\de\mu, \bbbs\cV_n+\de\cV)
\,$
that is linear in $\psi_*,\psi$ is
$\,
\bbbB_{(*)} \psi_{(*)} + \de\mu\, \De \bbbB_{(*)}(\de\mu)\, \psi_{(*)}
\,$
with 
\begin{equation}\label{eqnRENdefLDeL}
\begin{split}
\bbbB_{(*)}&= S_{n+1}(L^2\mu_n)^{(*)}Q_{n+1}^*\fQ_{n+1}
\qquad\qquad
\De \bbbB _{(*)}(\de\mu) =  \tilde S(\de\mu)^{(*)} \,\bbbB _{(*)}
\\
\tilde S(\de\mu)^{(*)} 
       &= S_{n+1}^{(*)} \big[ 1-(L^2\mu_n+\de\mu) S_{n+1}^{(*)} \big]^{-1}
\end{split}
\end{equation}
See \cite[Propositions \propBGEphivepssoln\ and \propBGEdephidemu]{BGE}.
In particular it is independent of $\de\cV$. Similarly, the part of 
$\phi_D=D_{n+1}\phi_{(*)n+1}(\psi_*,\psi,L^2\mu_n+\de\mu, \bbbs\cV_n+\de\cV)$
that is linear in $\psi$ is
$\, \bbbB _D\, \psi + \de\mu\, \De \bbbB _D \,\psi
\,$
where
$$
 \bbbB _D = B_{n+1,L^2\mu_n,D}^{(-)}
\qquad
\De \bbbB _D =B_{n+1,L^2\mu_n,D}
$$
Therefore, the part of $A^{\var}$ that is quadratic in the fields $\psi_*,\psi$ is
\begin{equation}\label{eqnRENdefAtwo}
\begin{split}
A^\var_2(\psi_*,\psi,\de\mu)
&=\bigg[\<\psi_*-Q_{n+1}\phi_*,\fQ_{n+1}  \big(\psi-Q_{n+1}\phi\big)\>_0 \\
& \hskip 2cm
 +\< \phi_*,\,\phi_D\>_{n+1} 
-L^2\mu_n\< \phi_*,\,\phi\>_{n+1}\bigg]
    ^{\atop{\phi_{(*)}= (\bbbB _{(*)} +\de\mu\De \bbbB _{(*)})\psi_{(*)}
    \hfill}
           {\phi_D= (\bbbB _D +\de\mu\De \bbbB _D)\psi}}
    _{\atop{\phi_{(*)}=\bbbB _{(*)}\psi_{(*)}\hfill}
           {\phi_D=\bbbB _D\psi}}
\\ 
&\hskip 1cm
-\de\mu\< \phi_*,\phi\>_{n+1}\Big|_ {\phi_{(*)}= (\bbbB _{(*)} +\de\mu\De \bbbB _{(*)})\psi_{(*)}}
\end{split}
\end{equation}
In particular
\begin{equation}\label{eqnRENavarbndsa}
\lun A^\var_2 \run 
   \le \cc_A\big[ 1  +L^2|\mu_n|\big]|\de\mu|\bar\ka^2
\end{equation}

Denote by $1$ and $1_\fin$ the functions on $\cX_0^{(n+1)}$ and 
$\cX_{n+1}$, respectively, which always take the value 1.
By Remark \ref{remMXconsteigen},
\begin{align*}
\bbbB _{(*)}1 &= \sfrac{a_{n+1}}{a_{n+1}-L^2\mu_n}1_\fin 
\qquad \qquad
\De \bbbB _{(*)} 1=   \sfrac{a_{n+1}}{[a_{n+1}-L^2\mu_n-\de\mu][a_{n+1}-L^2\mu_n]}1_\fin
\\
(\bbbB _{(*)} +\de\mu\,\De \bbbB _{(*)})1
&=\sfrac{a_{n+1}}{a_{n+1}-L^2\mu_n-\de\mu}1_\fin
\\
\bbbB _D 1  & = \De \bbbB _D 1 = 0
\end{align*} 
where
\begin{equation}\label{eqnOSRan}
a_{n+1}  =a\big(1 +\smsum_{j=1}^n\sfrac{1}{L^{2j}}\big)^{-1}
\end{equation}
Therefore, by Corollary \ref{corLOCproj}.b.
\begin{equation}\label{eqnRENlavar}
\begin{split}
&\ell\big(A^\var(\,\cdot\,,\,\cdot\,,\de\mu,\de\cV)\big)
=\ell\big(A^\var_2(\,\cdot\,,\,\cdot\,,\de\mu)\big)
=\sfrac{1}{\<1,1\>_0}A^\var_2(1\,,\,1\,,\de\mu)\\
&\hskip0.1in=a_{n+1}\Big[
            \big(\sfrac{L^2\mu_n+\de\mu}{a_{n+1}-L^2\mu_n-\de\mu}\big)^2-
            \big(\sfrac{L^2\mu_n}{a_{n+1}-L^2\mu_n}\big)^2\Big]
-L^2\mu_n\Big[\big(\sfrac{a_{n+1}}{a_{n+1}-L^2\mu_n-\de\mu}\big)^2
             -\big(\sfrac{a_{n+1}}{a_{n+1}-L^2\mu_n}\big)^2\Big]\\
&\hskip1in-\de\mu\Big[\sfrac{a_{n+1}}{a_{n+1}-L^2\mu_n-\de\mu} \Big]^2\\
&\hskip0.1in=-\sfrac{a_{n+1}(L^2\mu_n+\de\mu)}{a_{n+1}-L^2\mu_n-\de\mu}
   +\sfrac{a_{n+1}\,L^2\mu_n}{a_{n+1}-L^2\mu_n}\\
&\hskip0.1in=-\sfrac{a_{n+1}^2}{a_{n+1}-L^2\mu_n-\de\mu}
  +\sfrac{a_{n+1}^2}{a_{n+1}-L^2\mu_n}\\
\end{split}
\end{equation}
This function vanishes when $\de\mu=0$ and has first derivative, with respect
to $\de\mu$, given by
\begin{align*}
\sfrac{\partial\hfill}{\partial\de\mu}
    \ell\big(A_2^\var(\,\cdot\,,\,\cdot\,,\de\mu)\big)
=-\frac{a_{n+1}^2}{[a_{n+1}-(L^2\mu_n+\de\mu)]^2}
=-\frac{1}{\big[1-\sfrac{L^2\mu_n+\de\mu}{a_{n+1}}\big]^2}
\end{align*}
For $|\de\mu|\le\sfrac{1}{8}$, this derivative is between $-\sfrac{4}{9}$
and $-4$. So, as $\de\mu$ runs from $-\sfrac{1}{8}$ to $+\sfrac{1}{8}$,
$\ell\big(A^\var(\,\cdot\,,\,\cdot\,,\de\mu,\de\cV)\big)$ decreases 
monotonically over an interval that contains $[-\sfrac{1}{18},\sfrac{1}{18}]$.
As $\ell(\tilde\cE_\fl)$ is a constant, independent of $\de\mu$, the claims
follow by \eqref{eqnRENfluctParts}.
\end{proof}

\noindent Set $\mu_{n+1}=L^2\mu_n+\de\mu_n$.

\begin{lemma}[$\psi$ to $\phi$ conversion]\label{lemRENpsitophi}
There exists a constant  $\cc_\Om$, depending only on $\Gam_\op$ and
$\GGa_\bg$, and there are maps
\begin{align*}
&\Om:\fP_\fD \rightarrow\fR_\fD  \qquad
\Om_4:\fP_{(4,0,0)}\rightarrow\fR_{(4,0,0)}\qquad
\Om_6:\fP_{(4,0,0)}\times \fR_{(4,0,0)}\rightarrow\fR_{(6,0,0)}\\
&\Om_\irr: (\fP_\fD \oplus \fP_{(4,0,0)} ) \times\fR_{(4,0,0)}  \rightarrow \fP_\irr
\end{align*}
with $\Om$, $\Om_4$ and $\Om_6$ being linear and
with $\Om_\irr$ being linear in the first variable,
such that the following holds for all $\de\cV\in \fR_{(4,0,0)}$.
\begin{itemize}[leftmargin=*, topsep=2pt, itemsep=2pt, parsep=0pt]
\item
For all $\cP\in\fP_\fD$,
\begin{align*}
\cP\big((\psi_*,\{\partial_\nu\psi_*\})\,,\,
                  (\psi,\{\partial_\nu\psi\})\big)
&= \Om(\cP)(\tilde\phi_*,\tilde\phi)\Big|_
  {\atop{\phi_{(*)}=\phi_{(*),n+1}(\psi_*,\psi,\mu_{n+1},\bbbs\cV_n+\de\cV)}
   {\phi_{(*),\nu}=\partial_\nu\phi_{(*),n+1}(\psi_*,\psi,\mu_{n+1},\bbbs\cV_n
                                                                   +\de\cV)}}\\
&\hskip0.5in
  +\Om_\irr(\cP,\de\cV)\big((\psi_*,\{\partial_\nu\psi_*\})\,,\,
                  (\psi,\{\partial_\nu\psi\})\big)
\end{align*}
and for all $\cP\in\fP_{(4,0,0)}$,
\begin{align*}
\cP(\psi_*, \psi)
&= \Big[ \Om_4(\cP)(\phi_*,\phi) +\Om_6(\cP,\de\cV)(\phi_*,\phi)
\Big]_ {\phi_{(*)}=\phi_{(*),n+1}(\psi_*,\psi,\mu_{n+1},\bbbs\cV_n+\de\cV)}\\
&\hskip0.5in
  +\Om_\irr(\cP,\de\cV)\big((\psi_*,\{\partial_\nu\psi_*\})\,,\,
                  (\psi,\{\partial_\nu\psi\})\big)
\end{align*}

\item 
If $\cP\in\fP_\vp$, for some $\vp\in\fD$ then $\Om(\cP)\in\fR_\vp$ and
\begin{align*}
\|\Om(\cP)\|_m\le \sfrac{\cc_\Om}{\bar\ka^\vp}  \lln\cP\rln\qquad
\|\Om(\cP)\|_{2m}\le \sfrac{\cc_\Om}{\bar\ka^\vp}  \lun\cP\run
\end{align*}

\item
If $\cP\in\fP_{(4,0,0)}$, then
\begin{align*}
\|\Om_4(\cP)\|_{2m}\le \sfrac{\cc_\Om}{\bar\ka^4}  \lun\cP\run\qquad
\|\Om_6(\cP,\cV)\|_{2m}
  \le \sfrac{\cc_\Om}{\bar\ka^4}\|\bbbs\cV_n +\de\cV\|_{2m} \,\lun\cP\run
\end{align*}

\item
For all $\cP\in\fP_\fD\cup\fP_{(4,0,0)}$,
\begin{align*}
\lln\Om_\irr(\cP,\de\cV)\rln 
&\le \cc_\Om \|\bbbs\cV_n +\de\cV\|_m\,\bar\ka^2\ \lln\cP\rln \\
\lun\Om_\irr(\cP,\de\cV)\run 
&\le \cc_\Om \|\bbbs\cV_n +\de\cV\|_{2m}\,\bar\ka^2\ \lun\cP\run
\end{align*}
\end{itemize}
\end{lemma}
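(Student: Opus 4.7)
The plan is to construct $\Omega$ and its relatives by formally inverting the linear relation
\[
\phi_{(*)n+1}(\psi_*,\psi,\mu_{n+1}, \bbbs\cV_n+\de\cV)
 = \bbbB_{(*)}\,\psi_{(*)} + \phi^{(\ge 3)}_{(*)n+1}(\psi_*,\psi),
\]
in which $\bbbB_{(*)}=S_{n+1}(\mu_{n+1})^{(*)}Q_{n+1}^{*}\fQ_{n+1}$ and the higher-order piece $\phi^{(\ge 3)}_{(*)n+1}$ is of degree $\ge 3$ in $\psi_{(*)}$ with leading term linear in the interaction $\bbbs\cV_n+\de\cV$, by Proposition \propBGEphivepssoln.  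Analogously, the derivative background field decomposes as $\partial_\nu \phi_{(*)n+1} = \bbbB_{D(*)\nu}\,\psi_{(*)\nu} + \phi^{(\ge 3)}_{(*)n+1,\nu}$. Both $\bbbB_{(*)}$ and $\bbbB_{D(*)\nu}$ admit bounded inverses on their images, with norms controlled by $\Gam_\op$ and $\GGa_\bg$; this is where the hypothesis that the relevant operators are close to the unperturbed averaging operators gets used.

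First, for each $\cP\in \fP_\vp$ with $\vp\in\fD$, I would define $\Omega(\cP)\in\fR_\vp$ by the kernel-level substitution $\psi_{(*)}\mapsto \bbbB_{(*)}^{-1}\phi_{(*)}$ and $\partial_\nu \psi_{(*)} \mapsto \bbbB_{D(*)\nu}^{-1}\phi_{(*)\nu}$ applied to the kernel of $\cP$; this is linear, $\fS$--invariant, particle number preserving, and preserves the type $\vp$. For $\cP\in\fP_{(4,0,0)}$, the same substitution defines $\Omega_4(\cP)\in\fR_{(4,0,0)}$. The stated norm estimates on $\Omega(\cP)$ and $\Omega_4(\cP)$ follow immediately, since each application of $\bbbB^{-1}$ or $\bbbB_{D}^{-1}$ trades a $\bar\ka$ (resp.\ $\bar\ka'$) weight in the $\psi$--norm for a weight of $1$ in the $\phi$--norm, producing the factor $1/\bar\ka^{\vp}$.

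Second, to obtain $\Omega_6$ and verify the identities, I would Taylor-expand
\[
\Omega_4(\cP)\!\big(\phi_{n+1}(\psi,\mu_{n+1},\bbbs\cV_n+\de\cV)\big)
= \cP\big(\psi + \bbbB^{-1}\phi^{(\ge 3)}_{n+1}(\psi)\big)
= \cP(\psi) + D\cP(\psi)\cdot \bbbB^{-1}\phi^{(\ge 3)}_{n+1}(\psi) + \ho,
\]
where ``$\ho$'' collects terms with at least two insertions of $\phi^{(\ge 3)}_{n+1}$. The linear Taylor correction, expressed in terms of $\phi$ using $\psi=\bbbB^{-1}\phi + O(\cV,\phi^3)$, has type $(6,0,0)$ and is proportional to $\bbbs\cV_n+\de\cV$; I take this (with sign reversed) to be $\Omega_6(\cP,\de\cV)\in\fR_{(6,0,0)}$. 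The stated norm bound $\|\Omega_6(\cP,\de\cV)\|_{2m}\le (\cc_\Om/\bar\ka^4)\|\bbbs\cV_n+\de\cV\|_{2m}\,\lun\cP\run$ then tracks the single factor of $V_n$ coming from $\phi^{(3)}_{n+1}$ and the two extra $\bar\ka$ weights from the two extra fields, against the six overall $\bar\ka$'s in $\bar\ka^{(6,0,0)}=\bar\ka^6$.

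Third, $\Om_\irr(\cP,\de\cV)$ is defined as the residue: the difference between $\cP$ and the sum of the $\phi$--side terms, evaluated on $\phi_{n+1}(\psi)$. The key claim is that every monomial generated lies in $\fP_\irr$, which I would verify by a case analysis over $\vp\in\fD\cup\{(4,0,0)\}$. Each such monomial arises from at least one insertion of $\phi^{(\ge 3)}_{n+1}$ or $\phi^{(\ge 3)}_{n+1,\nu}$ into a field slot of $\cP$, replacing one $\psi$ (or $\partial_\nu\psi$) by something of degree $\ge 3$, and hence increasing the total field degree by at least $2$. A short tabulation against the list $\fD_\rel$ shows that every such enlargement exits $\fD_\rel$: e.g.\ $(1,1,0)\to (3,1,0)$, $(6,0,0)\to(8,0,0)$, $(4,0,0)\to (8,0,0)$ (since the $(6,0,0)$ piece was already extracted into $\Omega_6$), etc. Linearity in $\cP$ is built in. The norm bound then combines Proposition \propSUBsubstitution{} with the explicit bounds on $\phi^{(\ge 3)}_{(*)n+1}$ and $\phi^{(\ge 3)}_{(*)n+1,\nu}$ supplied by Proposition \propBGEphivepssoln: the single factor of $\|\bbbs\cV_n+\de\cV\|$ is the leading dependence of $\phi^{(\ge 3)}_{n+1}$ on the interaction, while the factor $\bar\ka^2$ records the two extra fields.

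The main obstacle will be the bookkeeping for the derivative types $(1,1,0),(0,1,1),(0,0,2)$: here one must separately invert $\bbbB$ and $\bbbB_{D}$, and make sure that inserting $\phi^{(\ge 3)}_{n+1,\nu}$ (which contains terms with fewer $\partial_\nu\psi$'s than the leading $\bbbB_{D}\psi_\nu$) never accidentally produces a type lying back in $\fD_\rel$. This degree-counting step relies essentially on the decomposition of $\partial_\nu\phi_{n+1}$ organized by dependence on each of $\psi_{(*)}$ and $\psi_{(*),\nu}$ in Proposition \propBGEphivepssoln, and is the one place where symmetry arguments alone are insufficient and the structure of the background field must be invoked.
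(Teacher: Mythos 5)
Your plan follows essentially the same route as the paper: invert the linear parts $B_{(*)}$, $B^{(\pm)}_{n+1,\mu_{n+1},\nu}$ of the background fields (the paper realizes your ``inverse on the image'' concretely as the left inverses $[B^*B]^{-1}B^*$ via \cite[Lemma \lemPOGrightinverse]{POA}), define $\Om$ and $\Om_4$ by substituting these inverses into the kernels, peel off the degree-six correction in the quartic case and re-express it as a $\phi$-polynomial to get $\Om_6$, and absorb the remainder — shown irrelevant by the same degree-counting — into $\Om_\irr$, with the bounds coming from the substitution estimates of \cite{SUB} and \cite[Proposition \propBGEphivepssoln]{BGE}. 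This matches the paper's proof, so no further comparison is needed.
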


\begin{proof}
Let $B_{(*)}=S_{n+1,\mu_{n+1}}^{(*)}Q_{n+1}^*\fQ_{n+1}$ be the operators
of \cite[Proposition \propBGEphivepssoln.a]{BGE}\footnote{
The hypothesis of this Proposition is fulfilled by \eqref{eqnOSFweightineqsA}.
} 
and $B^{(\pm)}_{n,\mu_{n+1},\nu}$ be the operators of 
\cite[Proposition \propBGEphivepssoln.b]{BGE}.
By \cite[Lemma \lemPOGrightinverse]{POA} the operators $B^*B$, $B_*^*B_*$ and 
$\big(B^{(\pm)}_{n+1,\mu_{n+1},\nu}\big)^* B^{(\pm)}_{n+1,\mu_{n+1},\nu}$
all have bounded inverses. Consequently the operators
\begin{align*}
R_{(*)}&=\big[B_{(*)}^*B_{(*)}\big]^{-1}B_{(*)}^*\\
R_\nu^{(\pm)}
  &=\big[\big(B^{(\pm)}_{n+1,\mu_{n+1},\nu}\big)^* B^{(\pm)}_{n+1,\mu_{n+1},\nu}\big]^{-1}
         \big(B^{(\pm)}_{n+1,\mu_{n+1},\nu}\big)^*\\
\end{align*}
are left inverses of $B_{(*)}$ and $B^{(\pm)}_{n+1,\mu_{n+1},\nu}$,
respectively. All have uniformly bounded $\|\ \cdot\ \|_{2m}$ norms.

For $\cP\in\fP_\vp\,$ and 
$\,\vp \in  \fD\cup \{(4,0,0)\}\,$, set
\begin{equation*}
\Om'(\cP)(\tilde\phi_*,\tilde\phi)
    =\cP\big((R_*\phi_*,\{R_\nu^{(+)}\phi_{*\nu}\})\,,\,
                            (R\phi,\{R_\nu^{(-)}\phi_\nu\})\big)
\end{equation*}
Then
$
\|\Om'(\cP)\|_m\le \sfrac{\cc_\Om}{\bar\ka^\vp}  \lln\cP\rln
$
and
$
\|\Om'(\cP)\|_{2m}\le \sfrac{\cc_\Om}{\bar\ka^\vp}  \lun\cP\run
$.
\begin{itemize}[leftmargin=*, topsep=2pt, itemsep=2pt, parsep=0pt]
\item
If $\vp\in\fD$ and $\cP\in\fP_\vp$, 
we set $\Om(\cP)=\Om'(\cP)$. In this case, by  
\cite[Proposition \propBGEphivepssoln.a,b]{BGE} and
\cite[Corollary \corSUBsubstitution]{SUB},
\begin{align*}
&\Om(\cP)(\tilde\phi_*,\tilde\phi)\Big|_
  {\atop{\phi_{(*)}=\phi_{(*),n+1}(\psi_*,\psi,\mu_{n+1},\bbbs\cV_n+\de\cV)}
      {\phi_{(*),\nu}=\partial_\nu\phi_{(*),n+1}(\psi_*,\psi,\mu_{n+1},  
                                                     \bbbs\cV_n+\de\cV)}}
\\
&\hskip0.5in=\cP\big((\psi_*,\{\partial_\nu\psi_*\})\,,\,
                  (\psi,\{\partial_\nu\psi\})\big)
-\Om_\irr(\cP,\de\cV)\big((\psi_*,\{\partial_\nu\psi_*\})\,,\,
                  (\psi,\{\partial_\nu\psi\})\big)
\end{align*}
with an $\Om_\irr(\cP,\de\cV)\in\fP_\irr$ satisfying 
$\lln\Om_\irr(\cP)\rln 
      \le \cc_\Om \|\bbbs\cV_n +\de\cV\|_m\,\bar\ka^2\ \lln\cP\rln\,$
and
$\lun\Om_\irr(\cP)\run 
      \le \cc_\Om \|\bbbs\cV_n +\de\cV\|_{2m}\,\bar\ka^2\ \lun\cP\run\,$.

\item
If $\cP\in\fP_{(4,0,0)}$, 
then there are $\cP_6\in\fP_{( 6,0,0)}$ and $\Om'_\irr(\cP,\de\cV)\in\fP_\irr$,
fulfilling 
\begin{align*}
 \lln\Om'_\irr(\cP,\de\cV)\rln 
   &\le \cc_\Om \|\bbbs\cV_n +\de\cV\|_m\,\bar\ka^2\ \lln\cP\rln\\
\lun\cP_6\run\ ,\ \lun\Om'_\irr(\cP,\de\cV)\run 
   &\le \cc_\Om \|\bbbs\cV_n +\de\cV\|_{2m}\,\bar\ka^2\ \lun\cP\run
\end{align*}
 such that
\begin{align*}
&\Om'(\cP)(\tilde\phi_*,\tilde\phi)\Big|_
  {\phi_{(*)}=\phi_{(*),n+1}(\psi_*,\psi,\mu_{n+1}, \bbbs\cV_n+\de\cV)}\\
&\hskip0.5in=\cP(\psi_*,\psi)
-\cP_6(\psi_*,\psi)
-\Om'_\irr(\cP,\de\cV)(\psi_*,\psi)
\end{align*}
Set 
\begin{equation*}
\Om_4(\cP)=\Om'(\cP)\quad
\Om_6(\cP,\de\cV)=\Om(\cP_6)\quad
\Om_\irr(\cP,\de\cV)=\Om'_\irr(\cP,\de\cV)+\Om_\irr(\cP_6,\de\cV)
\end{equation*}
\end{itemize}
\end{proof}

\begin{lemma}[Renormalization of the Interaction]\label{lemRENreninteraction}
\ 
\begin{enumerate}[label=(\alph*), leftmargin=*]
\item 
There exists a constant  $\cc_{\de\cV}$, depending only 
on $\Gam_\op$, $\GGa_\bg$, $\rrho_\bg$  and $m$, and a unique 
$\de\cV_n \in \fP_{(4,0,0)}$ such that
\begin{equation*}
\de\cV_n(\phi_*,\phi) + \Om_4\Big(
\cL_4 \big(A^\var(\,\cdot\,,\,\cdot\,,\de\mu_n,\de\cV_n)+\tilde\cE_\fl\big)
\Big)
=0
\end{equation*}
It fulfills the estimate
$
\|\de\cV_n\|_{2m} \le    \cc_{\de\cV}\sfrac{\fe_\fl(n)}{\bar\ka^4}
$.

\item 
Set
\begin{equation*}
\cV_{n+1} = \bbbs\cV_n + \de\cV_n\qquad\text{and}\qquad
\CC_{\de\cV} = \cc_{\de\cV}
\end{equation*}
Then
\begin{equation*}
\big\|\cV_{n+1}-\cV^{(u)}_{n+1}\big\|_{2m}
   \le \sfrac{\CC_{\de\cV}}{L^{n+1}} 
    \smsum_{\ell=1}^{n+1} \sfrac{L^\ell }{\ka(\ell)^4}\fe_\fl(\ell-1) 
\end{equation*}
\end{enumerate}
\end{lemma}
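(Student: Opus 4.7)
Part (a) is a fixed point problem for $\de\cV$ in the Banach space $\fP_{(4,0,0)}$ equipped with $\|\cdot\|_{2m}$. Define the map
\begin{equation*}
T(\de\cV)\ =\ -\Om_4\Big(\cL_4\big(A^\var(\,\cdot\,,\,\cdot\,,\de\mu_n,\de\cV)+\tilde\cE_\fl\big)\Big).
\end{equation*}
A solution $\de\cV_n$ of the equation in (a) is a fixed point of $T$. The plan is to apply the contraction mapping principle on the ball $\cB_\rho = \{\de\cV\in\fP_{(4,0,0)}\ :\ \|\de\cV\|_{2m}\le \rho\}$ with $\rho = \cc_{\de\cV}\fe_\fl(n)/\bar\ka^4$.

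To verify that $T$ maps $\cB_\rho$ into itself, the first step is to bound $A^\var(\,\cdot\,,\,\cdot\,,\de\mu_n,\de\cV)$ when $\de\cV\in \cB_\rho$. Writing out \eqref{eqnRENdefAvar}, $\de\cV$ enters only through $\phi_{(*)n+1}(\psi_*,\psi,L^2\mu_n+\de\mu_n,\bbbs\cV_n+\de\cV)$ in the first $A_{n+1}$. I would expand $\phi_{(*)n+1}$ linearly and to higher orders in $\de\cV$ using \cite[Proposition \propBGEphivepssoln]{BGE} and \cite[Proposition \propBGEdephidemu]{BGE} (or its $\cV$--analog), and bound the resulting terms with the field--map composition estimates of \cite[Proposition \propSUBsubstitution]{SUB}. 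Combining this with the bound $|\de\mu_n|\le 3\fe_\fl(n)/\bar\ka^2$ from Lemma \ref{lemRENrenormchem}, Proposition \ref{propOSFmainprop} for $\tilde\cE_\fl$, the norm bound on $\cL_4$ from Corollary \ref{corLOCproj}, and the bound $\|\Om_4(\cP)\|_{2m}\le \cc_\Om \lun\cP\run/\bar\ka^4$ from Lemma \ref{lemRENpsitophi}, gives an estimate of the form
\begin{equation*}
\|T(\de\cV)\|_{2m}\ \le\ \sfrac{\cc_\Om}{\bar\ka^4}\Big\{C_1\fe_\fl(n)\ +\ C_2\,\bar\ka^2\|\de\cV\|_{2m}\Big\}.
\end{equation*}
Choosing $\fv_0$ sufficiently small (so that $\bar\ka^2\ll 1$ relative to the constants) makes the second term strictly smaller than the first when $\|\de\cV\|_{2m}\le \rho$; this gives invariance of $\cB_\rho$. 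The same linear--in--$\de\cV$ extraction yields Lipschitz continuity of $T$ with constant $\tfrac12$, establishing contraction. The existence and uniqueness of $\de\cV_n$, together with the stated norm bound, then follow. The main technical obstacle here is the careful expansion of the $\phi$--substitution to separate the $\de\cV$--independent ``source'' from the $\de\cV$--dependent ``perturbation,'' which requires tracking the polynomial/analytic dependence through the background--field construction.

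\textbf{Part (b)} is a routine telescoping argument. The definition $\cV_{n+1}=\bbbs\cV_n+\de\cV_n$ together with the scaling identity $\bbbs\cV^{(u)}_n=\cV^{(u)}_{n+1}$ (cf.\ \cite[Remark \remSCscaling.h]{PAR1}) yields
\begin{equation*}
\cV_{n+1}-\cV^{(u)}_{n+1}\ =\ \bbbs\bigl(\cV_n-\cV^{(u)}_n\bigr)+\de\cV_n,
\end{equation*}
and iterating (using $\cV_0=\cV^{(u)}_0$ from \cite[\S\sectINTstartPoint]{PAR1}) gives
\begin{equation*}
\cV_{n+1}-\cV^{(u)}_{n+1}\ =\ \sum_{\ell=1}^{n+1}\bbbs^{\,n+1-\ell}\,\de\cV_{\ell-1}.
\end{equation*}
Since each $\de\cV_{\ell-1}\in\fP_{(4,0,0)}$ has $\De(\vp)=6$, \cite[Lemma \lemSAscaletoscale.a]{PAR1} (with mass $m$ going to $2m$ once, then $2m$ preserved by further applications of $\bbbs$) gives $\|\bbbs^{\,n+1-\ell}\de\cV_{\ell-1}\|_{2m}\le L^{-(n+1-\ell)}\|\de\cV_{\ell-1}\|_{2m}$. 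Plugging in the part (a) estimate $\|\de\cV_{\ell-1}\|_{2m}\le \cc_{\de\cV}\fe_\fl(\ell-1)/\ka(\ell)^4$ and collecting the factor of $L^{-(n+1)}$ out front yields precisely the claimed bound with $\CC_{\de\cV}=\cc_{\de\cV}$.
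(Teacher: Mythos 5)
Part (b) of your proposal is fine and is essentially the paper's argument (the paper runs the recursion $\|\cV_{n+1}-\cV^{(u)}_{n+1}\|_{2m}\le \sfrac{1}{L}\|\cV_n-\cV^{(u)}_n\|_{2m}+\|\de\cV_n\|_{2m}$ using the inductive hypothesis instead of unrolling the telescope, but that is the same computation). The problem is in part (a), and it sits exactly where you wrote ``gives an estimate of the form $\|T(\de\cV)\|_{2m}\le \sfrac{\cc_\Om}{\bar\ka^4}\{C_1\fe_\fl(n)+C_2\,\bar\ka^2\|\de\cV\|_{2m}\}$.'' The factor $\bar\ka^2$ (rather than $\bar\ka^4$) in the $\de\cV$--dependent term is the whole content of the lemma, and it does not come out of the generic ``expand $\phi_{(*)n+1}$ in $\de\cV$ and bound with the substitution estimates'' procedure you describe. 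The linear--in--$\de\cV$ piece of $\De\Phi^{(3)}_{(*)}$ is $\varphi^{(l)}_{(*)}(\de\cV)=-\tilde S(\de\mu_n)^{(*)}\de\cV'_{(*)}(\Phi^{(1)}+\De\Phi^{(1)},\dots)$, which carries no $\de\mu_n$ smallness, only $\lutn\varphi^{(l)}_{(*)}(\de\cV)\rutn\le\GGa_\Phi\|\de\cV\|_{2m}\bar\ka^3$. In $\cL_4(A^\var)$ it gets paired with $\psi_{(*)}-Q_{n+1}\Phi^{(1)}_{(*)}$ (from the $\<\psi_*-Q_{n+1}\phi_*,\fQ_{n+1}(\psi-Q_{n+1}\phi)\>_0$ term) and with $D_{n+1}^{(*)}\Phi^{(1)}_{(*)}$ (from the $\<\phi_*,\phi_D\>_{n+1}$ term), both of size $O(\bar\ka)$ with no small factor. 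So the naive bound on the $\de\cV$--dependent part of $\cL_4(A^\var)$ is $O(\bar\ka^4\|\de\cV\|_{2m})$, and after $\Om_4$ your map $T$ has Lipschitz constant of order $\cc_\Om\GGa_\Phi$ --- a constant depending on $\Gam_\op,\GGa_\bg$ that is not $<1$ and is not improved by taking $\fv_0$ small. Your contraction and ball--invariance claims therefore do not follow as stated. (Also, the parenthetical ``so that $\bar\ka^2\ll 1$'' is backwards: $\bar\ka\to\infty$ as $\fv_0\to 0$; what one actually needs is the smallness of the coefficient multiplying $\|\de\cV\|_{2m}$.)

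The missing idea is the cancellation the paper performs before estimating: using $D_{n+1}^{(*)}\Phi^{(1)}_{(*)}=Q_{n+1}^*\fQ_{n+1}\big(\psi_{(*)}-Q_{n+1}\Phi^{(1)}_{(*)}\big)+L^2\mu_n\Phi^{(1)}_{(*)}$ (from \cite[Proposition \propBGEphivepssoln.c]{BGE}), the large pairings of $\varphi^{(l)}(\de\cV)$ with $\psi_{(*)}-Q_{n+1}\Phi^{(1)}_{(*)}$ and with $D_{n+1}\Phi^{(1)}$ cancel each other, and the $L^2\mu_n\<\varphi^{(l)}_*,\Phi^{(1)}\>$ remainders cancel against the explicit $-L^2\mu_n$ terms. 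After this, every surviving $\de\cV$--dependent term contains either $\De\Phi^{(1)}_{(*)}$, $D_{n+1}^{(*)}\De\Phi^{(1)}_{(*)}$ (both of norm $\le\GGa_\Phi|\de\mu_n|\bar\ka$) or an explicit $\de\mu_n$, giving $\lun A_{\de\cV}\run\le\cc_1|\de\mu_n|\,\|\de\cV\|_{2m}\,\bar\ka^4$ and hence, with $|\de\mu_n|\le 3\fe_\fl(n)/\bar\ka^2$, an operator norm $\le\cc_2\fe_\fl(n)/\bar\ka^2\le\cc_2\fv_0^{1-4\eps}\le\half$ for the linear map $\de\cV\mapsto\Om_4(A_{\de\cV})$. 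Since $\cL_4(A^\var)$ is in fact affine in $\de\cV$ (the paper writes it as $A_{\de\cV}-B$), the equation is then solved by inverting $\bbbone+\Om_4(A_{\,\cdot\,})$, which also gives uniqueness in all of $\fP_{(4,0,0)}$ --- note your contraction on the ball $\cB_\rho$ would, as stated, only give uniqueness within that ball. So the fixed--point framework is fine in principle, but without the cancellation step the key Lipschitz estimate, and with it the proof, is missing.
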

\noindent
Part (b) provides our choice for the $\CC_{\de\cV}$ of 
\cite[Remark \remHTpreciseinduction]{PAR1}.
By \cite[Corollary \corPARmunvn]{PAR1},
\begin{equation*}
\big\|\cV_{n+1}\big\|_{2m}\le\fv_{n+1}
\end{equation*}

\begin{proof} (a)
By  \cite[Propositions \propBGEphivepssoln\ and \propBGEdephidemu\ 
and (\eqnBGEDephi)]{BGE}
\begin{equation}\label{eqndiffphiwithmun}
\begin{split}
\phi_{(*)n+1}(\psi_*,\psi,L^2\mu_n,\bbbs\cV_n) 
&= \Phi_{(*)}^{(1)} + \Phi_{(*)}^{(3)}+ \Phi_{(*)}^{(\ge 5)}
\\
\phi_{(*)n+1}(\psi_*,\psi,\mu_{n+1},\bbbs\cV_n+\de\cV)
       -\phi_{(*)n+1}(\psi_*,\psi,L^2\mu_n,\bbbs\cV_n)
&= \De\Phi_{(*)}^{(1)}\!+\! \De\Phi_{(*)}^{(3)}\!+\! \De\Phi_{(*)}^{(\ge 5)}
\end{split}
\end{equation}
where
\begin{equation*}
\Phi_{(*)}^{(1)} = \bbbB _{(*)}\, \psi_*     \qquad\qquad
\De\Phi_{(*)}^{(1)} = \de\mu_n\,\De \bbbB _{(*)}(\de\mu_n)\, \psi_{(*)}   
= \de\mu_n \tilde S(\de\mu_n)^{(*)} \Phi^{(1)}_{(*)}
\end{equation*}
with 
\begin{itemize}[leftmargin=*, topsep=2pt, itemsep=2pt, parsep=0pt]
\item
$\,\bbbB _{(*)}\, $, $\,\De \bbbB _{(*)}(\de\mu_n)\,$ and $\,\tilde S(\de\mu_n) \,$ as in \eqref{eqnRENdefLDeL},
\item
$\,\Phi_{(*)}^{(3)}\,$ is the part of $\,\phi_{(*)n+1}(\psi_*,\psi,L^2\mu_n,\bbbs\cV_n) \,$
that is of degree exactly three in the fields $\psi_*,\psi$,
\item
$
\De\Phi_{(*)}^{(3)} =\varphi_{(*)}^{(c)} +\varphi_{(*)}^{(l)}(\de\cV)
$
with, using the notation of\cite[Definition \defBGAgradV]{PAR1},
\begin{align*}
\varphi_*^{(c)} 
&=\de\mu_n\,\tilde S(\de\mu_n)^*\,\Phi_*^{(3)}
   -\tilde S(\de\mu_n)^*\,(\bbbs\cV_n)'_*(\Phi_*,\Phi,\Phi_*)
    \Big|^{\Phi_{(*)}=[\bbbone+ \de\mu_n \tilde S(\de\mu_n)^{(*)}]
                                                     \Phi_{(*)}^{(1)}}
          _{\Phi_{(*)}=\Phi_{(*)}^{(1)}}
\\
\varphi^{(c)} 
&=\de\mu_n\,\tilde S(\de\mu_n)\,\Phi^{(3)}
   -\tilde S(\de\mu_n)\,(\bbbs\cV_n)'(\Phi,\Phi_*,\Phi)
         \Big|^{\Phi_{(*)}=[\bbbone+ \de\mu_n \tilde S(\de\mu_n)^{(*)}]
                                                     \Phi_{(*)}^{(1)}}
          _{\Phi_{(*)}=\Phi_{(*)}^{(1)}}
\\
\varphi_*^{(l)}(\de\cV) 
&=-\tilde S(\de\mu_n)^*\,\de\cV'_*\big(
               \Phi_*^{(1)}+\De\Phi_*^{(1)},
               \Phi^{(1)}+\De\Phi^{(1)},
               \Phi_*^{(1)}+\De\Phi_*^{(1)}\big)
\\
\varphi^{(l)}(\de\cV) &=
   -\tilde S(\de\mu_n)\,\de\cV'\big(
              \Phi^{(1)}+\De\Phi^{(1)},
              \Phi_*^{(1)}+\De\Phi_*^{(1)},
              \Phi^{(1)}+\De\Phi^{(1)}\big)
\end{align*}

\item
and 
$\,\Phi_{(*)}^{(\ge 5)},\,\De\Phi_{(*)}^{(\ge 5)}\,$ being of degree at least five in the fields $\psi_*,\psi$.
\end{itemize}
Observe that $\,\Phi_{(*)}^{(1)}\,$, $\,\Phi_{(*)}^{(3)}\,$, $\,\Phi_{(*)}^{(\ge5)}\,$,
 $\,\De\Phi_{(*)}^{(1)}\,$ and  $\,\varphi_{(*)}^{(c)} \,$
 are independent of $\de\cV$ and that
$\,\varphi_{(*)}^{(l)}(\de\cV) \,$ 
is linear in  $\de\cV$. By 
\cite[Propositions \propBGEphivepssoln\ and \propBGEdephidemu]{BGE} and
\cite{POA}, there is a constant $\GGa_\Phi$, depending only on $\Gam_\op$ 
and $\GGa_\bg$, such that
\begin{equation}\label{eqnRENPhiDeltaPhiestimates}
\begin{split}
\lutn \Phi_{(*)}^{(1)}\rutn\ ,\ \lutn D_{n+1}^{(*)} \Phi_{(*)}^{(1)}\rutn  
            &\le \GGa_\Phi \bar\ka \qquad
\lutn \Phi_{(*)}^{( 3)}\rutn
            \le \GGa_\Phi \sfrac{\fv_{n}}{L} \bar\ka^3
 \\
\lutn \De\Phi_{(*)}^{(1)}\rutn\ ,\ \lutn D_{n+1}^{(*)} \De \Phi_{(*)}^{(1)}\rutn 
            &\le \GGa_\Phi |\de\mu_n| \bar\ka 
\end{split}
\end{equation}
and
\begin{equation}\label{eqnRENnormsvarphi}
\lutn\varphi_{(*)}^{(c)} \rutn 
        \le \GGa_\Phi\, |\de\mu_n |  \sfrac{\fv_{n}}{L} \,\bar\ka^3 
\qquad\qquad
\lutn\varphi_{(*)}^{(l)}(\de\cV) \rutn \le \GGa_\Phi\,\|\de V\|_{2m}\ \bar\ka^3
\end{equation}
By inspection
\begin{align*}
&\cL_4 \big(A^\var(\psi_*,\psi,\de\mu_n,\de\cV)\big) 
\\
&\hskip0.2in
= -\big<\big(\psi_*-Q_{n+1}\Phi^{(1)}_*-Q_{n+1}\De\Phi^{(1)}_*\big)\ ,\ 
                      \fQ_{n+1} Q_{n+1}\De\Phi^{(3)} \big>_0 
\\ &\hskip0.8in
   -\big<\De\Phi_*^{(3)}\ ,\ Q_{n+1}^*\fQ_{n+1}
         \big(\psi -Q_{n+1}\Phi^{(1)}-Q_{n+1}\De\Phi^{(1)}\big)\big>_{n+1}
\\ &\hskip0.8in
+\big<Q_{n+1}\De\Phi^{(1)}_*\ ,\ \fQ_{n+1} Q_{n+1}\Phi^{(3)} \big>_0
+\big<\Phi_*^{(3)}\ ,\ Q_{n+1}^*\fQ_{n+1}Q_{n+1}\De\Phi^{(1)}\big>_{n+1} 
\\ &\hskip0.4in
 + \bbbs\cV_n(\phi_*,\phi)  \Big|
    ^{\phi_{(*)}=\Phi_{(*)}^{(1)}+\De\Phi_{(*)}^{(1)}}   
    _{\phi_{(*)}=\Phi_{(*)}^{(1)}}
\\ &\hskip0.4in
- L^2\mu_n  \Big( 
    \big< \Phi_*^{(3)}+\De\Phi_*^{(3)}\ , \ \Phi^{(1)}+\De\Phi^{(1)}\big>_{n+1}
    -\big< \Phi_*^{(3)}\ , \ \Phi^{(1)}\big>_{n+1}\\ &\hskip0.8in
 +  \big< \Phi_*^{(1)}+\De\Phi_*^{(1)}\ , \ \Phi^{(3)}+\De\Phi^{(3)}\big>_{n+1} 
    -\big< \Phi_*^{(1)}\ ,\ \Phi^{(3)}\big>_{n+1}   \Big)
\\ &\hskip0.4in
-\de\mu_n  \Big( 
   \big< \Phi_*^{(3)}+\De\Phi_*^{(3)}\ , \ \Phi^{(1)}+\De\Phi^{(1)}\big>_{n+1}  
+  \big< \Phi_*^{(1)}+\De\Phi_*^{(1)}\ , \ \Phi^{(3)}+\De\Phi^{(3)}\big>_{n+1}  
           \Big)
\\  
 &\hskip0.4in 
+\big< \De\Phi_*^{(3)}\ ,\ D_{n+1}\Phi^{(1)}\big>_{n+1}
+\big< \Phi_*^{(3)} + \De\Phi_*^{(3)} \ ,\ D_{n+1}\De\Phi^{(1)} \big>_{n+1}
         \\  &\hskip0.8in
+\big<  D_{n+1}^*\Phi_*^{(1)}\ ,\  \De\Phi^{(3)} \big>_{n+1}
+ \big< \De\Phi_*^{(1)}\ ,\ D_{n+1}(\Phi^{(3)}+\De\Phi^{(3)})\big>_{n+1}  
\end{align*}
By  \cite[Proposition \propBGEphivepssoln.c]{BGE},
\begin{align*}
D_{n+1}^{(*)}\Phi^{(1)}_{(*)}
&= Q_{n+1}^*\fQ_{n+1}\psi_{(*)} -(Q_{n+1}^*\fQ_{n+1}Q_{n+1}-L^2\mu_n) \bbbB _{(*)}\psi_*
\\
&=Q_{n+1}^*\fQ_{n+1}\big(\psi_{(*)}-Q_{n+1}\Phi^{(1)}_{(*)}\big)+L^2\mu_n\Phi^{(1)}_{(*)}
\end{align*}
This leads to a cancellation between lines 1,2,5,6 and the last 
two lines in the formula for $\,\cL_4 \big(A^\var)\,$. 
Inserting the decomposition
$\,
\De\Phi_{(*)}^{(3)} =\varphi_{(*)}^{(c)} +\varphi_{(*)}^{(l)}(\de\cV)
\, $
we get
\begin{equation*}
\cL_4 \big(A^\var(\psi_*,\psi,\de\mu_n,\de\cV)\big) 
= A_{\de\cV}(\psi_*,\psi,\de\cV)-B(\psi_*,\psi)
\end{equation*}
with
\begin{align*}
A_{\de\cV}(\psi_*,\psi,\de\cV)
&= \big<\De\Phi^{(1)}_*,
        Q_{n+1}^*\fQ_{n+1} Q_{n+1}\varphi^{(l)}(\de\cV) \big>_0 
\\&\hskip0.2in   +\big<\varphi_*^{(l)}(\de\cV),  
            Q_{n+1}^*\fQ_{n+1}Q_{n+1}\De\Phi^{(1)}\big>_0
\\ &\hskip0.2in
- L^2\mu_n \Big( \big< \varphi_*^{(l)}(\de\cV), \ \De\Phi^{(1)}\big>_{n+1}
+  \big< \De\Phi_*^{(1)}, \ \varphi^{(l)}(\de\cV)\big>_{n+1} \Big)
\\ &\hskip0.2in
-\de\mu_n \Big(  \big<\varphi_*^{(l)}(\de\cV), \ \Phi^{(1)}+\De\Phi^{(1)}\big>_{n+1}\! 
+\!\big< \Phi_*^{(1)}+\De\Phi_*^{(1)},\ \varphi^{(l)}(\de\cV)\big>_{n+1}  \Big)
\\  &\hskip0.2in
+\big< \varphi_*^{(l)}(\de\cV), D_{n+1}\De\Phi^{(1)} \big>_{n+1}
+ \big< D_{n+1}^* \De\Phi_*^{(1)},\varphi^{(l)}(\de\cV)\big>_{n+1}  
\end{align*}
linear in $\de\cV$ and
\begin{align*}
B(\psi_*,\psi)
&=-\big<\De\Phi^{(1)}_*\,,\,
        Q_{n+1}^*\fQ_{n+1} Q_{n+1}\varphi^{(c)} \big>_0 
   -\big<\varphi_*^{(c)}\,,\,  
            Q_{n+1}^*\fQ_{n+1}Q_{n+1}\De\Phi^{(1)}\big>_0
\\ &\hskip0.3in
-\big<Q_{n+1}\De\Phi^{(1)}_*\ ,\ \fQ_{n+1} Q_{n+1}\Phi^{(3)} \big>_0
-\big<\Phi_*^{(3)}\ ,\ Q_{n+1}^*\fQ_{n+1}Q_{n+1}\De\Phi^{(1)}\big>_{n+1} 
\\ &\hskip0.3in
-\bbbs\cV_n(\phi_*,\phi)  \Big|
    ^{\phi_{(*)}=\Phi_{(*)}^{(1)}+\De\Phi_{(*)}^{(1)}}   
    _{\phi_{(*)}=\Phi_{(*)}^{(1)}}
\\ &\hskip0.3in
+ L^2\mu_n \Big(\big< \Phi_*^{(3)}+\varphi_*^{(c)}, \ 
                              \De\Phi^{(1)}\big>_{n+1}
+ \big< \De\Phi_*^{(1)}, \ 
                              \Phi^{(3)}+\varphi^{(c)}\big>_{n+1} \Big)
\\ &\hskip0.3in
+\de\mu_n \Big(\!
   \big< \Phi_*^{(3)}+\varphi_*^{(c)}, \ 
                       \Phi^{(1)}+\De\Phi^{(1)}\big>_{n+1}  
\!+\!\big< \Phi_*^{(1)}+\De\Phi_*^{(1)}, \ 
                \Phi^{(3)}+\varphi^{(c)}\big>_{n+1}  \Big)
\\  &\hskip0.3in
-\big< \Phi_*^{(3)}+\varphi_*^{(c)}, D_{n+1}\De\Phi^{(1)} \big>_{n+1}
- \big<  D_{n+1}^* \De\Phi_*^{(1)}, (\Phi^{(3)}+\varphi^{(c)})\big>_{n+1}
\end{align*}
independent of $\de\cV$. By \eqref{eqnRENPhiDeltaPhiestimates} and
\eqref{eqnRENnormsvarphi}, there is a constant $\cc_1$ such that
\begin{equation*}
\lun A_{\de\cV} \run
   \le \cc_1 |\de\mu_n|\,\|\de V\|_{2m}\,\bar\ka^4
\qquad\qquad   
\lun B\run
   \le \cc_1 |\de\mu_n|\sfrac{\fv_{n}}{L}\,\bar\ka^4
\end{equation*}
Therefore,  by Lemma \ref{lemRENpsitophi}, \eqref{eqnRENfluctParts} and the estimate
on $\de\mu_n$ in Lemma \ref{lemRENrenormchem}
\begin{align*}
\| \Om_4(A_{\de\cV}) \|_{2m}& 
         \le  \cc_2  \|\de V\|_{2m}\, \sfrac{\fe_\fl(n)}{\bar\ka^2}
\qquad \qquad
\|\Om_4(B)\|_{2m} 
      \le \cc_2  \sfrac{\fv_{n}}{L}\sfrac{\fe_\fl(n)}{\bar\ka^2}
\\
\|\Om_4( \cL_4(\tilde\cE_\fl) ))\|_{2m} 
          & \le  \cc_2 \sfrac{\fe_\fl(n)}{\bar\ka^4}
\end{align*}
Assuming that $\fv_0$ is small enough,  the linear operator 
$\,
\de\cV \mapsto \Om_4(A_{\de\cV}) 
\,$
has operator norm at most $\half$ with respect to the norm $\,\|\,\cdot\,\|_{2m}\,$.  Therefore the operator
\begin{equation*}
\de\cV \mapsto \de\cV+\Om_4(A_{\de\cV}) 
\end{equation*}
has an inverse $I_{\de V}$ whose operator norm is bounded by $2$.
Set
\begin{equation*}
\de\cV_n =  I_{\de V} \Big( \Om_4\big(B - \cL_4(\tilde\cE_\fl) \big)\Big)
\end{equation*}
By \cite[(\eqnPARestrad.b)]{PAR1},
\begin{equation*}
\|\de\cV_n\|_{2m}
 \le  2\,\cc_2\sfrac{\fe_\fl(n)}{\bar\ka^2}
\big( \sfrac{\fv_n}{L} + \sfrac{1}{\bar\ka^2} \big)
 \le  2(1+2\rrho_\bg)\,\cc_2\sfrac{\fe_\fl(n)}{\bar\ka^4}
\end{equation*}

\Item (b)
By \cite[(\eqnSAkernNorm), Remark \remHTpreciseinduction]{PAR1} and part (a),
\begin{align*}
\|\cV_{n+1}-\cV^{(u)}_{n+1}\|_{2m}
&\le \big\|\bbbs\big(\cV_n-\cV^{(u)}_n\big)\big\|_{2m} +\|\de\cV_n\|_{2m} \\
&\le \sfrac{1}{L}\big\|\cV_n-\cV^{(u)}_n\big\|_{2m} +\|\de\cV_n\|_{2m} \\
&\le \sfrac{\CC_{\de\cV}}{L^{n+1}} 
    \smsum_{\ell=1}^n \sfrac{L^\ell }{\ka(\ell)^4}\fe_\fl(\ell-1)  
    +\cc_{\de\cV}\sfrac{\fe_\fl(n)}{\ka(n+1)^4}  
\end{align*}
\end{proof}
 
\begin{lemma}[Garbage Collection from $\cR$]\label{lemRENrngarbage}
There is a constant $\cc_\gar$, depending only on $\Gam_\op$,  
$\GGa_\bg$ and $\rrho_\bg$, such that the following holds. There are
\begin{align*} 
&\cP_\cR=\sum_{\vp\in\fD}\cP_\cR^\vp\qquad\text{with}\quad
   \cP_\cR^\vp\in\fP_\vp\text{ for each }\vp\in\fD \\
&\cI_\cR\in\fP_\irr
\end{align*}
such that
\begin{equation*}
(\bbbs \cR_n)(\phi_*,\phi)\Big|
     ^{\phi_{(*)}=\phi_{(*)n+1}(\psi_*,\psi,\mu_{n+1},\cV_{n+1})}
     _{\phi_{(*)}=\phi_{(*)n+1}(\psi_*,\psi,L^2\mu_n,\bbbs\cV_n)}
=\Big[\cP_R\big(\tilde\psi_*,\tilde\psi\big)
     +\cI_R\big(\tilde\psi_*,\tilde\psi\big)
    \Big]_{\tilde\psi_{(*)}=(\psi_{(*)},\{\partial_\nu\psi_{(*)}\})}
\end{equation*}
Furthermore
\begin{align*}
\lln \cP_\cR^\vp\rln
&\le \cc_\gar|\de\mu_n| \big\|\bbbs\tilde\cR_n^\vp\big\|_m 
           \bar\ka^\vp 
\\
\lln \cI_\cR\rln 
&\le\ \cc_\gar\,\big(|\de\mu_n| \fv_{n+1}+\|\de\cV_n\|_{2m})\,
                              \bar\ka^2\,\fr(n,\CC_\cR)
\end{align*}
where  
\begin{equation*}
\fr(n,C)=\sum_{\vp\in\fD}
                     L^{5-\De(\vp)}\,\bar\ka^\vp\,\fr_\vp(n,C)
\end{equation*}
\end{lemma}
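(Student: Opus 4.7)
The plan is to attribute the entire difference to the shifts $\de\mu_n$ (in the chemical potential) and $\de\cV_n$ (in the interaction) that appear inside the arguments of the background field, and then to do bookkeeping by monomial type. I will treat each term $\bbbs\tilde\cR_n^{(\vp)}$, $\vp\in\fD$, separately, extracting its type-$\vp$ contribution (which will become $\cP_\cR^\vp$) and dumping everything else into $\cI_\cR$.

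First, borrowing the decomposition \eqref{eqndiffphiwithmun} from the proof of Lemma \ref{lemRENreninteraction}, together with the corresponding statement for the derivative fields supplied by \cite[Proposition \propBGEdephidemu]{BGE}, I write
\[
\phi_{(*)n+1}(\psi_*,\psi,\mu_{n+1},\cV_{n+1})-\phi_{(*)n+1}(\psi_*,\psi,L^2\mu_n,\bbbs\cV_n)
=\De\Phi^{(1)}_{(*)}+\De\Phi^{(\ge 3)}_{(*)},
\]
where $\De\Phi^{(1)}_{(*)}=\de\mu_n\,\tilde S(\de\mu_n)^{(*)}\bbbB_{(*)}\psi_{(*)}$ is linear in $\psi$ with $\lutn\De\Phi^{(1)}_{(*)}\rutn=O(|\de\mu_n|\bar\ka)$, and $\De\Phi^{(\ge 3)}_{(*)}$ starts at degree three in $\psi$ with norm $O\big((|\de\mu_n|\fv_{n+1}+\|\de\cV_n\|_{2m})\bar\ka^3\big)$. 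An analogous decomposition holds for $\partial_\nu\phi_{(*)n+1}$. For each $\vp\in\fD$, I then Taylor-expand $\bbbs\tilde\cR_n^{(\vp)}(\tilde\Phi'_*,\tilde\Phi')-\bbbs\tilde\cR_n^{(\vp)}(\tilde\Phi_*,\tilde\Phi)$ as a multilinear sum in which at least one field slot is occupied by a $\De\Phi^{(1)}$ or $\De\Phi^{(\ge 3)}$ factor and the remaining slots are filled by $\Phi^{(1)}$, $\Phi^{(\ge 3)}$, or their $\nu$-derivative analogues.

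Second, I identify the type-$\vp$ (in $\psi$) part: it is exactly the contribution in which one slot is filled by $\De\Phi^{(1)}$ and every other slot is filled by the linear-in-$\psi$ part of the background field or its $\nu$-derivative. Following the construction of $\Om$ in Lemma \ref{lemRENpsitophi} (i.e.\ expressing these linear pieces via the operators $\bbbB_{(*)}$ and $B^{(\pm)}_{n+1,L^2\mu_n,\nu}$ and then rewriting in terms of $\tilde\psi$), this defines $\cP_\cR^\vp\in\fP_\vp$. The bound $\lln\cP_\cR^\vp\rln\le\cc_\gar\,|\de\mu_n|\,\|\bbbs\tilde\cR_n^{(\vp)}\|_m\,\bar\ka^\vp$ then follows since the $\De\Phi^{(1)}$ factor contributes $|\de\mu_n|\bar\ka$, the remaining $|\vp|-1$ linear factors contribute the weight $\bar\ka^\vp/\bar\ka$, and the kernel of $\bbbs\tilde\cR_n^{(\vp)}$ contributes its $\|\cdot\|_m$ norm.

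Third, every other contribution lands in $\cI_\cR\in\fP_\irr$. There are three sources: (i) terms using $\De\Phi^{(\ge 3)}$ somewhere, which pick up a factor $(|\de\mu_n|\fv_{n+1}+\|\de\cV_n\|_{2m})$ and produce $\psi$-degree strictly larger than $|\vp|$; (ii) terms still using only $\De\Phi^{(1)}$ but with one or more $\Phi^{(\ge 3)}$-type factors in the remaining slots, which by \cite[Proposition \propBGEphivepssoln]{BGE} are bounded by $\fv_{n+1}\bar\ka^3$ per occurrence (thus subdominant to, and of strictly higher $\psi$-type than, $\cP_\cR^\vp$); and (iii) the $\Om_\irr$-type remainder produced when one converts the $\phi$-polynomial to a $\psi$-polynomial through Lemma \ref{lemRENpsitophi}. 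Bounding the kernels of $\tilde\cR_n^{(\vp)}$ via $\fr_\vp(n,\CC_\cR)$ and using $\|\bbbs\tilde\cR_n^{(\vp)}\|_m\le L^{5-\De(\vp)}\fr_\vp(n,\CC_\cR)$ from \cite[Lemma \lemSAscaletoscale.a]{PAR1} and then summing over $\vp\in\fD$ delivers the bound $\lln\cI_\cR\rln\le\cc_\gar\,(|\de\mu_n|\fv_{n+1}+\|\de\cV_n\|_{2m})\,\bar\ka^2\,\fr(n,\CC_\cR)$. The principal bookkeeping challenge, and the place where care is needed, is verifying that the monomials assembled into $\cI_\cR$ all have types outside $\fD_\rel$; this comes down to checking that each Taylor term not in the ``exactly one $\De\Phi^{(1)}$ plus linear background'' channel is of strictly higher $\psi$-degree than $|\vp|$, which is where the degree properties of $\De\Phi^{(\ge 3)}$ and of the cubic-and-higher parts of $\phi_{(*)n+1}$ are critical.
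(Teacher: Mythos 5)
Your overall strategy is the paper's: split the shift of the background field into a linear piece $\De\Phi^{(1)}_{(*)}$ of size $O(|\de\mu_n|\bar\ka)$ and a degree-$\ge 3$ piece $\De\Phi^{(\ge 3)}_{(*)}$ of size $O\big((|\de\mu_n|\fv_{n+1}+\|\de\cV_n\|_{2m})\bar\ka^3\big)$, expand each $\bbbs\tilde\cR_n^{(\vp)}$, keep the all-linear channel as $\cP_\cR^\vp$ and dump the rest into $\cI_\cR$. But there is a genuine gap in how you carve out the type-$\vp$ part. You declare $\cP_\cR^\vp$ to be \emph{exactly} the terms with one slot filled by $\De\Phi^{(1)}$ and all other slots filled by linear background pieces. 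For $|\vp|\ge 2$ the expansion also produces terms in which two or more slots carry $\De\Phi^{(1)}_{(*)}$ or $\partial_\nu\De\Phi^{(1)}_{(*)}$ and the remaining slots are linear; these are again polynomials in $\tilde\psi$ of type exactly $\vp\in\fD$, hence scaling/weight \emph{relevant}, and none of your sources (i)--(iii) accounts for them. They cannot be put into $\cI_\cR$, since $\cI_\cR$ must lie in $\fP_\irr$, so as written either your displayed identity or the membership $\cI_\cR\in\fP_\irr$ fails. The paper sidesteps this by defining $\cP_\cR^\vp$ as the \emph{full} difference of $\bbbs\tilde\cR_n^{(\vp)}$ evaluated at $\Phi^{(1)}_{(*)}+\De\Phi^{(1)}_{(*)}$ (and the corresponding derivative fields) minus its value at $\Phi^{(1)}_{(*)}$, i.e. it keeps all orders in $\De\Phi^{(1)}$; the stated bound survives because each additional $\De\Phi^{(1)}$ factor merely replaces a weight $\bar\ka$ by $|\de\mu_n|\bar\ka$ with $|\de\mu_n|\le\rrho_\bg$ bounded, so the whole difference is still $O(|\de\mu_n|)\,\|\bbbs\tilde\cR_n^{(\vp)}\|_m\,\bar\ka^\vp$. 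Your argument is repaired by the same modification.

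A second, lesser point: your source (iii), an ``$\Om_\irr$-type remainder'' from Lemma \ref{lemRENpsitophi}, does not occur here and indicates a direction mix-up. In this lemma one substitutes the background fields, which are already functions of $\psi_{(*)}$ and whose linear parts are $\bbbB_{(*)}\psi_{(*)}$ and $B^{(\pm)}_{n+1,L^2\mu_n,\nu}\psi_{(*)\nu}$, into a polynomial in $\tilde\phi_{(*)}$; this is the ``easy'' $\phi$-to-$\psi$ direction and yields a $\tilde\psi$-polynomial directly, with no left inverses and no $\Om_\irr$ remainder. Lemma \ref{lemRENpsitophi} is needed in Lemma \ref{lemRENcRcE}, not here. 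Apart from these two items your bookkeeping (the estimates on $\Phi^{(1)}_{(*)},\De\Phi^{(1)}_{(*)},\Phi^{(\ge3)}_{(*)},\De\Phi^{(\ge3)}_{(*)}$ from the BGE propositions, the kernel bound $\|\bbbs\tilde\cR_n^{(\vp)}\|_m\le L^{5-\De(\vp)}\fr_\vp(n,\CC_\cR)$, and the summation over $\vp\in\fD$ giving $\fr(n,\CC_\cR)$) matches the paper's proof.
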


\begin{proof}  Similar to \eqref{eqndiffphiwithmun} we write
\begin{align*}
\Phi_{(*)}&=\phi_{(*)n+1}(\psi_*,\psi,L^2\mu_n,\bbbs\cV_n)\\
\De\Phi_{(*)}&=\phi_{(*)n+1}(\psi_*,\psi,\mu_{n+1},\cV_{n+1})
                 -\phi_{(*)n+1}(\psi_*,\psi,L^2\mu_n,\bbbs\cV_n)
\end{align*}
and, for each $\vp\in\fD$,
\begin{align*}
\cR_\var^\vp(\tilde\psi_*,\tilde\psi)
=\big(\bbbs\tilde\cR_n^\vp\big)(\tilde\phi_*,\tilde\phi)\Big| 
      ^{\phi_{(*)}=\Phi_{(*)}+\De\Phi_{(*)},\ \ 
        \phi_{(*)\nu}=\partial_\nu\Phi_{(*)}+\partial_\nu\De\Phi_{(*)}}
     _{\phi_{(*)}=\Phi_{(*)},\ \ 
        \phi_{(*)\nu}=\partial_\nu\Phi_{(*)}}
\end{align*}
As in the proof of Lemma \ref{lemRENreninteraction}  we decompose
\begin{align*}
 \Phi_{(*)}(\psi_*,\psi)
     &=\Phi_{(*)}^{(1)}(\psi_*,\psi) + \Phi_{(*)}^{(\ge 3)}(\psi_*,\psi)\\
 \De\Phi_{(*)}(\psi_*,\psi)
     &=\De\Phi_{(*)}^{(1)}(\psi_*,\psi) 
    + \De\Phi_{(*)}^{(\ge 3)}(\psi_*,\psi)\\
 \partial_\nu\Phi_{(*)}(\psi_*,\psi)
     &=\partial_\nu\Phi_{(*)}^{(1)}(\psi_*,\psi) 
    + \partial_\nu\Phi_{(*)}^{(\ge 3)}(\psi_*,\psi)\\
 \partial_\nu\De\Phi_{(*)}(\psi_*,\psi)
     &=\partial_\nu\De\Phi_{(*)}^{(1)}(\psi_*,\psi) 
    + \partial_\nu\De\Phi_{(*)}^{(\ge 3)}(\psi_*,\psi)
\end{align*}
where the superscript ``$(1)$'' signifies the part that is of degree precisely
one in $\psi_{(*)(\nu)}$ and the superscript ``$(\ge 3)$'' signifies the part 
that is of degree at least three in $\psi_{(*)(\nu)}$.
By \cite[Propositions \propBGEphivepssoln\ and \propBGEdephidemu]{BGE},
\begin{equation}\label{eqnRENPhiDeltaPhiestimatesB}
\begin{aligned}
\luTN \Phi_{(*)}^{(1)}\ruTN 
            &\le \GGa_\Phi \bar\ka &
\luTN \Phi_{(*)}^{(\ge 3)}\ruTN
            &\le \GGa_\Phi\fv_{n+1} \bar\ka^3 \\
\luTN \De\Phi_{(*)}^{(1)}\ruTN
            &\le \GGa_\Phi |\de\mu_n| \bar\ka &
\luTN \De\Phi_{(*)}^{(\ge 3)}\ruTN
  &\le \GGa_\Phi \big(|\de\mu_n| \fv_{n+1}+\|\de\cV_n\|_{2m})\bar\ka^3 \\
\luTN \partial_\nu\Phi_{(*)}^{(1)}\ruTN 
            &\le \GGa_\Phi \bar\ka' &
\luTN \partial_\nu\Phi_{(*)}^{(\ge 3)}\ruTN
       &\le \GGa_\Phi\fv_{n+1} \bar\ka^2\bar\ka' \\
\luTN \partial_\nu\De\Phi_{(*)}^{(1)}\ruTN
            &\le \GGa_\Phi |\de\mu_n| \bar\ka'\quad &
\luTN \partial_\nu\De\Phi_{(*)}^{(\ge 3)}\ruTN
            &\le \GGa_\Phi \big(|\de\mu_n| \fv_{n+1}+\|\de\cV_n\|_{2m})
                       \bar\ka^2\bar\ka' 
\end{aligned}
\end{equation}
We correspondingly decompose
\begin{align*}
\cR_\var^\vp(\tilde\psi_*,\tilde\psi)
=\cR_\lo^\vp(\tilde\psi_*,\tilde\psi)
 + \cR_\ho^\vp(\tilde\psi_*,\tilde\psi)
\end{align*}
where
\begin{align*}
\cR_\lo^\vp(\tilde\psi_*,\tilde\psi)
&=\big(\bbbs\tilde\cR_n^\vp\big)(\tilde\phi_*,\tilde\phi)\Big| 
      ^{\phi_{(*)}=\Phi_{(*)}^{(1)}+\De\Phi_{(*)}^{(1)},\ \ 
     \phi_{(*)\nu}=\partial_\nu\Phi_{(*)}^{(1)}+\partial_\nu\De\Phi_{(*)}^{(1)}}
     _{\phi_{(*)}=\Phi_{(*)}^{(1)},\ \ 
        \phi_{(*)\nu}=\partial_\nu\Phi_{(*)}^{(1)}}
\end{align*}
Clearly $\cR_\lo^\vp\in\fP_\vp$ and
\begin{align*}
\lln \cR_\lo^\vp \rln 
 &\le \cc_3\,\|\bbbs\tilde\cR_n^\vp\|_m   
                             \,|\de\mu_n|\,\bar\ka^\vp\\
\lln \cR_\ho^\vp \rln 
 &\le \cc_3\,\|\bbbs\tilde\cR_n^\vp\|_m   
     \,\big(|\de\mu_n| \fv_{n+1}+\|\de\cV_n\|_{2m})\,\bar\ka^2\ \bar\ka^\vp
\end{align*}
Set
\begin{equation*}
\cP_\cR^\vp=\cR^\vp_\lo 
\qquad\qquad
\cI_\cR=\sum_{\vp\in\fD}\cR^\vp_\ho 
\end{equation*}
The estimates follow by Lemma \ref{lemOSFmainlem}.c and
the bound $\big\|\tilde\cR_n^{(\vp)}\big\|_m \le \fr_\vp(n,\CC_\cR)$ 
of \cite[Remark \remHTpreciseinduction]{PAR1}.
\end{proof}

\begin{lemma}\label{lemRENcRcE}
There exist 
\begin{itemize}[leftmargin=*, topsep=2pt, itemsep=2pt, parsep=0pt]
\item 
a polynomial 
$\,\tilde\cR_{n+1}(\tilde\phi_*,\tilde\phi)=\sum_{\vp\in\fD}
               \tilde\cR_{n+1}^{(\vp)}(\tilde\phi_*,\tilde\phi)\,$
on  $\,\tilde\cH_{n+1}^{(0)}\times \tilde\cH_{n+1}^{(0)}\,$, with
each $\tilde\cR_{n+1}^{(\vp)}$ being an $\fS$ invariant polynomial of 
type $\vp$, and
\item
an $\fS$ invariant analytic function  
$\,
\tilde\cE_{n+1,2}(\tilde\psi_*,\tilde\psi)
\,$
on a neighbourhood of the origin in 
$\tilde\cH_0^{(n+1)}\times \tilde\cH_0^{(n+1)}$ with
$\tilde\cE_{n+1,2}(0,0)=0$, whose power series expansion does not
contain scaling/weight relevant monomials
\end{itemize}
such that 
\begin{equation}\label{eqnRENrearrangeBis}
\begin{split}
&\Big[- A_{n+1}(\psi_*,\psi,\phi_*,\phi,L^2\mu_n,\bbbs \cV_n)
+(\bbbs \cR_n)(\phi_*,\phi)\Big]_
{\phi_{(*)}=\phi_{(*)n+1}(\psi_*,\psi,L^2\mu_n,\bbbs\cV_n)}\\\noalign{\vskip-0.05in}
&\hskip2.5in
+\tilde\cE_\fl(\tilde\psi_*,\tilde\psi)\Big|_
    {\tilde\psi_{(*)}=(\psi_{(*)},\{\partial_\nu\psi_{(*)}\})}\\
&=\Big[- A_{n+1}(\psi_*,\psi,\phi_*,\phi,\mu_{n+1},\cV_{n+1})
         +\cR_{n+1}(\phi_*,\phi)\Big]_
{\phi_{(*)}=\phi_{(*)n+1}(\psi_*,\psi,\mu_{n+1},\cV_{n+1})}\\   
     \noalign{\vskip-0.05in}
&\hskip2.5in
+\tilde\cE_{n+1,2}(\tilde\psi_*,\tilde\psi)\Big|_
    {\tilde\psi_{(*)}=(\psi_{(*)},\{\partial_\nu\psi_{(*)}\})}
\end{split}
\end{equation}
where
\begin{equation*}
\cR_{n+1}(\phi_*,\phi )= \tilde\cR_{n+1}\big((\phi_*,\{\partial_\nu\phi_*\}),(\phi,\{\partial_\nu\phi\})
          \big)
\end{equation*}
Furthermore
\ 
\begin{enumerate}[label=(\alph*), leftmargin=*]
\item 
there exists a constant $\CC_\cR$, depending only on 
$\Gam_\op$, $\GGa_\bg$, $\rrho_\bg$ and $m$,
such that if \cite[(\eqnHTinductiveRestimates)]{PAR1} holds for $n$, then 
\begin{align*}
\big\|\tilde\cR_{n+1}^{(\vp)}\big\|_m
 &\le \fr_\vp(n+1,\CC_\cR)
\end{align*}
\item 
there exists a constant $\CC_\ren$, depending only on 
$\Gam_\op$, $\GGa_\bg$, $\rrho_\bg$ and $m$, such that
\begin{equation*}
\lln\tilde\cE_{n+1,2}\rln 
\le \CC_\ren\,\fe_\fl(n)
\end{equation*}
\end{enumerate}
\end{lemma}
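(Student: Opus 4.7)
The plan is to start from the identity we must establish, move all terms to one side, and show that the resulting expression is the sum of (i) pieces that can be absorbed into a polynomial $\tilde\cR_{n+1}$ in the $\phi$ fields and (ii) scaling/weight irrelevant pieces that constitute $\tilde\cE_{n+1,2}$. The definitions of $\mu_{n+1}$, $\cV_{n+1}$ (from Lemmas \ref{lemRENrenormchem} and \ref{lemRENreninteraction}) are engineered precisely so that the would--be relevant obstructions cancel.

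First I would exploit linearity of $A_{n+1}$ in its $\cV$--argument and the definition \eqref{eqnRENdefAvar} of $A^\var$ to rewrite
\begin{align*}
&A_{n+1}(\psi_*,\psi,\phi_*,\phi,\mu_{n+1},\cV_{n+1})\big|_{\phi_{(*)}=\phi_{(*)n+1}(\cdot,\mu_{n+1},\cV_{n+1})}
-A_{n+1}(\psi_*,\psi,\phi_*,\phi,L^2\mu_n,\bbbs\cV_n)\big|_{\phi_{(*)}=\phi_{(*)n+1}(\cdot,L^2\mu_n,\bbbs\cV_n)}\\
&\qquad=\de\cV_n(\phi_*,\phi)\big|_{\phi_{(*)}=\phi_{(*)n+1}(\cdot,\mu_{n+1},\cV_{n+1})}
+A^\var(\psi_*,\psi,\de\mu_n,\de\cV_n)
\end{align*}
and apply Lemma \ref{lemRENrngarbage} to write
$(\bbbs\cR_n)\big|_{\text{new bg}}-(\bbbs\cR_n)\big|_{\text{old bg}}=\cP_\cR+\cI_\cR$
with $\cP_\cR=\sum_{\vp\in\fD}\cP_\cR^\vp\in\fP_\fD$ and $\cI_\cR\in\fP_\irr$. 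Substituting these identities into \eqref{eqnRENrearrangeBis} reduces the problem to exhibiting the equality
$$
\tilde\cE_\fl-A^\var(\psi_*,\psi,\de\mu_n,\de\cV_n)-\de\cV_n(\phi_*,\phi)\big|_{\text{new bg}}
+\cP_\cR+\cI_\cR
\;=\;\big[\cR_{n+1}-\bbbs\cR_n\big]_{\text{new bg}}+\tilde\cE_{n+1,2}
$$
with suitable $\cR_{n+1}\in\fR_\fD$ and $\tilde\cE_{n+1,2}\in\fP_\irr$.

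Next I would apply the decomposition \eqref{eqnRENfluctParts} of $\tilde\cE_\fl$, together with the analogous decomposition of $A^\var$ into its $\ell$, $\cL_4$, $\cL_\fD$, and $\cI$ parts (note that, since $A^\var$ is independent of the $\psi_{(*)\nu}$'s, $\cL_\fD(A^\var)$ only contributes to type $(6,0,0)$). By Lemma \ref{lemRENrenormchem}, the local mass terms cancel: $\ell(A^\var(\cdot,\cdot,\de\mu_n,\de\cV_n))+\ell(\tilde\cE_\fl)=0$. For the type--$(4,0,0)$ part, I would apply Lemma \ref{lemRENpsitophi} to convert $\cL_4(\tilde\cE_\fl-A^\var)(\psi_*,\psi)$ into
$$
\Om_4\bigl(\cL_4(\tilde\cE_\fl-A^\var)\bigr)(\phi_*,\phi)
+\Om_6\bigl(\cL_4(\tilde\cE_\fl-A^\var),\de\cV_n\bigr)(\phi_*,\phi)
+\Om_\irr\bigl(\cL_4(\tilde\cE_\fl-A^\var),\de\cV_n\bigr)
$$
evaluated at the new background field. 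By Lemma \ref{lemRENreninteraction}, the $\Om_4$ summand equals $-\de\cV_n(\phi_*,\phi)\big|_{\text{new bg}}$, which exactly cancels the interaction mismatch produced in the first paragraph. Similarly, for each $\vp\in\fD$ I would use Lemma \ref{lemRENpsitophi} to convert $\cL_\fD(\tilde\cE_\fl)$ and $\cP_\cR^\vp$ into $\phi$--polynomials $\Om(\cdot)\in\fR_\vp$ plus $\Om_\irr(\cdot)\in\fP_\irr$ remainders.

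Collecting gives
$$
\tilde\cR_{n+1}
=\bbbs\tilde\cR_n
+\Om\bigl(\cL_\fD(\tilde\cE_\fl)\bigr)
+\Om_6\bigl(\cL_4(\tilde\cE_\fl-A^\var),\de\cV_n\bigr)
+\Om(\cP_\cR),
$$
and $\tilde\cE_{n+1,2}$ equal to $\cI(\tilde\cE_\fl)-\cI(A^\var)+\cI_\cR$ plus the three $\Om_\irr$ remainders. Since $\Om$ and $\Om_6$ land in $\fR_\fD$ and since $\bbbs\tilde\cR_n\in\fR_\fD$, $\tilde\cR_{n+1}$ has the correct form, and $\tilde\cE_{n+1,2}\in\fP_\irr$ contains no scaling/weight relevant monomials by construction. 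The bound (a) on $\|\tilde\cR_{n+1}^{(\vp)}\|_m$ reduces to combining the inductive bound $\|\tilde\cR_n^{(\vp)}\|_m\le\fr_\vp(n,\CC_\cR)$, the scaling estimate $\|\bbbs\tilde\cR_n^{(\vp)}\|_{2m}\le L^{5-\De(\vp)}\|\tilde\cR_n^{(\vp)}\|_m$ from Lemma \ref{lemOSFmainlem}.c, and the norm estimates for $\Om$, $\Om_6$, $\cP_\cR^\vp$; the main step is verifying that for each $\vp\in\fD$ the growth in $\bbbs\tilde\cR_n^{(\vp)}$ is more than compensated by the contraction factor $L^{5-\De(\vp)}\bar\ka^\vp/\ka^\vp$ together with the $\fv_0$--smallness of the new contributions, which is exactly the content of \cite[Lemma \lemPARestrpnC]{PAR1} and is the step where the precise choice of $\fD$ matters. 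The bound (b) on $\tilde\cE_{n+1,2}$ follows from Proposition \ref{propOSFmainprop} (for $\cI(\tilde\cE_\fl)$), Lemma \ref{lemRENrngarbage} (for $\cI_\cR$), and the $\Om_\irr$ estimates of Lemma \ref{lemRENpsitophi}, each of which is of order $\fe_\fl(n)$ up to the explicit weight factor $\bar\ka^2\|\bbbs\cV_n+\de\cV_n\|_{2m}/\bar\ka^\vp$ that is absorbed using Remark \ref{remOSFvnmum}. The main obstacle is the bookkeeping that matches each irrelevant remainder with a factor small enough to beat the $\fe_\fl(n)$ threshold; nothing deeper is required.
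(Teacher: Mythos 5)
Your route is essentially the paper's (decompose via $\ell$, $\cL_4$, $\cL_\fD$, $\cI$; use Lemmas \ref{lemRENrenormchem}, \ref{lemRENreninteraction}, \ref{lemRENpsitophi}, \ref{lemRENrngarbage}; set $\tilde\cR_{n+1}=\bbbs\tilde\cR_n+\Om(\cdots)$ and dump the irrelevant remainders into $\tilde\cE_{n+1,2}$), but as written it has two genuine problems. The first is a sign inconsistency that is not cosmetic: rearranging \eqref{eqnRENrearrangeBis} produces the combination $\tilde\cE_\fl + A^\var + \de\cV_n$ (evaluated at the new background field) $-\,\cP_\cR-\cI_\cR$, whereas your displayed reduction carries $\tilde\cE_\fl - A^\var - \de\cV_n + \cP_\cR+\cI_\cR$. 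The two cancellations you then invoke only operate on the plus combination: Lemma \ref{lemRENrenormchem} gives $\ell(A^\var)+\ell(\tilde\cE_\fl)=0$, which kills the mass term of $\tilde\cE_\fl+A^\var$ but not of $\tilde\cE_\fl-A^\var$, and Lemma \ref{lemRENreninteraction} gives $\Om_4\big(\cL_4(A^\var+\tilde\cE_\fl)\big)=-\de\cV_n$, which cancels a $+\de\cV_n$ term, not the $-\de\cV_n$ you carry. So the mechanism you cite is the right one, but it is incompatible with your own bookkeeping; fixing it coherently also flips the sign with which $\cP_\cR$ enters $\tilde\cR_{n+1}$.

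The second problem is substantive. Your parenthetical claim that, because $A^\var$ contains no $\psi_{(*)\nu}$, $\cL_\fD(A^\var)$ contributes only to type $(6,0,0)$ is false: the quadratic part $A^\var_2$ is of the relevant type $(2,0,0)$, and the localization operation (Lemma \ref{lemLlocalize}.c, Proposition \ref{propLOCproj}) rewrites it as a local mass term \emph{plus monomials of types} $(1,1,0)$, $(0,1,1)$, $(0,0,2)$ built from derivative fields. These, together with the degree six part of $\tilde A^\var$, are exactly the $\cP_A^\vp$ of \eqref{eqnRENestPAIA}, of size $O(|\de\mu_n|\bar\ka^\vp)=O(\fe_\fl(n)\bar\ka^\vp/\bar\ka^2)$, and they must be converted by $\Om$ and absorbed into $\tilde\cR_{n+1}^{(\vp)}$. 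Your final formula for $\tilde\cR_{n+1}$ contains no $\Om(\cP_A)$ term at all (your $\Om_6(\cL_4(\cdot),\de\cV_n)$ is a different object, namely the sextic correction from the $\psi\to\phi$ conversion of the quartic part). Leaving these pieces out means either \eqref{eqnRENrearrangeBis} fails or they land in $\tilde\cE_{n+1,2}$, which would then contain scaling/weight relevant monomials, contradicting the statement; it also leaves the bound in part (a) incomplete. Finally, bounding $\cP_A^{(6,0,0)}$ and $\cI_A=\cI(\tilde A^\var)$ by $O(\fe_\fl(n))$ is not automatic: it needs the structural analysis of $\tilde A^\var - A_2^\var - \cL_4(\tilde A^\var)$ (every surviving term contains a $\Phi^{(\ge 3)}_{(*)}$ or $\De\Phi^{(\ge 3)}_{(*)}$, and, apart from the $\de\mu_n$ and $\de\cV_n$ terms, a $\De\Phi^{(1)}_{(*)}$ or $\De\Phi^{(\ge 3)}_{(*)}$), together with the estimates \eqref{eqnRENPhiDeltaPhiestimates}--\eqref{eqnRENPhiDeltaPhiestimatesB}; this part of the work is absent from your sketch.
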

\noindent
Part (a) provides our choice for the $\CC_\cR$ of 
\cite[Remark \remHTpreciseinduction]{PAR1}.

\begin{proof}
Set
\begin{align*}
\tilde A^{\var}(\psi_*,\psi)  
&=A_{n+1}(\psi_*,\psi,\phi_*,\phi,\mu_{n+1},\cV_{n+1})\Big|_
       {\phi_{(*)}=\phi_{(*)n+1}(\psi_*,\psi,\mu_{n+1},\cV_{n+1})}\\
&\hskip 3cm
- A_{n+1}(\psi_*,\psi,\phi_*,\phi,L^2\mu_n, \bbbs\cV_n)\Big|_
      {\phi_{(*)}=\phi_{(*)n+1}(\psi_*,\psi,L^2\mu_n, \bbbs\cV_n)}\\
&=A^{\var}(\psi_*,\psi,\de\mu_n,\de\cV_n) 
+ \de\cV_n(\phi_*,\phi)\Big|_{
             \phi_{(*)}=\phi_{(*)n+1}(\psi_*,\psi,\mu_{n+1},\cV_{n+1})}
\end{align*}
and
$$
\cP_A=\cL_\fD(\tilde A^\var)
\qquad\qquad
\cI_A=\cI(\tilde A^\var)
$$
By Corollary \ref{corLOCproj}.a and Lemmas \ref{lemRENrenormchem},
\ref{lemRENpsitophi} and \ref{lemRENreninteraction},
\begin{equation}\label{eqnRENgoodproptildeAvar}
\begin{split}
\tilde A^\var(\psi_*,\psi)
&=-\ell(\tilde\cE_\fl)\hskip-2.5pt\int\hskip-3pt dx \ \psi_*(x) \psi(x)
- \cL_4(\tilde\cE_\fl) \\
&\hskip1in+\Big[\cP_A\big(\tilde\psi_*,\tilde\psi\big)
     +\cI_A\big(\tilde\psi_*,\tilde\psi\big)
    \Big]_{\tilde\psi_{(*)}=(\psi_{(*)},\{\partial_\nu\psi_{(*)}\})}
\end{split}
\end{equation}

By \eqref{eqnRENavarbndsa} and \cite[Propositions \propBGEphivepssoln\ and \propBGEdephidemu]{BGE}
\begin{equation}\label{eqnRENavarbnds}
\begin{split}
\lun A^\var_2 \run 
   &\le \cc_A\big[ 1  +L^2|\mu_n|\big]|\de\mu_n|\bar\ka^2\\
\lun \tilde A^\var -  A_2^\var -  \cL_4( \tilde A^\var) \run 
   &\le \cc_A (\fv_{n+1}|\de\mu_n|+\|\de\cV_n\|_{2m})  \bar\ka^4\\
\end{split}
\end{equation}
To prove the second bound, use \eqref{eqnRENPhiDeltaPhiestimates}, 
\eqref{eqnRENPhiDeltaPhiestimatesB} and
\begin{equation*}
\luTN D_{n+1}^{(*)}\Phi_{(*)}^{(\ge 3)}\ruTN
            \le \GGa_\Phi\fv_{n+1} \bar\ka^3 \qquad
\luTN D_{n+1}^{(*)} \De\Phi_{(*)}^{(\ge 3)}\ruTN 
     \le \GGa_5\ \big(|\de\mu_n|\fv_{n+1}+\|\de\cV_n\|_{2m}\big)\bar\ka^3
\end{equation*}
and also the observation that if one substitutes 
\begin{equation*}
\phi_{(*)n+1}(\psi_*,\psi,\mu_{n+1},\cV_{n+1})=
\Phi_{(*)}^{(1)}+\Phi_{(*)}^{(\ge 3)}
                    +\De\Phi_{(*)}^{(1)}+\De\Phi_{(*)}^{(\ge 3)}
\end{equation*}
and 
$\phi_{(*)n+1}(\psi_*,\psi,L^2\mu_n,\bbbs\cV_n)=
\Phi_{(*)}^{(1)}+\Phi_{(*)}^{(\ge 3)}$
into 
$\tilde A^\var -  A_2^\var -  \cL_4( \tilde A^\var)$
and expands out, then
\begin{itemize}[leftmargin=*, topsep=2pt, itemsep=2pt, parsep=0pt]
\item
every surviving term must contain at least one
$\Phi_{(*)}^{(\ge 3)}$ or $\De\Phi_{(*)}^{(\ge 3)}$
and 
\item
every surviving term, except for those coming from 
$\de\mu_n\< \phi_*,\phi\>_{n+1}$
and $\de\cV_n(\phi_*,\phi)$, must contain at least one
$\De\Phi_{(*)}^{(1)}$ or $\De\Phi_{(*)}^{(\ge 3)}$
\end{itemize}

\noindent 
So, if we write
 $\cP_A=\sum_{\vp\in\fD}\cP_A^\vp$ with $\cP_A^\vp\in\fP_\vp$,
then, by Proposition \ref{propLOCproj},
\begin{equation}\label{eqnRENestPAIA}
\begin{split}
\lln \cP_A^\vp\rln &\le  18\cc_\loc\cc_A|\de\mu_n|\bar\ka^\vp 
\qquad  \text{if }\vp \ne (6,0,0)
 \\
 \lln \cP_A^{(6,0,0)}\rln \,, \ \lln \cI_A\rln
 &\le \cc_A (\fv_{n+1}|\de\mu_n|+\|\de\cV_n\|_{2m})  \bar\ka^4
\end{split}
\end{equation}
Set
\begin{align*}
\tilde\cR_{n+1}^{(\vp)}
&=\bbbs \tilde\cR_n^{(\vp)}
  +\Om_\vp\big(\cL_\fD(\tilde\cE_\fl) +\cP_A-\cP_\cR\big)\qquad\text{for $\vp\in\fD$}\\
\tilde\cE_{n+1,2}
&=\cI(\tilde\cE_\fl)+\cI_A-\cI_\cR+\Om_\irr\big(\cL_\fD(\tilde\cE_\fl) +\cP_A-\cP_\cR,\de\cV_n\big)
\end{align*}
Here $\Om_\vp(\cP)$ is the part of $\Om(\cP)$ in $\fR_\vp$.
The identity \eqref{eqnRENrearrangeBis} now follows from \eqref{eqnRENfluctParts}, 
\eqref{eqnRENdefAvar}, \eqref{eqnRENgoodproptildeAvar} and 
Lemmas \ref{lemRENpsitophi} and \ref{lemRENrngarbage}.

Write
$
\ \cL_\fD(\tilde\cE_\fl) =\sum_{\vp\in\fD} \cP_\cE^\vp\ 
$
with each $\cP_\cE^\vp\in\fP^\vp$. 
By \eqref{eqnRENfluctParts}, \eqref{eqnRENestPAIA}, Lemma \ref{lemRENrngarbage},
and the estimate $\,|\de\mu_n|\le\sfrac{3}{\bar\ka^2}\fe_\fl(n)\,$ 
of  Lemma \ref{lemRENrenormchem},
\begin{align*}
\llN\cP_\cE^\vp+\cP^\vp_A-\cP^\vp_\cR\rlN
&\le \Big[1+\cc_4\sfrac{\bar\ka'}{\bar\ka}\Big]^2
        \,\fe_\fl(n)
 + 3\,\cc_\gar \sfrac{\bar\ka^\vp}{\bar\ka^2}
       \|\bbbs\tilde\cR_n^\vp\|_m\fe_\fl(n)
\\
& \hskip 1cm + \cc_A\de_{\vp,\,(6,0,0)}
 (3\fv_{n+1}\bar\ka^2 \fe_\fl(n)+\|\de\cV_n\|_{2m} \bar\ka^4)  
\\
&\le \Big\{  \Big[1+\cc_5\big( \sfrac{\bar\ka'}{\bar\ka}
               +\fv_{n+1}\bar\ka^2\big)\Big]^2
  + \cc_5  \sfrac{\bar\ka^\vp}{\bar\ka^2} \|\bbbs\tilde\cR_n^\vp\|_m 
     \Big\}\,\fe_\fl(n)
\end{align*}
with a constant $\cc_5$ depends only on $\cc_4$, $\cc_\gar$, $\cc_A$ 
and $\cc_{\de\cV}$.
For the second inequality we used Lemma \ref{lemRENreninteraction}.
Using  Lemma \ref{lemOSFmainlem}.c and 
the bound $\big\|\tilde\cR_n^{(\vp)}\big\|_m \le \fr_\vp(n,\CC_\cR)$ 
of \cite[(\eqnHTinductiveRestimates)]{PAR1},
\begin{equation}\label{eqnRENmessA}
\llN\cL_\fD(\tilde\cE_\fl) +\cP_A-\cP_\cR\rlN
\le \Big\{ 4\Big[1+\cc_5\big( \sfrac{\bar\ka'}{\bar\ka}
               +\fv_{n+1}\bar\ka^2\big)\Big]^2
         + \sfrac{\cc_5}{\bar\ka^2}\fr(n,\CC_\cR)
\Big\}\fe_\fl(n)
\end{equation}

\Item (a) 
By Lemmas \ref{lemRENpsitophi} and \ref{lemOSFmainlem}.c
and \cite[(\eqnHTinductiveRestimates)]{PAR1},
\begin{align*}
&\big\|\tilde\cR_{n+1}^{(\vp)}\big\|_m
\le \big\|\bbbs\tilde\cR_n^{(\vp)}\big\|_m
     + \sfrac{\cc_\Om}{\bar\ka^\vp}
            \llN\cP_\cE^\vp+\cP^\vp_A-\cP^\vp_\cR\rlN\\
&\hskip0.2in\le \big(1
   +\sfrac{\cc_\Om \cc_5}{\bar\ka^2}\fe_\fl(n)\big)
   \big\|\bbbs \tilde\cR_n^{(\vp)}\big\|_m
   + \sfrac{\cc_\Om}{\bar\ka^\vp}
   \Big[1+\cc_5\big( \sfrac{\bar\ka'}{\bar\ka}     
               +\fv_{n+1}\bar\ka^2\big)\Big]^2
        \,\fe_\fl(n)\\
&\hskip0.2in\le \big(1
   +\sfrac{\cc_\Om \cc_5}{\bar\ka^2}\fe_\fl(n)\big)
       L^{5-\De(\vp)}\fr_\vp(n,\CC_\cR)
   + \sfrac{\cc_\Om}{\bar\ka^\vp}
   \Big[1+\cc_5\big( \sfrac{\bar\ka'}{\bar\ka}     
               +\fv_{n+1}\bar\ka^2\big)\Big]^2
        \,\fe_\fl(n)
\end{align*}
If $\CC_\cR$ is large enough, depending only on 
$\cc_\Om$, $\cc_5$ and $\rrho_\bg$, then
by \cite[(\eqnPARformforrpnC)]{PAR1},
\begin{align*}
&\big\|\tilde\cR_{n+1}^{(\vp)}\big\|_m\le \big(1
   +\sfrac{\CC_\cR}{\bar\ka^2}\fe_\fl(n)\big)
   L^{5-\De(\vp)}\fr_\vp(n,\CC_\cR)
   + \CC_\cR\sfrac{\fe_\fl(n)}{\ka^\vp(n+1)}\\
&\hskip0.2in\le 
   \fr_\vp(0)  L^{(5-\De(\vp))(n+1)} \,\Pi_0^{n+1}(\CC_\cR)
+\CC_\cR \sum_{\ell=1}^{n+1} L^{(5-\De(\vp))(n+1-\ell)}
    \sfrac{ \fe_\fl(\ell-1) }{\ka^\vp(\ell)} \,\Pi_\ell^{n+1}(\CC_\cR)
   \\
&\hskip0.2in= \fr_\vp(n+1,\CC_\cR)
\end{align*}

\Item (b) We have, by \eqref{eqnRENfluctParts}, \eqref{eqnRENestPAIA}, 
Lemmas \ref{lemRENrngarbage} and \ref{lemRENpsitophi} and \eqref{eqnRENmessA},
\begin{align*}
\lln\tilde\cE_{n+1,2}\rln
&\le \Big[1+18\cc_\loc\sfrac{\bar\ka'}{\bar\ka} \Big]^2\,\fe_\fl(n)
    + \cc_A\,(\fv_{n+1}|\de\mu_n|+\|\de\cV_n\|_{2m})\bar\ka^4
             \\ &\hskip0.5in
    + \cc_\gar(\fv_{n+1}|\de\mu_n|+\|\de\cV_n\|_{2m})\bar\ka^2\,\fr(n,\CC_\cR)
             \\ &\hskip0.5in
    +\cc_\Om \fv_{n+1}\bar\ka^2 
       \Big\{ 4\Big[1+\cc_5\big( \sfrac{\bar\ka'}{\bar\ka}
               +\fv_{n+1}\bar\ka^2\big)\Big]^2
         + \sfrac{\cc_5}{\bar\ka^2}\fr(n,\CC_\cR)
\Big\}\fe_\fl(n) \\
&\le \half \CC_\ren\Big\{1+(\fv_{n+1}+\sfrac{1}{\bar\ka^2}\big)\fr(n,\CC_\cR)
       \Big\}\,\fe_\fl(n) \\
&\le \CC_\ren\,\fe_\fl(n) 
\end{align*}
by \cite[(\eqnPARestrad.b) and Lemma \lemPARestrpnC.a]{PAR1},
provided $\fv_0$ is small enough that the hypothesis 
\begin{equation*}
\eps |\log\fv_0| \ge 2\log (1+\CC_\cR)\,\Pi_0^\infty(\CC_\cR)
\end{equation*}
of \cite[Lemma \lemPARestrpnC]{PAR1} is satisfied.
\end{proof}

\begin{lemma}[Properties of $\de\mu_n$]\label{lemRENmunppties}
There is a constant $\LLa_{\de\mu}$, depending only on $L$, $\Gam_\op$, 
$\GGa_\bg$, $\rrho_\bg$ and $m$, such that
the following holds. Set $\mu^*_0=\mu_0$ and, for $n\ge 0$, $\de\mu^*_n=\mu^*_{n+1}-L^2\mu^*_n$.
Then, for $n\ge 0$,
\begin{enumerate}[label=(\alph*), leftmargin=*]
\item 
$
\big|\ell\big(\tilde\cE_\fl\big)-\de\mu^*_n\big|
\le  \LLa_{\de\mu}\ \fv_0 \big(\fv_0^{\sfrac{1}{3}-5\eps}
               + |\mu_n|
               + r_n^2e^{-r_n^2}
               \big)\bar\ka_\fl^6
$

\item  $|\de\mu_n-\de\mu^*_n|
\le  \LLa_{\de\mu}\ \fv_0^{1-7\eps}  \big(\fv_0^{\sfrac{1}{3}-5\eps}
               + L^{2n}(\mu_0-\mu_*)
               \big)L^{3\eps(n+1)}
$

\item 
$|\mu_{n+1}-\mu^*_{n+1}|
\le L^{2(n+1)}\,\fv_0^{1-8\eps}
     \smsum_{\ell=1}^{n+1} \sfrac{1}{L^{(2-3\eps)\ell}}
        \big[\fv_0^{\sfrac{1}{3}-5\eps} +L^{2\ell}(\mu_0-\mu_*)\big]$
\end{enumerate}
\end{lemma}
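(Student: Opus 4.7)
For part (a), I will apply the mass extraction operator $\ell$ to the two bounds of Proposition \ref{propOSFcEtwo}. The function $M_n(\psi_*,\psi)$ is a pure bilinear $\psi_*\psi$ monomial with an explicit kernel involving $V_n$, $S_n(\mu_n)$, $Q_n^*\fQ_n$ and $C^{(n)}$, and one expects (from the construction in Appendix \ref{appMustar}) that $\de\mu^*_n$ is defined as $\ell(M_n)$. Splitting
\begin{equation*}
\ell(\tilde\cE_\fl)-\de\mu^*_n = \ell\bigl(P_{\psi_*\psi}\tilde\cE_\fl - M'_n\bigr) + \ell\bigl(M'_n-M_n\bigr)
\end{equation*}
and using that $|\ell(\cdot)|\le\|\cdot\|_{2m}$ on bilinear monomials of type $\psi_*\psi$, Proposition \ref{propOSFcEtwo} immediately yields a bound of the form $\LLa_2\fv_0(\fv_0^{1/3-5\eps}+|\mu_n|)\bar\ka_\fl^6+\LLa_2\fv_n r_n^2 e^{-r_n^2}$. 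Since $\fv_n\le\fv_0$ and $\bar\ka_\fl\ge 1$, both pieces can be combined into the single bound claimed in (a).

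For part (b), I will exploit the fact that formula \eqref{eqnRENlavar} gives $\ell(A^\var)$ in a closed rational form. Direct algebraic simplification gives
\begin{equation*}
-\ell\bigl(A^\var(\,\cdot\,,\,\cdot\,,\de\mu,\de\cV)\bigr)
 =\frac{\de\mu}{\bigl(1-\tfrac{L^2\mu_n}{a_{n+1}}\bigr)\bigl(1-\tfrac{L^2\mu_n+\de\mu}{a_{n+1}}\bigr)}.
\end{equation*}
Combined with the defining equation $\ell(A^\var(\,\cdot\,,\,\cdot\,,\de\mu_n,\de\cV_n))+\ell(\tilde\cE_\fl)=0$ from Lemma \ref{lemRENrenormchem}, this yields
\begin{equation*}
\de\mu_n-\ell(\tilde\cE_\fl)
=\ell(\tilde\cE_\fl)\Bigl[\bigl(1-\tfrac{L^2\mu_n}{a_{n+1}}\bigr)\bigl(1-\tfrac{L^2\mu_n+\de\mu_n}{a_{n+1}}\bigr)-1\Bigr].
\end{equation*}
The bracketed factor is $O(|\mu_n|+|\de\mu_n|)$. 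Using $|\de\mu_n|\le\tfrac{3}{\bar\ka^2}\fe_\fl(n)$ from Lemma \ref{lemRENrenormchem} and $|\mu_n|\le 2L^{2n}(\mu_0-\mu_*)+\fv_0^{1-\eps}$ from Remark \ref{remOSFvnmum}, a triangle inequality with part (a) gives (b) after the weights $\bar\ka=L^{\eta(n+1)}\ka$, $\bar\ka'=L^{\eta'(n+1)}\ka'$ and $\bar\ka_\fl=4r_n$ are converted into the $\fv_0^{1-7\eps}L^{3\eps(n+1)}$ normalization via \cite[Definition \defHTbasicnorm]{PAR1}.

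For part (c), I will telescope. From $\mu_{n+1}-\mu^*_{n+1}=L^2(\mu_n-\mu^*_n)+(\de\mu_n-\de\mu^*_n)$ and the initial condition $\mu_0=\mu^*_0$, one obtains
\begin{equation*}
\mu_{n+1}-\mu^*_{n+1}=\sum_{\ell=0}^{n} L^{2(n-\ell)}\bigl(\de\mu_\ell-\de\mu^*_\ell\bigr).
\end{equation*}
Inserting the bound of part (b) and re-indexing with $\ell'=\ell+1$ produces exactly the sum $L^{2(n+1)}\sum_{\ell=1}^{n+1}L^{-(2-3\eps)\ell}\bigl[\fv_0^{1/3-5\eps}+L^{2\ell}(\mu_0-\mu_*)\bigr]$, the extra $\fv_0^\eps$ slack between $\fv_0^{1-7\eps}$ and $\fv_0^{1-8\eps}$ being absorbed in $\LLa_{\de\mu}$ for $\fv_0$ small.

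The main obstacle is the bookkeeping in part (b): the small exponents $\eta,\eta',\eps$ and the weights $\bar\ka,\bar\ka',\bar\ka_\fl,r_n$ must be converted back and forth between the abstract form in \eqref{eqnOSFabbrevwt} and the explicit powers of $L$ and $\fv_0$ specified in \cite[Definition \defHTbasicnorm]{PAR1} before the right-hand side collapses to the clean form claimed. A secondary technical point is identifying $\de\mu^*_n$ with $\ell(M_n)$ --- this relies on the construction in Appendix \ref{appMustar} and is what makes part (a) meaningful.
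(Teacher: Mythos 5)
Your parts (b) and (c) track the paper's own argument: the closed rational form you derive for $\ell\big(A^\var(\,\cdot\,,\,\cdot\,,\de\mu,\de\cV)\big)$ is exactly the content of \eqref{eqnRENlavar}, your bracket $\big(1-\sfrac{L^2\mu_n}{a_{n+1}}\big)\big(1-\sfrac{L^2\mu_n+\de\mu_n}{a_{n+1}}\big)-1=O(|\mu_n|+|\de\mu_n|)$ is the same estimate the paper extracts, and your telescoping in (c) is just the unrolled form of the paper's induction step $|\mu_{n+1}-\mu^*_{n+1}|\le L^2|\mu_n-\mu^*_n|+|\de\mu_n-\de\mu^*_n|$ combined with the inductive bound of \cite[Remark \remHTpreciseinduction]{PAR1}; these are fine.

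The genuine gap is in part (a). You treat $\de\mu^*_n=\ell(M_n)$ as essentially a definition, but $\mu^*_n$ is fixed in advance: $\mu^*_0=\mu_0$ and, by \cite[Remark \remHTpreciseinduction]{PAR1}, $\mu^*_n=L^{2n}\mu_0-\sfrac{2}{|\cX_0^{(n)}|}\int_{\cX_n^4}V^{(u)}_n\,S_n$, so $\de\mu^*_n=\mu^*_{n+1}-L^2\mu^*_n$ is built from the \emph{unrenormalized} kernels $V^{(u)}_n,V^{(u)}_{n+1}$ and the $\mu$-free propagators $S_n,S_{n+1}$, whereas $M_n$ involves $V_n$, $S_n(\mu_n)$, $S_{n+1}(L^2\mu_n)$, $Q_n^*\fQ_n$ and $C^{(n)}$. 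Proving $\ell(M_n)=\de\mu^*_n+O\big((|\mu_n|+\fv_0^{\frac{2}{3}-6\eps})\fv_n\big)$ is the actual content of part (a) --- it is where the $|\mu_n|$ term in the bound comes from --- and it requires: evaluating $S_{n+1}(L^2\mu_n)^{(*)}Q_{n+1}^*\fQ_{n+1}$ on the constant function via Remark \ref{remMXconsteigen}, giving the factor $\big(\sfrac{a_{n+1}}{a_{n+1}-L^2\mu_n}\big)^2=1+O(|\mu_n|)$; replacing $S_n(\mu_n)$ by $S_n$ and $V_n$ by $V^{(u)}_n$ using $\|S_n(\mu_n)-S_n\|_{2m}\le\Gam_\op|\mu_n|$ and $\|V_n-V^{(u)}_n\|_{2m}\le\CC_{\de\cV}\fv_0^{\frac{2}{3}-6\eps}\fv_n$; and, crucially, the block--spin identity $S_nQ_n^*\fQ_n C^{(n)}\fQ_nQ_nS_n=L^2\,\bbbs^{-1}S_{n+1}\bbbs-S_n$ of \cite[Remark \remBSedA.c]{BlockSpin} together with the kernel scaling of \cite[Lemma \lemPoPscaling.a]{POA} and the scaling relation between $V^{(u)}_n$ and $V^{(u)}_{n+1}$, which is what turns $\ell(M_n)$ into precisely the difference $\mu^*_{n+1}-L^2\mu^*_n$ (for $n=0$ the analogous step uses $C^{(0)}(x_4,x_1)=\sfrac{1}{L^3}S_1(\bbbl^{-1}x_4,\bbbl^{-1}x_1)$). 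None of this is in your proposal, and pointing to Appendix \ref{appMustar} does not supply it: that appendix compares $\mu^*_n$ with $L^{2n}(\mu_0-\mu_*)$ and nowhere relates $\ell(M_n)$ to $\de\mu^*_n$.
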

\noindent
The bound on $|\mu_{n+1}-\mu_{n+1}^*|$ in part (c) is
exactly the bound of \cite[Remark \remHTpreciseinduction]{PAR1} with $n$ replaced by
$n+1$.

\begin{proof} (a)
The monomials $M_n(\psi_*,\psi)$ of Proposition \ref{propOSFcEtwo}.b
are translation invariant with respect
to $\cX_0^{(n+1)}$, despite the fact that $C^{(n)}$ is only translation 
invariant with respect to $\cX_{-1}^{(n+1)}$. 
Using Corollary \ref{corLOCproj}.b and Remark \ref{remMXconsteigen}\ 
and using $1_0^{(1)}$ to denote the constant function on $\cX_0^{(1)}$
that always takes the value $1$, and using $M_0(x'_2,x'_3)$ to denote
the kernel of $M_0$,
\begin{align*}
\ell(M_0)
&=\sfrac{1}{|\cX_0^{(1)}|}\int_{\cX_0^{(1)}} dx'_2 dx'_3\ M_0(x'_2,x'_3)\\
&=-\sfrac{2}{L^3|\cX_0^{(1)}|}\int_{\cX_0} dx_1 \cdots dx_4\  
V_0(x_1,x_2,x_3,x_4)\ 
  (S_1(L^2\mu_0)Q_1^* \fQ_1 1_0^{(1)})(\bbbl^{-1}x_2)\\
\noalign{\vskip-0.05in}
&\hskip1.9in  
  ({S_1(L^2\mu_0)}^*Q_1^* \fQ_1 1_0^{(1)})(\bbbl^{-1}x_3)\ 
  C^{(0)}(x_4,x_1)\\
&=-\sfrac{2}{L^3|\cX_0^{(1)}|}\big(\sfrac{a_1}{a_1-L^2\mu_0}\big)^2
   \int_{\cX_0} dx_1 \cdots dx_4\  V_0(x_1,x_2,x_3,x_4)\ C^{(0)}(x_4,x_1)\\
&=-\sfrac{2}{L^3|\cX_0^{(1)}|}
   \int_{\cX_0} dx_1 \cdots dx_4\  V_0(x_1,x_2,x_3,x_4)\ C^{(0)}(x_4,x_1)
   +O\big(|\mu_0|\fv_0\big)
\end{align*}
Recalling that
\begin{equation*}
 \quad
S_1=(D_1+Q_1^*\fQ_1 Q_1)^{-1} \quad
\fQ_1=a\bbbone\quad
D_1 = L^2\ \bbbl_*^{-1} D_0 \bbbl_*\quad
Q_1 = \bbbl_*^{-1}\,Q \,\bbbl_*
\end{equation*}
we have
\begin{align*}
\sfrac{1}{L^2}C^{(0)}
=(a Q^*Q+L^2 D_0)^{-1}
={(\bbbl_*Q^*_1\fQ_1Q_1\bbbl_*^{-1}+ \bbbl_* D_1\bbbl_*^{-1})}^{-1}
=\bbbl_*S_1\bbbl_*^{-1}
\end{align*}
or, in terms of kernels,
\begin{equation*}
C^{(0)}(x_4,x_1)
=\sfrac{1}{L^3}S_1(\bbbl^{-1} x_4,\bbbl^{-1} x_1)
\end{equation*}
by \cite[Lemma \lemPoPscaling.a]{POA}. So
\begin{align*}
\ell(M_0)
&=-\sfrac{2}{L^6|\cX_0^{(1)}|}
   \int_{\cX_0} dx_1 \cdots dx_4\  V_0(x_1,x_2,x_3,x_4)\ 
   S_1(\bbbl^{-1} x_4,\bbbl^{-1} x_1) +O\big(|\mu_0|\fv_0\big)\\
&=-\sfrac{2}{|\cX_0^{(1)}|}
   \int_{\cX_1} du_1 \cdots du_4\  
         V^{(u)}_1(u_1,u_2,u_3,u_4)\ 
   S_1(u_4,u_1) +O\big(|\mu_0|\fv_0\big)\\
&=\de\mu^*_0  +O\big(|\mu_0|\fv_0\big)
\end{align*}

Similarly, for $n\ge 1$, using $1_0^{(n+1)}$ to denote the constant function 
on $\cX_0^{(n+1)}$ that always takes the value $1$, and using 
$M_n(x_2,x_3)$ to denote the kernel of $M_n$,
\begin{align*}
\ell(M_n)
&=\sfrac{1}{|\cX_0^{(n+1)}|}\int_{\cX_0^{(n+1)}} dx_2 dx_3\ M_n(x_2,x_3)\\
&=-\sfrac{2}{L^3|\cX_0^{(n+1)}|}\int_{\cX_n} du_1 \cdots du_4\  V_n(u_1,u_2,u_3,u_4)\\
\noalign{\vskip-0.1in}
&\hskip1.5in \big(S_n(\mu_n) Q_n^* \fQ_n\,C^{(n)}\fQ_n\,Q_n 
                        S_n(\mu_n)\big)(u_4,u_1)\\
&\hskip1.5in  
  \big(S_{n+1}(L^2\mu_n)Q_{n+1}^* \fQ_{n+1}1_0^{(n+1)}\big)(\bbbl^{-1}u_2)\\
&\hskip1.5in  
  \big(S_{n+1}(L^2\mu_n)^*Q_{n+1}^* \fQ_{n+1}1_0^{(n+1)}\big)(\bbbl^{-1}u_3)\\
&=-\sfrac{2}{L^3|\cX_0^{(n+1)}|}\big(\sfrac{a_{n+1}}{a_{n+1}-L^2\mu_n}\big)^2
   \int_{\cX_n} du_1 \cdots du_4\  V_n(u_1,u_2,u_3,u_4)\\
\noalign{\vskip-0.05in}
&\hskip2in \big(S_n(\mu_n) Q_n^* \fQ_n\,C^{(n)}\fQ_n\,Q_n 
                        S_n(\mu_n)\big)(u_4,u_1)\\
&=-\sfrac{2}{L^3|\cX_0^{(n+1)}|}
   \int_{\cX_n} du_1 \cdots du_4\  V_n(u_1,u_2,u_3,u_4)
    \big(S_n Q_n^* \fQ_n\,C^{(n)}\fQ_n\,Q_n 
                        S_n\big)(u_4,u_1)
\\ \noalign{\vskip-0.05in} &\hskip2in 
     +O\big(|\mu_n|\fv_n\big)\\
&=-\sfrac{2}{L^3|\cX_0^{(n+1)}|}
   \int_{\cX_n}\!\! du_1 \cdots du_4\  V_n^{(u)}(u_1,u_2,u_3,u_4) 
           (S_n Q_n^* \fQ_n\,C^{(n)}\fQ_n\,Q_n S_n)(u_4,u_1)
\\ \noalign{\vskip-0.05in} &\hskip2in 
  +O\big(\big[|\mu_n|+\fv_0^{\frac{2}{3}-6\eps}\big]\fv_n\big)
\end{align*}
since $\sfrac{a_{n+1}}{a_{n+1}-L^2\mu_n}=1+O(|\mu_n|)$
and $\|S_n(\mu_n)-S_n\|_{2m} \le \Gam_\op |\mu_n|$, by
\cite[Proposition \POGmainpos]{POA},
and $\|V_n-V_n^{(u)}\|_{2m} \le \CC_{\de\cV} \fv_0^{\frac{2}{3}-6\eps}\fv_n$, 
by \cite[Remark \remHTpreciseinduction\ and Lemma \lemPARmunvn.b]{PAR1}.
By \cite[Remark \remBSedA.c]{BlockSpin}
\begin{equation*}
 S_nQ_n^*\fQ_n C^{(n)}\fQ_nQ_nS_n
= L^2\,\bbbs^{-1} S_{n+1}\bbbs -S_n
\end{equation*}
In terms of kernels, by \cite[Lemma \lemPoPscaling.a]{POA},
\begin{equation*}
\big(S_nQ_n^*\fQ_n C^{(n)}\fQ_nQ_nS_n\big)(u_4,u_1)
= \sfrac{1}{L^3} S_{n+1}(\bbbl^{-1}u_4,\bbbl^{-1}u_1) -S_n(u_4,u_1)
\end{equation*}
Now, by \cite[Definition \defHTbackgrounddomaction.a,b]{PAR1},
\begin{align*}
&\sfrac{2}{L^3|\cX_0^{(n+1)}|}
   \int_{\cX_n} du_1 \cdots du_4\  V_n^{(u)}(u_1,u_2,u_3,u_4)\ 
     \sfrac{1}{L^3} S_{n+1}(\bbbl^{-1}u_4,\bbbl^{-1}u_1)\\
&\hskip0.5in = \sfrac{2}{|\cX_0^{(n+1)}|}
\int_{\cX_{n+1}} dv_1 \cdots dv_4\  
    V_{n+1}^{(u)}(v_1,v_2,v_3,v_4)\   S_{n+1}(v_4,v_1)
\end{align*}
so that
\begin{align*}
\ell(M_n)
&=-\sfrac{2}{|\cX_0^{(n+1)}|}
\int_{\cX_{n+1}} dv_1 \cdots dv_4\  
    V_{n+1}^{(u)}(v_1,v_2,v_3,v_4)\   S_{n+1}(v_4,v_1)
\\ \noalign{\vskip-0.05in} &\hskip0.2in 
+\sfrac{2}{L^3|\cX_0^{(n+1)}|}
   \int_{\cX_n} du_1 \cdots du_4\  V_n^{(u)}(u_1,u_2,u_3,u_4)\, 
           S_n(u_4,u_1)
\\ \noalign{\vskip-0.05in} &\hskip2in 
  +O\big(\big[|\mu_n|+\fv_0^{\frac{2}{3}-6\eps}\big]\fv_n\big)\\
&=\de\mu^*_n
   +O\big(\big[|\mu_n|+\fv_0^{\frac{2}{3}-6\eps}\big]\fv_n\big)
\end{align*}
since $|\cX_0^{(n+1)}|=\sfrac{1}{L^5}\cX_0^{(n)}$.
Using Corollary \ref{corLOCproj}.b,
Proposition \ref{propOSFcEtwo}.a,b gives the claim.

\Item (b) 
Recall, from Lemma \ref{lemRENrenormchem} and \eqref{eqnRENlavar}, that
$\de\mu_n$ obeys
\begin{equation*}
\frac{a_{n+1}^2}{a_{n+1}-L^2\mu_n-\de\mu_n}
  -\frac{a_{n+1}^2}{a_{n+1}-L^2\mu_n} 
= \ell(\tilde\cE_\fl)
\end{equation*}
As
\begin{align*}
\frac{a_{n+1}^2}{a_{n+1}-L^2\mu_n-\de\mu}
&=\frac{a_{n+1} }{1-\sfrac{L^2\mu_n+\de\mu}{a_{n+1}}}
=a_{n+1}\Bigg[1+\sfrac{L^2\mu_n+\de\mu}{a_{n+1}}
+\frac{ \big(\sfrac{L^2\mu_n+\de\mu}{a_{n+1}}\big)^2}
      {1-\sfrac{L^2\mu_n+\de\mu}{a_{n+1}}}\Bigg]\\
&=a_{n+1}+L^2\mu_n+\de\mu
+\frac{ (L^2\mu_n+\de\mu)^2}
      {a_{n+1}-L^2\mu_n-\de\mu}
\end{align*}
the left hand side
\begin{align*}
\frac{a_{n+1}^2}{a_{n+1}-L^2\mu_n-\de\mu}\bigg|_{\de\mu=0}^{\de\mu=\de\mu_n}
&=\de\mu_n+\frac{ (L^2\mu_n+\de\mu_n)^2}{a_{n+1}-L^2\mu_n-\de\mu_n}
         -\frac{ (L^2\mu_n)^2}{a_{n+1}-L^2\mu_n}\\
&=\de\mu_n+\frac{ \de\mu_n\big[a_{n+1}(2L^2\mu_n+\de\mu_n)
                                -(L^2\mu_n+\de\mu_n)L^2\mu_n\big]}
        {\big(a_{n+1}-L^2\mu_n-\de\mu_n\big)\big(a_{n+1}-L^2\mu_n\big)}
\end{align*}
As $L^2|\mu_n|,|\de\mu_n|\le\sfrac{1}{4} a_{n+1}$,
\begin{align*}
|\de\mu_n-\ell(\tilde\cE_\fl)|
&=\bigg|\de\mu_n\frac{ a_{n+1}(2L^2\mu_n+\de\mu_n)
                                -(L^2\mu_n+\de\mu_n)L^2\mu_n}
        {\big(a_{n+1}-L^2\mu_n-\de\mu_n\big)\big(a_{n+1}-L^2\mu_n\big)}\bigg|\\
&\le \sfrac{12 L^2}{a_{n+1}} |\de\mu_n|\big(|\mu_n|+|\de\mu_n|\big) \\
&\le \sfrac{12 L^2}{a_{n+1}} \sfrac{3}{\bar\ka^2}\fe_\fl(n)
            \big(|\mu_n|+\sfrac{3}{\bar\ka^2}\fe_\fl(n)\big) 
   \qquad\text{(by Lemma \ref{lemRENrenormchem})}
\end{align*}
Part (a) and
\begin{equation*}
\sfrac{\fe_\fl(n)}{\bar\ka^2} =  L^{-(2\eta-\eta_\fl)n}\fv_0^{1-4\eps}
\le \fv_0^{1-4\eps} 
\end{equation*}
now implies, using Remark \ref{remOSFvnmum}, that
\begin{align*}
|\de\mu_n-\de\mu^*_n|
&\le  \LLa'_{\de\mu}\ \fv_0^{1-4\eps}  \big(\fv_0^{\sfrac{1}{3}-5\eps}
               + |\mu_n|
               + r_n^2e^{-r_n^2}
               \big)\bar\ka_\fl^6 \\
&\le  \LLa_{\de\mu}\ \fv_0^{1-7\eps}  \big(\fv_0^{\sfrac{1}{3}-5\eps}
               + L^{2n}(\mu_0-\mu_*)
               \big)L^{3\eps(n+1)}\\
\end{align*}
with a new $\LLa_{\de\mu}$.

\Item (c)  Since $\mu_{n+1}=L^2\mu_n+\de\mu_n$ and
$\mu^*_{n+1}=L^2\mu^*_n+\de\mu^*_n$, we have
\begin{align*}
\big|\mu_{n+1}-\mu^*_{n+1}\big|
&\le L^2\big|\mu_n-\mu^*_n\big|
    +\big|\de\mu_n-\de\mu^*_n\big| \\
&\le L^{2(n+1)}\,\fv_0^{1-8\eps}
     \smsum_{\ell=1}^{n+1} \sfrac{1}{L^{(2-3\eps)\ell}}
        \big[\fv_0^{\sfrac{1}{3}-5\eps} +L^{2\ell}(\mu_0-\mu_*)\big] 
\end{align*} 
by \cite[Remark \remHTpreciseinduction]{PAR1} and part (b).
\end{proof}
\subsubsection*{Completion of the Inductive Proof of \cite[Theorem \thmTHmaintheorem\  and Remark \remHTpreciseinduction]{PAR1}}

\begin{proof}
Set
\begin{align*}
\cR_{n+1}(\phi_*,\phi )
&= \tilde\cR_{n+1}\big((\phi_*,\{\partial_\nu\phi_*\}),(\phi,\{\partial_\nu\phi\})
          \big)
\\
\tilde\cE_{n+1}(\tilde\psi_*,\tilde\psi)
&=\tilde\cE_{n+1,1}(\tilde\psi_*,\tilde\psi)
       +\tilde\cE_{n+1,2}(\tilde\psi_*,\tilde\psi)\\
\cE_{n+1}(\psi_*,\psi )
&= \tilde\cE_{n+1}\big((\psi_*,\{\partial_\nu\psi_*\}),
                     (\psi,\{\partial_\nu\psi\})   \big)\\
\cZ_{n+1}&=\cZ_n \tilde N^{(n)}_\bbbt \cZ'_n
\end{align*}
where 
  $\tilde\cE_{n+1,1}$ was defined in Lemma \ref{lemOSFmainlem}, 
  $\tilde\cR_{n+1}$ and $\tilde\cE_{n+1,2}$ were defined in  
                          Lemma \ref{lemRENcRcE},
  $\tilde N^{(n)}_\bbbt$  was defined in  
            \cite[Definition \defHTapproximateblockspintr]{PAR1}
  and  $\cZ'_n$ was defined in Proposition \ref{propOSFmainprop}. 
Then, by \cite[Corollary \corSTmainCor]{PAR1}, Lemma \ref{lemOSFmainlem}.b,c, 
Proposition \ref{propOSFmainprop} and  Lemma \ref{lemRENcRcE},
\begin{align*}
&
\Big( (\bbbs \bbbt_n^{(SF)}) \circ(\bbbs \bbbt_{n-1}^{(SF)})\circ
   \cdots \circ (\bbbs \bbbt_0^{(SF)}) \Big)
\Big(e^{\cA_0(\psi^*,\psi) } \Big)
\\
&\hskip 0.5in
= \sfrac{1}{\cZ_{n+1}}\exp\Big\{- A_{n+1}(\psi_*,\psi, \phi_{*n+1}, \phi_{n+1},\,\mu_{n+1},\cV_{n+1})\\\noalign{\vskip-0.1in}
&\hskip2in+\cR_{n+1}(\phi_{*n+1},\phi_{n+1})+\cE_{n+1}(\psi_*,\psi)
\Big\}
\end{align*}
The bounds on $|\mu_{n+1}-\mu_{n+1}^*|$, 
$\big\|\cV_{n+1}-\cV^{(u)}_{n+1}\big\|_{2m}$ and
$\big\|\tilde\cR_{n+1}^{(\vp)}\big\|_m$
required by \cite[Remark \remHTpreciseinduction]{PAR1} were proven in 
Lemmas \ref{lemRENmunppties}, \ref{lemRENreninteraction} and \ref{lemRENcRcE}. 
That these bounds in turn imply the bounds on 
$\big|\mu_{n+1}-L^{2(n+1)}(\mu_0-\mu_*)\big|$,
$\|\cV_{n+1}-\cV_{n+1}^{(u)}\|_{2m}$ and
$\big\|\tilde\cR_{n+1}^{(\vp)}\big\|_m$ specified in \cite[Theorem \thmTHmaintheorem]{PAR1} was pointed out in \cite[Remark \remHTpreciseinduction]{PAR1}.

By Lemma \ref{lemOSFmainlem}.b, Lemma \ref{lemRENcRcE} and
\cite[Theorem \thmTHmaintheorem]{PAR1},
$\tilde\cE_{n+1}$ does not contain any  scaling/weight relevant monomials
and
\begin{equation*}
\|\tilde\cE_{n+1}\|^{(n+1)} 
   \le L^5\,\sdf(\CC_\fl)\,\fv_0^\eps
       +\CC_\ren\,\fe_\fl(n)
  \le\fv_0^\eps
\end{equation*}
by Remark \ref{remOSFsdf} (with $L$ chosen big enough that $L^5\,\sdf(\CC_\fl)\le\half$) 
and \cite[(\eqnPARestrad.a)]{PAR1}. 
\end{proof}

\newpage
\appendix
\section{The Limiting Behaviour of $\mu_n^*$}\label{appMustar}

In \cite[\S\sectINTstartPoint]{PAR1}, we defined
\begin{align*}
\mu_*&=2\int_{{((\bbbz/L_\tp\bbbz) \times \bbbz^3)}^3} dx_1 \cdots dx_3\  
              \bV_0(0,x_1,x_2,x_3)\ \bD_0^{-1}(x_3,0) \\
\end{align*}
with $\bD_0=\bbbone - e^{-\bh_0} -e^{-\bh_0} \partial_0\,$ and
in \cite[Remark \remHTpreciseinduction]{PAR1}, we defined, for $n\ge 1$,
\begin{equation*}
\mu_n^* = L^{2n}\mu_0
 -\sfrac{2}{|\cX_0^{(n)}|}
\int_{\cX_n^4} du_1 \cdots du_4\  
    V_n^{(u)}(u_1,u_2,u_3,u_4)\ S_n(u_4,u_1)
\end{equation*}
From \cite[(\eqnHTdefcriticalmu)]{PAR1}, we see that there will be a well
developed potential well when $\mu_n$ is sufficiently positive for large
$n$. As $\mu_n\approx\mu_n^*$ (see \cite[(\eqnHTinductiveVmuestimates)]{PAR1}),
the following lemma shows that this is the case if $\mu_0-\mu_*$ is sufficiently
positive. That is why we expect $\mu_*$ to be the critical $\mu$, to leading
order in the coupling constant.

\begin{lemma}\label{lemOSImustar}
There is a constant $\cc_{\mu_*}$, depending only on $\Gam_\op$ and
$m$, such that
\begin{equation*}
\big|L^{2n}\big(\mu_0-\mu_*\big)-\mu_n^*\big|
\le \cc_{\mu_*}\fv_0
\end{equation*}
for all $1\le n\le n_p$.
\end{lemma}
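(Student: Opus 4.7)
The strategy is to reduce the desired bound to a direct comparison between the scale-$n$ integral appearing in the definition of $\mu_n^*$ and the continuum integral defining $\mu_*$. The definitions immediately give
\begin{equation*}
L^{2n}(\mu_0-\mu_*)-\mu_n^*
= \sfrac{2}{|\cX_0^{(n)}|}\int_{\cX_n^4} du_1\cdots du_4\ V_n^{(u)}(u_1,\ldots,u_4)\,S_n(u_4,u_1)\ -\ L^{2n}\mu_*,
\end{equation*}
so it suffices to bound the right hand side by $\cc_{\mu_*}\fv_0$, uniformly in $1\le n\le n_p$.

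The first step is to pull the integral back to the scale-$0$ lattice using the defining scaling relations for $V_n^{(u)}$ and $S_n$. As in the derivation of $\ell(M_n)$ in the proof of Lemma~\ref{lemRENmunppties}, repeated application of the identity $C^{(n)}=\sfrac{1}{L^3}S_{n+1}(\bbbl^{-1}\cdot\,,\bbbl^{-1}\cdot)$ (from \cite[Lemma \lemPoPscaling.a]{POA}) together with the iterative definition of $V_n^{(u)}$ (from \cite[Definition \defHTbackgrounddomaction]{PAR1}), and a use of translation invariance to fix $u_1=0$, allows the change of variables $u_i=\bbbl^n x_i$ to rewrite the integral in the form
\begin{equation*}
2L^{2n}\int_{\cX_0^3}dx_1\,dx_2\,dx_3\ \bV_0(0,x_1,x_2,x_3)\,G_n(x_3,0),
\end{equation*}
where $G_n$ is the appropriately rescaled scale-$n$ propagator viewed as an operator on $\cX_0$. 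The target $L^{2n}\mu_*/2 = L^{2n}\int\bV_0\,\bD_0^{-1}$ has exactly the same form, with $G_n$ replaced by the continuum propagator $\bD_0^{-1}$.

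Subtracting, the problem is thereby reduced to establishing
\begin{equation*}
\big|L^{2n}\!\textstyle{\int}\bV_0\,(G_n-\bD_0^{-1})\big| \le \cc_{\mu_*}\fv_0
\end{equation*}
uniformly in $n$. Since $\|\bV_0\|_{2m}\le \mathrm{const}\cdot\fv_0$, this will follow from a uniform-in-$n$ bound on $G_n-\bD_0^{-1}$ in the norm dual to that of $\bV_0$. The role of $\bD_0$ as the temporal-ultraviolet limit (established in \cite{UV} and summarized in \cite[Appendix \appSZrewrite]{PAR1}), combined with the operator estimates of \cite[Proposition \POGmainpos]{POA} applied to the Gaussian regulator $Q_n^*\fQ_nQ_n$, provides such a bound: after extracting the factor $L^{2n}$, the residual discrepancy remains $O(1)$. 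The main technical obstacle is precisely this uniform-in-$n$ propagator comparison; the rest of the argument is bookkeeping of Jacobians, translation-invariance manipulations, and the factor tracking already carried out one scale at a time in Lemma~\ref{lemRENmunppties}.
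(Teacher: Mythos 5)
Your opening reduction is correct and coincides with the paper's first step: after undoing the scaling one is left with comparing $2L^{2n}\int \bV_0(0,x_1,x_2,x_3)\,G_n(x_3,0)\,dx_1dx_2dx_3$, with $G_n(x,y)=\sfrac{1}{L^{3n}}S_n(\bbbl^{-n}x,\bbbl^{-n}y)$, against $L^{2n}\mu_*$. But the entire content of the lemma lies in the propagator comparison that you then assert rather than prove. Note first that a uniform\mbox{-}in\mbox{-}$n$ bound on $G_n-\bD_0^{-1}$, which is what you say would suffice, cannot suffice: because of the prefactor $L^{2n}$ you must show that $G_n$ agrees with $\bD_0^{-1}$ to order $L^{-2n}$ (in a sense integrated against $\bV_0$), uniformly for $1\le n\le n_p$. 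Your parenthetical ``after extracting the factor $L^{2n}$, the residual discrepancy remains $O(1)$'' is exactly the statement to be proven, and the tool you cite, \cite[Proposition \POGmainpos]{POA}, does not deliver it: it controls quantities like $\|S_n(\mu)-S_n\|_{2m}\le \Gam_\op|\mu|$, not the rate at which the rescaled lattice resolvent approaches the continuum-in-time kernel $\bD_0^{-1}$.

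The paper's proof has to work for this rate in three genuinely nontrivial steps, none of which appears in your proposal. First, $S_n$ is not translation invariant on $\cX_n$, so it is replaced by the translation-invariant approximant $S_n'=\big[D_n+a_n e^{-\De_n}\big]^{-1}$ using the pointwise bound $|S_n(u,u')-S'_n(u,u')|\le \Gam_\op e^{-2m|u-u'|}$ of \cite[Lemma \lemPOGSnppties.d]{POA}; this replacement costs only $O(\fv_0/L^n)$. Second, $S'_n$ is replaced by its spatially infinite-volume version $\bS'_n$, with error controlled by exponential decay (the source of the $O(\fv_0 e^{-mL_\sp})$ term). Third, and crucially, the rate is extracted in Fourier space: the difference of kernels is bounded by $\sfrac{a_n}{L^{2n}}\int \sfrac{d^4k}{(2\pi)^4}\,\big|\widehat\bD_0(k)+\sfrac{a_n}{L^{2n}}e^{-\bDe_n(\bbbl^n k)}\big|^{-1}\big|\widehat\bD_0(k)\big|^{-1}$, and this is \emph{not} naively $O(L^{-2n})$ because $|\widehat\bD_0(k)|^{-2}\sim (k_0^2+|\bk|^4)^{-1}$ fails to be integrable at the temporal zero mode. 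The $k_0\ne 0$ modes give $O(L^{-2n})$ directly, but the $k_0=0$ mode requires the regulator $\sfrac{a_n}{L^{2n}}e^{-\mathrm{const}\,L^{2n}\bk^2}$, the rescaling $\bk=\bp/L^n$, and the factor $1/L_\tp$, yielding $\mathrm{const}\,L^n/(L^{2n}L_\tp)$, which is $O(L^{-2n})$ only because $n\le n_p$; this is where the hypothesis on $n$ actually enters. Dismissing all of this as ``bookkeeping'' and an appeal to the temporal ultraviolet limit leaves the main estimate of the lemma unproved.
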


\begin{proof}
In \cite[(\eqnPOGSntransapprox)]{POA} we defined, on $\cX_n$, the operator
\begin{align*}
S_n'&=\big[D_n+a_n\exp\{-\De_n\}\big]^{-1}\qquad\text{where} \\
\De_n&=\partial_0^*\partial_0 
              +\big(\partial_1^*\partial_1+\partial_2^*\partial_2
                     +\partial_3^*\partial_3\big)\quad\text{and}\quad
a_n=a\big(1 +\smsum_{j=1}^{n-1}\sfrac{1}{L^{2j}}\big)^{-1}
\end{align*}
It is fully translation invariant with respect to $\cX_n$,
is exponentially decaying, and has the same local singularity as $S_n$.
Precisely, we proved in \cite[Lemma \lemPOGSnppties.d]{POA} that
\begin{equation*}
\big|S_n(u,u')-S'_n(u,u')\big|\le \Gam_\op e^{-2m |u-u'|}
\end{equation*}
so that
\begin{align*}
&\sfrac{2}{|\cX_0^{(n)}|}
\bigg|\int_{\cX_n^4} du_1 \cdots du_4\  
    V^{(u)}_n(u_1,u_2,u_3,u_4)\   
   \big\{S_n(u_4,u_1)-S'_n(u_4,u_1)\big\}\bigg|\\
&\hskip1in\le \sfrac{2\Gam_\op}{|\cX_0^{(n)}|}
\int_{\cX_n^4} du_1 \cdots du_4\ |V^{(u)}_n(u_1,u_2,u_3,u_4)|
\le  2\Gam_\op\sfrac{\fv_0 }{L^n}
\end{align*}
and, by \cite[Definition \defHTbackgrounddomaction.b]{PAR1},
\begin{align}
L^{2n}\mu_0-\mu_n^*
&=\sfrac{2}{|\cX_0^{(n)}|}
\int_{\cX_n^4} du_1 \cdots du_4\  
    V^{(u)}_n(u_1,u_2,u_3,u_4)\   
   S'_n(u_4,u_1)
   +O\big(\sfrac{\fv_0}{L^n}\big)\nonumber\\
&=\sfrac{2L^{14n}}{|\cX_0^{(n)}|}
\int_{\cX_n^4} du_1 \cdots du_4\  
   V_0(\bbbl^n u_1,\bbbl^n u_2,\bbbl^n u_3,\bbbl^n u_4)\   
   S'_n(u_4,u_1)
   +O\big(\sfrac{\fv_0}{L^n}\big)\nonumber\displaybreak[0]\\
&=\sfrac{2}{L^{6n}|\cX_0^{(n)}|}
\int_{\cX_0^4} dx_1 \cdots dx_4\  
   V_0(x_1,x_2,x_3,x_4)\   
   S'_n(\bbbl^{-n} x_4,\bbbl^{-n} x_1)
   +O\big(\sfrac{\fv_0}{L^n}\big)\nonumber\\
&=\sfrac{2}{L^n|\cX_0|}
\int_{\cX_0^4} dx_1 \cdots dx_4\  
   V_0(x_1,x_2,x_3,x_4)\   
   S'_n(\bbbl^{-n} x_4,\bbbl^{-n} x_1)
   +O\big(\sfrac{\fv_0}{L^n}\big)\nonumber\\
&=\sfrac{2}{L^n}
\int_{\cX_0^3} dx_1 \cdots dx_3\  
   V_0(0,x_1,x_2,x_3)\,  
   S'_n(\bbbl^{-n} x_3,0)
   +O\big(\sfrac{\fv_0}{L^n}\big)
\label{eqnOSIa}
\end{align}
The operator $S'_n$ acts on on $L^2(\cX_n)$ with, as in 
\cite[Definition \defHTbackgrounddomaction.a]{PAR1},
\begin{equation*}
\cX_n=\big(\veps_n^2\bbbz/L_\tp\veps_n^2\bbbz\big)\times
         \big(\veps_n\bbbz^3/L_\sp\veps_n\bbbz^3\big)
\qquad\text{where}\quad
\veps_n=\sfrac{1}{L^n}
\end{equation*}
It may be expressed as the spatial periodization of an operator $\bS'_n$ 
on $L^2(\bcX_n)$ where 
\begin{equation*}
\bcX_n=\big(\veps_n^2\bbbz/L_\tp\veps_n^2\bbbz\big)
                            \times\veps_n\bbbz^3
\end{equation*}
We define $\bS'_n$ in terms of its Fourier transform
\begin{align*}
\bS'_n(u,u')
&=\int_{\hat\bcX_n} \hat \bS'_n(p)\ e^{ip\cdot(u-u')}
             \sfrac{d^4p}{(2\pi)^4}
\end{align*}
where the dual space
\begin{equation*}
\hat\bcX_n=\big(\sfrac{2\pi}{L_\tp}L^{2n}\bbbz/2\pi L^{2n}\bbbz\big)
\times \big(\bbbr^3/2\pi L^n\bbbz^3\big)
\end{equation*}
and the integral
\begin{align*}
\int_{\hat\bcX_n} f(p)\ \sfrac{d^4p}{(2\pi)^4}
= \sum_{p_0\in \sfrac{2\pi}{L_\tp}L^{2n}\bbbz/2\pi L^{2n}\bbbz}
    \sfrac{L^{2n}}{L_\tp}\ 
       \int_{\bbbr^3/2\pi L^n\bbbz^3} f(p_0,p_1,p_2,p_3)\ 
   \sfrac{dp_1 dp_2 dp_3}{(2\pi)^3}
\end{align*}
The Fourier transform
\begin{equation*}
\hat\bS'_n(p)=\big[\widehat\bD_n(p)
                    +a_n\exp\{-\bDe_n(p)\}\big]^{-1}
\quad\text{with}\quad
\bDe_n(p_0,\bp)= 
\big[\sfrac{\sin\frac{1}{2}\veps_n^2 p_0}{\frac{1}{2}\veps_n^2}\big]^2 \!
+\smsum_{\nu=1}^3\!\big[\sfrac{\sin\frac{1}{2}\veps_n \bp_\nu}{\frac{1}{2}\veps_n}\big]^2 
\end{equation*}
and
\begin{align*}
\widehat\bD_n(p_0,\bp)
&=\half\veps_n^2 e^{-\hat\bh_0(\veps_n\bp)}
\bigg[\frac{\sin\frac{1}{2}\veps_n^2p_0}{\frac{1}{2}\veps_n^2}\bigg]^2
+\frac{1-e^{-\hat\bh_0(\veps_n\bp)}}{\veps_n^2}
-ie^{-\hat\bh_0(\veps_n\bp)}\frac{\sin\veps_n^2p_0}{\veps_n^2}
\end{align*}
Define, for $u_3,u_4\in\cX_n$, $\tilde\bS'_n(u_3,u_4)
=\bS'_n(\tilde u_3,\tilde u_4)$, where $\tilde u_3$ and $\tilde u_4$ are
representatives of $u_3,\ u_4$ in $\bcX_n$ that minimize the magnitude
of each spatial component of $\tilde u_3-\tilde u_4$. 
Thanks to the exponential decay of $\bS'_n$ proven in
\cite[Lemma \lemPOGSnppties.d]{POA}, the 
difference $S'_n(\bbbl^{-n}x_3,\bbbl^{-n}x_4)
-\tilde\bS'_n(\bbbl^{-n}x_3,\bbbl^{-n}x_4)$ is bounded, uniformly in $n$.
Hence
\begin{equation}\label{eqnOSIb}
\sfrac{2}{L^{n}}\bigg|
\int_{\cX_0} dx_1 \cdots dx_3\  
   V_0(0,x_1,x_2,x_3)\,  
   \big\{S'_n(\bbbl^{-n} x_3,0)
            -\tilde\bS'_n(\bbbl^{-n} x_3,0)\big\}\bigg|
   =O\big(\sfrac{\fv_0}{L^{n}}\big)
\end{equation}

So we consider
\begin{align*}
\lim_{n\rightarrow\infty}\!
\sfrac{2}{L^{n}}
\int_{\cX_0} dx_1 \cdots dx_3\  
   V_0(0,x_1,x_2,x_3)\,  
   \tilde\bS'_n(\bbbl^{-n} x_3,0)
\end{align*}
If $\tilde x_3\in\bcX_0$ is the representative of $x_3\in\cX_0$ whose
spatial components have minimum magnitude, then
\begin{align*}
\sfrac{1}{L^n}\tilde\bS'_n(\bbbl^{-n} x_3,0)
&=\sfrac{1}{L^n}\int_{\hat\bcX_n} \hat \bS'_n(p) 
      e^{ip\cdot(\bbbl^{-n} \tilde x_3)}
             \sfrac{d^4p}{(2\pi)^4}\\
&=L^{4n}\int_{\hat\bcX_0} \hat \bS'_n(\bbbl^n k) e^{ik\cdot \tilde x_3}
             \sfrac{d^4k}{(2\pi)^4}\qquad\text{with }k=\bbbl^{-n} p
\end{align*}
Observe that
\begin{align*}
L^{2n}\hat \bS'_n(\bbbl^n k)
=\bigg\{2 e^{-\hat\bh_0(\bk)} \sin^2\half k_0
+\big(1\!-\!e^{-\hat\bh_0(\bk)}\big)
-i\,e^{-\hat\bh_0(\bk)}\sin k_0
+\sfrac{a_n}{L^{2n}}e^{-\bDe_n(\bbbl^n k)}\bigg\}^{-1}
\end{align*}
converges pointwise, as $n\rightarrow\infty$, to
\begin{equation}\label{eqnOSIdinversek}
\widehat\bD_0(k)^{-1}=\Big\{2 e^{-\hat\bh_0(\bk)} \sin^2\half k_0
+\big(1\!-\!e^{-\hat\bh_0(\bk)}\big)
-i\,e^{-\hat\bh_0(\bk)}\sin k_0\Big\}^{-1}
\end{equation}
and is bounded, uniformly in $n$, by
$
\big|\widehat\bD_0(k)\big|^{-1}
\in L^1(\hat\bcX_0)
$.
Hence $\sfrac{1}{L^{3n}}\tilde\bS'_n(\bbbl^{-n} x_3,0)$ is bounded,
uniformly in $n$ and $x_3$ and converges pointwise, as $n\rightarrow\infty$, to
\begin{equation*}
\bD_0^{-1}(\tilde x_3,0)
=\int_{\hat\bcX_0} \widehat\bD_0(k)^{-1}
    e^{ik\cdot\tilde x_3}  \sfrac{d^4k}{(2\pi)^4}
\end{equation*}
Hence, by \eqref{eqnOSIa} and \eqref{eqnOSIb},
\begin{equation}\label{eqnMSalmostDone}
\begin{split}
& \big(\mu_0-\mu_*\big)-\sfrac{1}{L^{2n}}\mu_n^*
=2 \!\int_F \! d\tilde x_1 \cdots d\tilde x_3\  
   \bV_0(0,\tilde x_1,\tilde x_2,\tilde x_3)\,  
   \big[\sfrac{1}{L^{3n}}\bS'_n(\bbbl^{-n} \tilde x_3,0)
     -\bD_0^{-1}(\tilde x_3,0)\big] \\
&\hskip2in
   +O\big(\sfrac{\fv_0}{L^{3n}}\big)+O\big(\fv_0 e^{-m L_\sp}\big) 
\end{split}
\end{equation}
where
$
F=\Big\{\ (\tilde x_1,\tilde x_2,\tilde x_3)^3\in 
          \bcX_0^3
   \ \Big|\ 
        \sfrac{L_\sp}{2}< \tilde x_{i,j}<\sfrac{L_\sp}{2} 
        \text{\ for all $i,j=1,2,3$}
   \ \Big\}
$.
To bound the right hand side, observe that
\begin{align*}
&\big| \sfrac{1}{L^{3n}}\bS'_n(\bbbl^{-n} \tilde x_3,0)
                                -\bD_0^{-1}(\tilde x_3,0)\big|
\le \sfrac{a_n}{L^{2n}}\int_{\hat\bcX_0} 
     \sfrac{1}{|\widehat\bD_0(k)+\sfrac{a_n}{L^{2n}}e^{-\bDe_n(\bbbl^n k)}|\
|\widehat\bD_0(k)|}
     \sfrac{d^4k}{(2\pi)^4}
\end{align*}
As $\hat\cX_0$ is a compact set, both terms 
$2 e^{-\hat\bh_0(\bk)} \sin^2\half k_0$ and $\big(1-e^{-\hat\bh_0(\bk)}\big)$
of the real part of $\widehat\bD_0(k)$
are nonnegative, and $\widehat\bD_0(k)$ is bounded away from zero outside
of any neignbourhood of $k_0=0$, $\bk=0$ we have 
\begin{align*}
\big|\widehat\bD_0(k)\big|
&\ge\const\big| ik_0+\bk^2\big| \\
\big|\widehat\bD_0(k)+\sfrac{a_n}{L^{2n}}e^{-\bDe_n(\bbbl^n k)}\big|
&\ge \const\big| ik_0+\bk^2
          +\sfrac{a_n}{L^{2n}}e^{-\const L^{2n}[k_0^2+\bk^2]}\big|
\end{align*}
For the part of the integral with $k_0\ne 0$,
\begin{align*}
 \sfrac{a_n}{L^{2n}}\int_{\atop{\hat\bcX_0}{k_0\ne 0}} 
     \sfrac{1}{|\widehat\bD_0(k)+\sfrac{a_n}{L^{2n}}e^{-\bDe_n(\bbbl^n k)}|\
|\widehat\bD_0(k)|}
     \sfrac{d^4k}{(2\pi)^4}
&\le  \sfrac{\const}{L^{2n}}\int_{-\pi}^\pi dk_0\ \sfrac{1}{|k_0|^{3/4}}
       \int_{|\bk|\le 2\pi} d^3\bk \ \sfrac{1}{|\bk|^{5/2}} \\
&\le  \sfrac{\const}{L^{2n}}
\end{align*}
For the part of the integral with $k_0= 0$, scaling $\bk=\sfrac{\bp}{L^n}$,
\begin{align*}
 \sfrac{a_n}{L^{2n}}\int_{\atop{\hat\bcX_0}{k_0= 0}} 
     \sfrac{1}{|\widehat\bD_0(k)+\sfrac{a_n}{L^{2n}}e^{-\bDe_n(\bbbl^n k)}|\
|\widehat\bD_0(k)|}
     \sfrac{d^4k}{(2\pi)^4}
&\le  \sfrac{\const}{L^{2n} L_\tp}\int_{|\bk|\le 2\pi} 
     \sfrac{1}{[\bk^2+\sfrac{1}{L^{2n}}e^{-\const L^{2n}\bk^2}]\bk^2}
     d^3\bk  \\
&=  \sfrac{\const}{L^{2n} L_\tp} L^n\int_{|\bp|\le2\pi L^n} 
     \sfrac{1}{[\bp^2+e^{-\const \bp^2}]\bp^2}
     d^3\bp  \\
&\le \sfrac{\const}{L^{2n}} \sfrac{L^n}{L_\tp} \\
&\le \sfrac{\const}{L^{2n}} \qquad\text{ for all }n\le n_p 
\end{align*}
Putting these bounds into \eqref{eqnMSalmostDone},
\begin{align*}
\sfrac{1}{L^{2n}}\big[L^{2n}\big(\mu_0-\mu_*\big)-\mu_n^*\big]
&\le O\big(\sfrac{\fv_0}{L^{2n}}\big) 
      +O\big(\sfrac{\fv_0}{L^{3n}}\big)
      +O\big(\fv_0 e^{-m L_\sp}\big)
\end{align*}

\end{proof}

\section{Localization}\label{appLocal}

Fix masses $\fm\ge 0$ and $\bar\fm>\fm$.
\begin{lemma}\label{lemLexpandalphiveps}
Let $0\le j\le n$. 
For each point $u$ of the fine lattice $\cX_j^{(n-j)}$, we use  
$X(u)$ to denote the point of the unit lattice $\cX_0^{(n)}$ 
nearest to $u$. There exists a constant $C_{\fm,\bar\fm}$, 
depending only on $\fm$ and $\bar\fm$, such that the following holds.
For each linear tansformation $B:\cH_0^{(n)}\rightarrow\cH_j^{(n-j)}$ 
there are linear maps 
     $B_\nu$, $0\le\nu\le 3$,  
such that
\begin{equation*}
\sum_{x\in\cX^{(n)}_0} 
   B(u,x)\,\big[\psi(x)-\psi\big(X(u)\big)\big]
=\sum_{\nu=0}^3B_\nu(\partial_\nu\psi)(u)
\qquad\text{for all $u\in\cX_j^{(n-j)}$}
\end{equation*}
and
\begin{equation*}
\| B_\nu\|_\fm \le  C_{\fm,\bar\fm}  \|B\|_{\bar\fm}
\qquad 0\le\nu\le 3
\end{equation*}

\end{lemma}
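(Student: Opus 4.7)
The plan is to realize $\psi(x)-\psi(X(u))$ as a telescoping sum of discrete unit-step derivatives along a canonical lattice path from $X(u)$ to $x$ in $\cX_0^{(n)}$, and then swap the order of summation so that the outer sum becomes a sum over the step locations. For each ordered pair $(X(u),x)$ choose a canonical shortest lattice path $X(u)=y_0,y_1,\dots,y_M=x$ with $y_{i+1}-y_i=e_{\nu_i}$ (for instance: first take all needed steps in the $0$--direction, then in the $1$--direction, then $2$, then $3$, after lifting to the universal cover to deal with the toroidal identifications in the time direction). Then
\begin{equation*}
\psi(x)-\psi(X(u))=\sum_{i=0}^{M-1}(\partial_{\nu_i}\psi)(y_i)
=\sum_{\nu=0}^{3}\sum_{y\in\cX_0^{(n)}}\chi_\nu\big(X(u),x;y\big)\,(\partial_\nu\psi)(y),
\end{equation*}
where $\chi_\nu(X(u),x;y)\in\{0,1\}$ indicates whether the canonical path contains the step $y\to y+e_\nu$. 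Defining
\begin{equation*}
B_\nu(u,y)\;=\;\sum_{x\in\cX_0^{(n)}} B(u,x)\,\chi_\nu\big(X(u),x;y\big)
\end{equation*}
gives the desired identity.

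For the norm estimate, the key geometric facts about the canonical shortest path are: (i) every vertex $y$ of the path from $X(u)$ to $x$ satisfies $|X(u)-y|_\infty\le|X(u)-x|_\infty$, hence $|u-y|\le|u-x|+c$ for a universal constant $c$ (coming from $|u-X(u)|\le \sfrac{1}{2}$), and (ii) the number of vertices is at most $|X(u)-x|_1+1\le c'(1+|u-x|)$. Consequently, for each $u$,
\begin{equation*}
\sum_{y} e^{\fm|u-y|}|B_\nu(u,y)|
\;\le\;\sum_{x}|B(u,x)|\sum_{y\in\text{path}} e^{\fm|u-y|}
\;\le\; C\sum_{x}(1+|u-x|)\,e^{\fm|u-x|}\,|B(u,x)|,
\end{equation*}
and the analogous bound with the roles of $u$ and $y$ swapped. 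Because $\bar\fm>\fm$, there is a constant $C_{\fm,\bar\fm}$ with $(1+r)e^{\fm r}\le C_{\fm,\bar\fm}\,e^{\bar\fm r}$ for all $r\ge 0$, so the last sum is at most $C_{\fm,\bar\fm}\,\|B\|_{\bar\fm}$, which gives $\|B_\nu\|_\fm\le C_{\fm,\bar\fm}\|B\|_{\bar\fm}$.

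The computation is essentially routine; the only genuinely non-trivial point is making the path construction compatible with the periodic boundary of $\cX_0^{(n)}$. I would handle this by choosing, for each $(X(u),x)$, representatives in the universal cover whose $\ell^\infty$--distance equals the torus distance and then building the canonical monotone path there before projecting. This guarantees property (i) uniformly in $u,x$, which is what feeds the exponential bookkeeping in the norm estimate above. Everything else is just the telescoping identity together with the elementary inequality $(1+r)e^{\fm r}\le C_{\fm,\bar\fm}e^{\bar\fm r}$.
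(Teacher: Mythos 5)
Your construction coincides with the paper's: telescope $\psi(x)-\psi\big(X(u)\big)$ along a canonical lattice path $\Pi(X(u),x)$, interchange the two sums to define $B_\nu$, and pay for the change of mass with the margin $\bar\fm-\fm$. For the first half of the norm, $\sup_u \vol_0\sum_y |B_\nu(u,y)|e^{\fm|u-y|}$, your bookkeeping --- at most $O(1+|u-x|)$ path vertices, each with $|u-y|\le |u-x|+c$, absorbed via $(1+r)e^{\fm r}\le C_{\fm,\bar\fm}e^{\bar\fm r}$ --- is a perfectly good substitute for the paper's estimate.

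The gap is in the other half, $\sup_y \vol_j\sum_u |B_\nu(u,y)|e^{\fm|u-y|}$, which you dispose of as ``the analogous bound with the roles of $u$ and $y$ swapped.'' It is not analogous: for fixed $y$ the path-vertex count does nothing, since the set of pairs $(u,x)$ whose canonical path passes through $y$ is huge (for a fixed $u$, in your monotone construction every endpoint $x$ lying ``beyond'' $y$ in the later directions contributes), and the $u$--sum does not converge without exponential decay in $|u-y|$. That decay must be manufactured from the mass margin: since $y\in\Pi(X(u),x)$ forces $|u-y|\le|u-x|+c$ and $|y-x|\le|u-x|+c$, one writes $e^{\fm|u-x|}\le \const\, e^{\bar\fm|u-x|}\,e^{-m'|u-y|}\,e^{-m'|y-x|}$ with $m'=\half(\bar\fm-\fm)$, after which $\vol_j\sum_u e^{-m'|u-y|}$ is bounded uniformly and the double sum is controlled by $\|B\|_{\bar\fm}$. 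This is exactly how the paper proceeds: it bounds the kernel $|B_\nu(u,z)|e^{\fm|u-z|}$ by $\const\sum_x|B(u,x)|e^{\bar\fm|u-x|-m'|u-z|-m'|z-x|}$ in one stroke, a two-sided decay estimate that feeds both halves of the $\|\cdot\|_\fm$ norm simultaneously. Your geometric input (your property (i)) already supplies what is needed, so the repair is short, but as written the swapped half is asserted rather than proved, and the literal ``analogy'' would fail.
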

\begin{proof} 
Define
\begin{itemize}[leftmargin=*, topsep=2pt, itemsep=2pt, parsep=0pt]
\item
 $\cB_0^{(n)}$ to be the set of (oriented) bonds on 
 the lattice $\cX_0^{(n)}$. 
\item
 For any bond $b=\<x_1,x_2\>\in\cB_0^{(n)}$, 
 $\nabla\psi(b)=\sfrac{\psi(x_2)-\psi(x_1)}{|x_2-x_1|}
 =\psi(x_2)-\psi(x_1)$.
\item
 Given fields $\psi_\nu$, $0\le\nu\le 3$
 on $\cX_0^{(n)}$, we write, for each $\<x_1,x_2\>\in\cB_0^{(n)}$
\begin{equation}\label{eqnLpsiNabla}
\psi_\snabla(\{\psi_\nu\})(\<x_1,x_2\>)
            =\begin{cases}\psi_\nu(x_1)& \text{if $x_2-x_1=|x_2-x_1|e_\nu$}\\
                  -\psi_\nu(x_2)& \text{if $x_1-x_2=|x_2-x_1|e_\nu$}
              \end{cases}
\end{equation}
where $e_\nu$ is the usual unit vector in direction $\nu$.
Observe that $\psi_\snabla(\{\partial_\nu\psi\})(b)=\nabla\psi(b)$.
\item
 If $x,x'\in\cX_0^{(n)}$ we select by any reasonable
algorithm a set $\Pi(x,x')\subset \cB_0^{(n)}$ of bonds forming 
a path from $x$, $x'$. This algorithm must be such that no bond
ever appears more than once, even ignoring orientation, in any $\Pi(x,x')$ 
and such that if $z$ is any 
point on a path $\Pi(x,x')$, then $|x-z|,|z-x'|\le|x-x'|$.  
We have $\psi(x')-\psi(x)=\sum_{b\in\Pi(x,x')} \nabla\psi(b)$. 
\end{itemize}
Using this notation,
\begin{align*}
&\sum_{x\in\cX^{(n)}_0} 
   B(u,x)\,\big[\psi(x)-\psi\big(X(u)\big)\big]
=\sum_{x\in\cX^{(n)}_0} \ 
  \sum_{b\in\Pi(X(u),x)}\hskip-5pt
    B(u,x)\,\nabla\psi(b)\\
&\hskip0.5in=\sum_{x\in\cX^{(n)}_0} \ 
  \sum_{b\in\Pi(X(u),x)}
    B(u,x)\,\psi_\snabla(\{\partial_\nu\psi\})(b)
=\sum_{\nu=0}^3 \sum_{z\in\cX_0^{(n)}}
       B_\nu(u;z)\,\partial_\nu\psi(z)
\end{align*}
with
\begin{equation*}
B_\nu(u;z)=
     \hskip-10pt
     \sum_{\atop{x\in\cX^{(n)}_0}
                {\<z,z+ e_\nu\>\in\Pi(X(u),x)}} 
    \hskip-25pt
    B(u,x)\hskip10pt
-\hskip-15pt
    \sum_{\atop{x\in\cX^{(n)}_0}
               {\<z+e_\nu,z\>\in\Pi(X(u),x)}}
    \hskip-25pt
    B(u,x)
\end{equation*}
Recall, from \cite[Definition \defHTkernelnorm]{PAR1}, that 
\begin{align*}
\| B_\nu\|_\fm
&=\max\Big\{
  \sup_{u\in\cX_j^{(n-j)}}\vol_0\hskip-5pt
               \sum_{z\in\cX_0^{(n)}}|B_\nu(u;z)|e^{\fm|u-z|}\ ,\ \\
  \noalign{\vskip-0.2in}
  &\hskip2in
  \sup_{z\in\cX_0^{(n)}}\vol_j\sum_{u\in\cX_j^{(n-j)}}|B_\nu(u;z)|
                                                e^{\fm|u-z|} \Big\}
\end{align*}
Now, writing $m'=\half(\bar\fm-\fm)$,
\begin{align*}
|B_\nu(u;z)|e^{\fm|u-z|}
&\le 2 \hskip-10pt
     \sum_{\atop{x\in\cX^{(n)}_0}
                {z{\rm \,on\,}\Pi(X(u),x)}}
    \hskip-10pt
    \big|B(u,x)\big|
                     e^{\fm|u-z|}\\
&\le 2 e^{2\fm} 
    \hskip-10pt
     \sum_{\atop{x\in\cX^{(n)}_0}
          {z{\rm \,on\,}\Pi(X(u),x)}}
    \hskip-10pt
    \big|B(u,x)\big|
                       e^{\bar\fm|u-x|-(\bar\fm-\fm)|u-x|}\\
&\le 2 e^{2 \bar\fm } 
    \sum_{x\in\cX^{(n)}_0} 
    \big|B(u,x)
               \big|e^{\bar\fm|u-x|-m'|u-z|-m'|z-x|}
\end{align*}
so that 
\begin{equation*}
\| B_\nu\|_\fm\le 2e^{2\bar\fm}  
        \Big(\sup_{u\in\cX_j^{(n-j)}}\sum_{z\in\cX_0^{(n)}}e^{-m'|z-u|}\Big)
                  \|B\|_{\bar\fm}
\end{equation*}
\end{proof}

\begin{corollary}\label{corLprelocalize} 
Let 
\begin{equation*}
\cP(\ga,\psi) = \int dx\,dy\ \ga(x)\, K(x,y)\, \psi(y)
\end{equation*}
be a bilinear form on $\cH_0^{(n)}$ with translation invariant kernel $K$.
Then there exist bilinear forms $\cP_\nu\big(\ga,\psi_\nu\big)$, $0\le\nu\le 3$,  such that
\begin{equation*}
\cP(\ga, \psi)
=\cK \int dx\ \ga(x)\psi(x)
+\sum_{\nu=0}^3\cP_\nu\big(\ga,\partial_\nu\psi\big)
\end{equation*}
where $\cK=\int dy\ K(0,y)$.
Furthermore, for each $0\le\nu\le 3$, the kernel $K_\nu$ of $\cP_\nu$ obeys
$
\|K_\nu\|_{\fm}\le C_{\fm,\bar\fm} \|K\|_{\bar\fm}
$ .

\end{corollary}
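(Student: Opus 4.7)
The plan is to reduce the corollary directly to Lemma \ref{lemLexpandalphiveps} by the standard add-and-subtract trick. First I would split
\begin{equation*}
\cP(\ga,\psi) = \int dx\, \ga(x) \int dy\ K(x,y)\bigl[\psi(y) - \psi(x)\bigr] + \int dx\, \ga(x)\,\psi(x) \int dy\ K(x,y).
\end{equation*}
Translation invariance of $K$ gives $\int dy\ K(x,y) = \int dy'\ K(0,y') = \cK$ via the change of variables $y' = y - x$, so the second piece contributes exactly $\cK\int dx\ \ga(x)\psi(x)$.

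For the first piece, I would invoke Lemma \ref{lemLexpandalphiveps} with $j=0$ (so $\cX_j^{(n-j)} = \cX_0^{(n)}$ and $X(u) = u$) applied to the linear map $B:\cH_0^{(n)}\to\cH_0^{(n)}$ whose kernel is $K$. The lemma produces linear maps $B_\nu$ whose kernels $K_\nu$ satisfy $\|K_\nu\|_\fm \le C_{\fm,\bar\fm}\|K\|_{\bar\fm} = C_{\fm,\bar\fm}\|B\|_{\bar\fm}$ and
\begin{equation*}
\int dy\ K(x,y)\bigl[\psi(y) - \psi(x)\bigr] = \sum_{\nu=0}^3 (B_\nu\,\partial_\nu\psi)(x).
\end{equation*}
Integrating against $\ga(x)$ and setting $\cP_\nu(\ga,\psi_\nu) := \int dx\,dz\ \ga(x)\,K_\nu(x,z)\,\psi_\nu(z)$ yields the asserted decomposition and estimate.

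There is no real obstacle: all the work has already been performed inside Lemma \ref{lemLexpandalphiveps}, and the only small bookkeeping item is to check that $B$ viewed as a kernel on $\cH_0^{(n)}$ carries the norm $\|K\|_{\bar\fm}$ in the sense of \cite[Definition \defHTkernelnorm]{PAR1}, which is immediate from the definitions.
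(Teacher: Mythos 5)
Your proposal is correct and follows essentially the same route as the paper: the same add--and--subtract split, the observation that translation invariance makes $\int dy\ K(x,y)$ the constant $\cK$, and the application of Lemma \ref{lemLexpandalphiveps} with $j=0$ (so $X(x)=x$) to produce the kernels $K_\nu$ and the bound $\|K_\nu\|_\fm\le C_{\fm,\bar\fm}\|K\|_{\bar\fm}$. No gaps.
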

\begin{proof} Write
\begin{equation*}
\cP(\ga,\psi) = \int dx\,dy\ \ga(x)\, K(x,y)\, [\psi(y)-\psi(x)]
           +\cK \int dx\ \ga(x)\, \psi(x)
\end{equation*}
where $\cK=\int dy\ K(x,y)$ is independent of $x$.  Lemma \lemLexpandalphiveps,
with $j=0$, and thus $X(x)=x$, gives kernels $K_\nu$, $0\le\nu\le 3$ such that
\begin{equation*}
\int dy\ K(x,y)\, [\psi(y)-\psi(x)]
=\sum_{\nu=0}^3 \int dy\ K_\nu(x,y)\, \partial_\nu\psi(y)
\end{equation*}
and $\|K_\nu\|_\fm\le C_{\fm,\bar\fm}\|K\|_{\bar\fm}$, $0\le\nu\le 3$.
Setting
\begin{align*}
\cP_\nu\big(\ga,\psi_\nu\big)
   =\sum_{x,y\in\cX^{(n)}_0}\ga(x)\, K_\nu(x,y)\,\psi_\nu(y)
\end{align*}
the corollary follows.
\end{proof}

\begin{lemma}\label{lemLlocalize}
There is a constant $\cc_\loc$, depending only on $\fm$ and $\bar\fm$, such that
the following holds.
\begin{enumerate}[label=(\alph*), leftmargin=*]
\item 
Let $1\le\nu\le3$ and let
\begin{equation*}
\cP(\psi_{*\nu},\psi) = \int_{\cX_0^{(n)}} dx\,dy\ \psi_{*\nu}(x)\, K(x,y)\, \psi(y)
\end{equation*}
be invariant under $\fS_\spat$. Then there exists a bilinear form
\begin{equation*}
\cP_{\rm ren}\big(\psi_{*\nu},\big\{\psi_{\nu'}\big\}_{\nu'=0}^3\big)
\end{equation*}
that  is also invariant under $\fS_\spat$, such that 
\begin{equation*}
\cP(\partial_\nu\psi_*, \psi) =\cP_{\rm ren}\big(\partial_\nu\psi_*,
       \big\{\partial_{\nu'}\psi\big\}_{\nu'=0}^3\big) \qquad\text{and}\qquad
\|\cP_{\rm ren}\|_{\fm}\le \cc_\loc \|\cP\|_{\bar\fm}
\end{equation*}

\item
Let $1\le\nu\le3$ and let
\begin{equation*}
\cP(\psi_*,\psi,\psi_\nu) =\int dx_1\cdots dx_4\ 
   K(x_1,x_2,x_3,x_4)\ \psi_*(x_1)\psi(x_2)\psi_*(x_3)\psi_\nu(x_4)
\end{equation*}
be invariant under $\fS_\spat$. Then there exists, in the notation of
Definition \ref{defRENspaces}, 
\begin{equation*}
\cP_{\rm ren}
  \big((\psi_*,\{\psi_{*\nu'}\})\,,\,(\psi,\{,\psi_{\nu'}\})\big)
 \in\fP_{(2,1,1)}\oplus\fP_{(2,0,2)}
\end{equation*}
that
\begin{itemize}[leftmargin=*, topsep=2pt, itemsep=2pt, parsep=0pt]
\item
 is of degree at least one in $\psi_\nu$ and
\item
 obeys $\cP(\psi_*, \psi,\partial_\nu\psi)
=\cP_{\rm ren}\big((\psi_*,\{\partial_{\nu'}\psi_*\})\,,\,
                        (\psi,\{\partial_{\nu'}\psi\})\big)$
and with
\item
 each monomial in $\cP_{\rm ren}$ having $\|\ \cdot\ \|_{\fm}$
norm bounded by  $\cc_\loc \|\cP\|_{\bar\fm}$ 
\end{itemize}

\item
Let
\begin{equation*}
\cP(\psi_*,\psi) = \int_{\cX_0^{(n)}} dx\,dy\ \psi_*(x)\, K(x,y)\, \psi(y)
\end{equation*}
be invariant under $\fS$. Then there exists 
\begin{equation*}
\cP_{\rm ren}\big(\psi_*,\psi,\psi_{*\nu},\psi_{\nu}\big)
\in\fP_{(1,1,0)}\oplus\fP_{(0,1,1)}\oplus\fP_{(0,0,2)}
\end{equation*}
such that each monomial in $\cP_{\rm ren}$ has $\|\ \cdot\ \|_{\fm}$
norm bounded by  $\cc_\loc \|\cP\|_{\bar\fm}$ and
\begin{equation*}
\cP(\psi_*, \psi)
=\de\mu\int dx\ \psi_*(x)\, \psi(x)
+\cP_{\rm ren}\big(\psi_*,\psi,\partial_\nu\psi_*,\partial_\nu\psi\big)
\end{equation*}
where
\begin{equation*}
\de\mu=\int dy\ K(0,y)
\end{equation*}
is real and obeys $|\de\mu|\le \|K\|_{m=0}$.
\end{enumerate}
\end{lemma}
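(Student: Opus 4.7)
My plan is a twofold iteration of Corollary~\ref{corLprelocalize}, combined with symmetry arguments to annihilate the ``leading constants'' that would otherwise produce monomials of forbidden type. For \textbf{part (a)}, apply Corollary~\ref{corLprelocalize} to the second argument:
\begin{equation*}
\cP(\psi_{*\nu},\psi)
=\cK\int\psi_{*\nu}(x)\,\psi(x)\,dx
+\sum_{\nu'=0}^3\cP_{\nu'}(\psi_{*\nu},\partial_{\nu'}\psi),
\end{equation*}
with $\cK=\int K(0,y)\,dy$. The $\fS_\spat$ invariance of $\cP$, together with the sign change of $\psi_{*\nu}$ under the spatial reflection $R_\nu$, forces $K(R_\nu x,R_\nu y)=-K(x,y)$, so a change of variables gives $\cK=-\cK=0$. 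Setting $\cP_{\rm ren}=\sum_{\nu'}\cP_{\nu'}$ delivers the identity with the norm bound inherited from Corollary~\ref{corLprelocalize}; the $\fS_\spat$ invariance of $\cP_{\rm ren}$ follows from a symmetric choice of the path algorithm in Lemma~\ref{lemLexpandalphiveps}.

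For \textbf{part (b)} I would perform two successive localizations. First expand $\psi_*(x_3)=\psi_*(x_1)+\sum_{\nu'}\int B_{\nu'}\,\partial_{\nu'}\psi_*(w)\,dw$ via Lemma~\ref{lemLexpandalphiveps}; the derivative contributions land directly in $\fP_{(2,1,1)}\oplus\fP_{(2,0,2)}$ according as $\nu'=0$ or $\nu'\ge 1$, while the leading remainder, carrying $\psi_*(x_1)^2$, is still of type $(3,0,1)$. Apply a second localization expanding $\psi(x_2)$ around $\psi(x_4)$; its derivative contributions again fall in the target space. The remaining ``leading-leading'' piece collapses to a constant multiple of $\bigl(\int\psi_*(x)^2\,dx\bigr)\bigl(\int\psi(y)\,\partial_\nu\psi(y)\,dy\bigr)$, and the lattice identity $\sum_y\psi(y)\partial_\nu\psi(y)=-\sfrac{1}{2}\sum_y(\partial_\nu\psi(y))^2$, immediate from $\sum_y\partial_\nu(\psi^2)(y)=0$, rewrites this as a monomial of type $(2,0,2)$.

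For \textbf{part (c)}, apply Corollary~\ref{corLprelocalize} once to $\psi$, producing $\de\mu\int\psi_*\psi\,dx$ with $\de\mu=\int K(0,y)\,dy$, plus four remainders $\cP_{\nu'}(\psi_*,\partial_{\nu'}\psi)$. The $\nu'=0$ piece already lies in $\fP_{(1,1,0)}$. For each spatial $\nu'\ge 1$, a second localization of the $\psi_*$ argument of $\cP_{\nu'}$ yields a leading constant $\tilde\cK_{\nu'}\int\psi_*\,\partial_{\nu'}\psi$ plus derivative remainders in $\fP_{(0,1,1)}\oplus\fP_{(0,0,2)}$. Each $\tilde\cK_{\nu'}$ with $\nu'\ge 1$ is killed by the same reflection argument as in part~(a), applied at the level of $\cP_{\nu'}$. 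Reality of $\de\mu$ follows from the component of $\fS$ that implements complex conjugation together with $\psi\leftrightarrow\psi_*$, and $|\de\mu|\le\|K\|_{m=0}$ is immediate from the triangle inequality.

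The main obstacle I anticipate is bookkeeping the $\fS$-symmetries through the iterated localizations, since the path algorithm of Lemma~\ref{lemLexpandalphiveps} does not automatically respect spatial reflections. Either the algorithm must be chosen canonically symmetric, or the vanishing of each $\cK$ and $\tilde\cK_{\nu'}$ must be extracted from the symmetric combination $\sum_{\nu'}\cP_{\nu'}$ by projecting onto the reflection-odd sector. The exact lattice summation-by-parts identity used in part~(b) is also a point to track carefully, though it is clean on the torus.
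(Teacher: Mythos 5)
Your parts (a) and (c) follow essentially the paper's own route: localize via Corollary \ref{corLprelocalize}, kill the constant term by reflection antisymmetry, and in (c) feed the spatial remainders back through part (a). Two small remarks there: the paper restores $\fS_\spat$ (resp.\ $\fS$) invariance not by choosing a symmetric path algorithm but by averaging the output $\cP_{\rm ren}$ over the group afterwards, which is the clean fix you gesture at; and the vanishing of $\cK$ never involves the path algorithm at all, since $\cK$ is computed from $K$ alone --- the precise relation is $K(x,y)=-K(R_\nu x-e_\nu,R_\nu y)$ (the shift $-e_\nu$, which you dropped, comes from how $R_\nu$ acts on the forward difference), and together with translation invariance it still gives $\cK=-\cK=0$.

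Part (b), however, has a genuine gap. After you expand $\psi_*(x_3)$ about $x_1$ and $\psi(x_2)$ about $x_4$, the leftover piece is
\begin{equation*}
\int dx_1\,dx_4\ \Big(\int dx_2\,dx_3\ K(x_1,x_2,x_3,x_4)\Big)\ \psi_*(x_1)^2\,\psi(x_4)\,\partial_\nu\psi(x_4)
\end{equation*}
a bilocal monomial of type $(3,0,1)$. It does not collapse to a constant multiple of $\big(\int\psi_*^2\big)\big(\int\psi\,\partial_\nu\psi\big)$: translation invariance only makes the inner integral a function of $x_1-x_4$, not a constant, so there is no factorization, and your summation-by-parts identity (correct as it is on the torus) has nothing to act on. Moreover, even the fully local part of this leftover, $\cK\int dx\ \psi_*(x)^2\psi(x)\partial_\nu\psi(x)$, is of type $(3,0,1)$ and hence outside $\fP_{(2,1,1)}\oplus\fP_{(2,0,2)}$; the only way to remove it is the step your part (b) omits, namely the reflection argument: $\fS_\spat$ invariance gives $K(x_1,x_2,x_3,x_4)=-K(R_\nu x_1,R_\nu x_2,R_\nu x_3,R_\nu x_4-e_\nu)$, whence $\cK=\int dx_1dx_2dx_3\,K(x_1,x_2,x_3,0)=0$. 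The paper does exactly this first, and then telescopes $\psi_*(x_1)\psi(x_2)\psi_*(x_3)-\psi_*(x)\psi(x)\psi_*(x)$ into three single differences, each expanded along paths as in Lemma \ref{lemLexpandalphiveps}, so that every generated monomial keeps the factor $\psi_\nu(x)$ and acquires exactly one new derivative field, landing in $\fP_{(2,1,1)}\oplus\fP_{(2,0,2)}$, with a final average over $\fS_\spat$. Your two-step expansion could be repaired along the same lines (expand the leftover once more about $x_4$ and kill its local constant by the reflection argument), but as written the factorization step is false and the offending $(3,0,1)$ term is not disposed of.
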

\begin{proof} (a) By Corollary \ref{corLprelocalize}, with $\ga=\psi_{*\nu}$,
\begin{equation*}
\cP(\psi_{*\nu},\psi)=\cK \int dx\ \psi_{*\nu}(x)\psi(x)
+\sum_{\nu'=0}^3\cP_{\nu'}\big(\psi_{*\nu},\partial_{\nu'}\psi\big)
\end{equation*}
We have $K(x,y)=-K(R_\nu x-e_\nu,R_\nu  y)$, by \cite[Lemma \lemSYfullsymmetry]{PAR1}, 
so that
\begin{equation*}
\cK=\int dy\ K(0,y)=-\int dy\ K(-e_\nu,R_\nu y)
=-\int dy\ K(-e_\nu,y) = -\cK
\end{equation*}
yielding $\cK=0$. Set 
\begin{equation*}
\cP'_{\rm ren}\big(\psi_{*\nu},\big\{\psi_{\nu'}\big\}\big)
=\sum_{\nu'=0}^3\cP_{\nu'}\big(\psi_{*\nu},\psi_{\nu'}\big)
\end{equation*}
It has all of the properties required of $\cP_{\rm ren}$, with the 
possible exception of invariance under $\fS_\spat$. To recover invariance 
under $\fS_\spat$ we define $\cP_{\rm ren}$ by averaging over $\fS_\spat$.
$$
\cP_{\rm ren}=\sfrac{1}{|\fS_\spat|}\sum_{g\in\fS_\spat} g\cP'_{\rm ren}
$$
The claim follows by \cite[Remark \remSYsymmetryNorm]{PAR1}.

\Item (b) Write
\begin{equation*}
\cP(\psi_*,\psi,\psi_\nu) =\cK \int dx\ \psi_*(x)\psi(x)\psi_*(x)\psi_\nu(x)
                            +\de\cP(\psi_*,\psi,\psi_\nu)
\end{equation*}
where
\begin{equation*}
\cK=\int dx_1 dx_2 dx_3\ K(x_1,x_2,x_3,0)
\end{equation*}
and
\begin{align*}
&\de\cP(\psi_*,\psi,\psi_\nu)\\
&\hskip0.2in
=\int dx_1\, dx_2\, dx_3\, dx\ 
   K(x_1,x_2,x_3,x)\ \big[\psi_*(x_1)\psi(x_2)\psi_*(x_3)
                       -\psi_*(x)\psi(x)\psi_*(x)\big]\psi_\nu(x)
\end{align*}
As 
$K(x_1,x_2,x_3,x_4)
   =-K(R_\nu x_1,R_\nu x_2,R_\nu x_3,R_\nu  x_4-e_\nu)$, 
by  \cite[Lemma \lemSYfullsymmetry]{PAR1}, we have
\begin{align*}
\cK&=\int dx_1 dx_2 dx_3\ K(x_1,x_2,x_3,0)\\
&=-\int dx_1 dx_2 dx_3\ K(x_1,x_2,x_3,- e_\nu) \\
&=-\int dx_1 dx_2 dx_3\ K(x_1+e_\nu,x_2+e_\nu,x_3+e_\nu,0) \\
&= -\cK
\end{align*}
so that $\cK=0$. As in Lemma \ref{lemLexpandalphiveps}, 
\begin{align*}
&\de\cP(\psi_*,\psi,\psi_\nu)\\
&\hskip0.3in
=\ \int dx_1\, dx_2\, dx_3\, dx\ 
   K(x_1,x_2,x_3,x)\ \psi_*(x_1)\psi(x_2)\big[\psi_*(x_3)-\psi_*(x)\big]
                       \psi_\nu(x)\\
&\hskip0.4in
+\!\int dx_1\, dx_2\, dx_3\, dx\ 
   K(x_1,x_2,x_3,x)\ \psi_*(x_1)\big[\psi(x_2)-\psi(x)\big]
                       \psi_*(x)\psi_\nu(x)\\
&\hskip0.4in
+\!\int dx_1\, dx_2\, dx_3\, dx\ 
   K(x_1,x_2,x_3,x)\ \big[\psi_*(x_1)-\psi_*(x)\big]
                       \psi(x)\psi_*(x)\psi_\nu(x)\\
&\hskip0.3in
=\ \cP'_1(\psi_*,\psi,\big\{\partial_{\nu'}\psi_*\big\},\psi_\nu)
 + \cP'_2(\psi_*,\psi,\big\{\partial_{\nu'}\psi\big\},\psi_\nu)
 + \cP'_3(\psi_*,\psi,\big\{\partial_{\nu'}\psi\big\},\psi_\nu)
\end{align*}
where
\begin{align*}
\cP'_1(\psi_*,\psi,\big\{\psi_{*\nu'}\big\},\psi_\nu)
&=\sum_{b\in\Pi(x,x_3)}\int dx_1\, dx_2\, dx_3\, dx\ 
   K(x_1,x_2,x_3,x)\\
\noalign{\vskip-0.1in}
&\hskip2in 
     \psi_*(x_1)\,\psi(x_2)\,\psi_{*\snabla}\big(\{\psi_{*\nu'}\big)(b)\,
        \psi_\nu(x)\\
\cP'_2(\psi_*,\psi,\big\{\psi_{*\nu'}\big\},\psi_\nu)
&=\sum_{b\in\Pi(x,x_2)}\int dx_1\, dx_2\, dx_3\, dx\ 
   K(x_1,x_2,x_3,x)\\
\noalign{\vskip-0.1in}
&\hskip2in 
     \psi_*(x_1)\,\psi_\snabla\big(\{\psi_{\nu'}\big)(b)\,
        \psi_*(x)\,\psi_\nu(x)\\
\cP'_3(\psi_*,\psi,\big\{\psi_{*\nu'}\big\},\psi_\nu)
&=\sum_{b\in\Pi(x,x_1)}\int dx_1\, dx_2\, dx_3\, dx\ 
   K(x_1,x_2,x_3,x)\\
\noalign{\vskip-0.1in}
&\hskip2in 
     \psi_{*\snabla}\big(\{\psi_{*\nu'}\big)(b)\, 
        \psi(x)\,\psi_*(x)\,\psi_\nu(x)
\end{align*}
For each $i=1,2,3$, we may write
\begin{equation*}
\cP'_i(\psi_*,\psi,\big\{\psi_{(*)\nu'}\big\},\psi_\nu)
=\sum_{\nu'=0}^3\cP_{i\nu'}(\psi_*,\psi,\psi_{(*)\nu'},\psi_\nu)
\end{equation*}
and bound $\cP_{\nu'}$ just as $\cP_\nu$ was bounded in 
Lemma \ref{lemLexpandalphiveps}. Then it suffices to set
\begin{equation*}
\cP_{\rm ren}
   \big((\psi_*,\{\psi_{*\nu'}\})\,,\,(\psi,\{,\psi_{\nu'}\})\big)
=\sfrac{1}{|\fS_\spat|}\sum_{g\in\fS_\spat}\ 
\sum_{\nu'=0}^3\ \sum_{i=1}^3
g\cP_{i\nu'}(\psi_*,\psi,\psi_{(*)\nu'},\psi_\nu)
\end{equation*}

\Item (c) By Corollary \ref{corLprelocalize}, with $\ga=\psi_*$,
\begin{equation*}
\cP(\psi_*,\psi)=\de\mu \int dx\ \psi_*(x)\psi(x)
+\sum_{\nu=0}^3\cP_{\nu}\big(\psi_*,\partial_\nu\psi\big)
\end{equation*}
with
\begin{equation*}
\|\cP_\nu\|_{(\fm+\bar\fm)/2}\le C'_r\|\cP\|_{\bar\fm}
\end{equation*}
We have $\overline{K(R_0 y,R_0 x)}=K(x,y)$, by 
\cite[Example \exSYfullsymmetry]{PAR1}, 
so that
\begin{equation*}
\de\mu=\int dy\ K(0,y)=\int dy\ \overline{K(R_0 y,0)}
=\int dy\ \overline{K(0,-R_0 y)}
=\int dy\ \overline{K(0,y)}
=\overline{\de\mu}
\end{equation*}
so that $\de\mu$ is real. By averaging as in part (a), we may assume that
each $\cP_{\nu}\big(\psi_*,\psi_\nu\big)$ is
invariant under $\fS_\spat$. It now suffices to apply part (a) to each
$\cP_{\nu}$, $1\le\nu\le3$, and average over $\fS$.
\end{proof}

We fix any $\wf$, $\wf'_0$ and $\wf'_\sp$ 
and use norms $\lun \cF(\tilde \psi_*,\tilde \psi)\run$
and $\lln \cF(\tilde \psi_*,\tilde \psi)\rln$
which associate the weight factor $\wf$ to the fields $\psi_{(*)}$,
the weight factor $\wf'_0$ to the fields $\psi_{(*)0}$,
and the weight factor $\wf'_\sp$ to the fields $\psi_{(*)\nu}$, 
$1\le\nu\le 3$. The norm $\lun \ \cdot\ \run$
has mass $\bar\fm$ and the norm $\lln  \ \cdot\ \rln$
has mass $\fm$.

Let $\fP_\rel$, $\fP_\fD$ and $\fP_\irr$ be the spaces of 
Definition \ref{defRENspaces} and, as in 
Definition \ref{defRENspaces}, denote by 
$\fP_{(4,0,0)}$ the space of quartic monomials in $\psi_*,\psi$ that 
are $\fS$ invariant and particle number preserving.

\begin{proposition}\label{propLOCproj}
There exist linear maps  
\begin{equation*}
\ell:\fP_\rel\rightarrow\bbbc \qquad 
\cL_4:\fP_\rel\rightarrow\fP_{(4,0,0)}\qquad
\cL_\fD:\fP_\rel\rightarrow\fP_\fD \qquad
\cI:\fP_\rel\rightarrow\fP_\irr  
\end{equation*}
such that, for all $\cP\in\fP_\rel$,
\begin{align*}
\cP\big((\psi_*,\{\partial_\nu\psi_*\})\,,\,
                        (\psi,\{\partial_\nu\psi\})\big)
&=\ell(\cP)\hskip-2.5pt\int\hskip-3pt dx \ \psi_*(x) \psi(x)
+\cL_4(\cP)(\psi_*,\psi)   
\\
&\hskip 2cm
+\cL_\fD(\cP)\big((\psi_*,\{\partial_\nu\psi_*\})\,,\,
                        (\psi,\{\partial_\nu\psi\})\big)\\
&\hskip 2cm   +\cI(\cP)\big((\psi_*,\{\partial_\nu\psi_*\})\,,\,
                        (\psi,\{\partial_\nu\psi\})\big)
\end{align*}

\noindent
and 
\begin{itemize}[leftmargin=*, topsep=2pt, itemsep=2pt, parsep=0pt]
\item
 for $\cP\in\fP_\fD$,
\begin{equation*}
\ell(\cP)=0 \qquad \cL_4(\cP)= 0 \qquad \cL_\fD(\cP)=\cP \qquad \cI(\cP)=0 
\end{equation*}

\item
 for $\cP\in\fP_{(4,0,0)}$,
\begin{equation*}
\ell(\cP)=0 \qquad \cL_4(\cP)= \cP \qquad \cL_\fD(\cP)=0 \qquad \cI(\cP)=0 
\end{equation*}

\item
 for 
$
\cP=\int dx\,dx'\ \psi_*(x) K(x,x')\psi(x')\in\fP_{(2,0,0)}
$
\begin{itemize}[leftmargin=*, topsep=2pt, itemsep=2pt, parsep=0pt]
\item
$
\ell(\cP)=\int dx'\ K(0,x')
$ 
\item
$
\cL_4(\cP)=0
$ 
\item
$\cL_\fD(\cP)=\cL_{(1,1,0)}(\cP)+\cL_{(0,1,1)}(\cP)+\cL_{(0,0,2)}(\cP)$ with
\begin{align*}
\cL_{(1,1,0)}(\cP)&\in\fP_{(1,1,0)} &
\lln \cL_{(1,1,0)}(\cP)\rln&\le 2\cc_\loc \sfrac{\wf'_0}{\wf}\lun \cP \run \\
\cL_{(0,1,1)}(\cP)&\in\fP_{(0,1,1)} &
\lln \cL_{(0,1,1)}(\cP)\rln&\le 6\cc_\loc \sfrac{\wf'_0\wf'_\sp}{\wf^2}\lun \cP \run\\
\cL_{(0,0,2)}(\cP)&\in\fP_{(0,0,2)} &
\lln \cL_{(0,0,2)}(\cP)\rln&\le 9\cc_\loc \sfrac{{\wf'_\sp}^2}{\wf^2}\lun \cP \run \\
\end{align*}
\item
$
\cI(\cP)=0
$ 
\end{itemize}
\pagebreak[2]

\item
 for 
$
\cP\in\fP_{(1,0,1)}
$, 
we have
$\ell(\cP)=0\,$, $\cL_4(\cP)=0\,$,
$
\cI(\cP)=0
$ 
and
 $\cL_\fD(\cP)=\cL_{(0,1,1)}(\cP)+\cL_{(0,0,2)}(\cP)$ with
\begin{align*}
\cL_{(0,1,1)}(\cP)&\in\fP_{(0,1,1)} &
\lln \cL_{(0,1,1)}(\cP)\rln&\le \cc_\loc \sfrac{\wf'_0}{\wf}\lun \cP \run\\
\cL_{(0,0,2)}(\cP)&\in\fP_{(0,0,2)} &
\lln \cL_{(0,0,2)}(\cP)\rln&\le 3\cc_\loc \sfrac{\wf'_\sp}{\wf}\lun \cP \run 
\end{align*}

\item
 for 
$
\cP\in\fP_{(3,0,1)}
$, we have
$
\ell(\cP)=0
$,
$\cL_4(\cP)=0\,$,
$
\cL_\fD(\cP)=0
$ 
and
\begin{equation*}
\lln \cI(\cP)\rln\le18 \cc_\loc\big( \sfrac{\wf'_0}{\wf}+ \sfrac{\wf'_\sp}{\wf}\big)\lun \cP \run
\end{equation*}
\end{itemize}
\end{proposition}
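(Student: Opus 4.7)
The plan is to define the four maps on each summand $\fP_\vp$ of $\fP_\rel$ separately, then extend by linearity. Since $\fP_\rel$ is the direct sum of $\fP_\vp$ over the finite set of relevant types, it suffices to specify the decomposition on each type, and the individual bounds stated in the proposition for specific $\vp$'s then give the full definition. For $\vp\in\fD$, I would simply set $\cL_\fD(\cP)=\cP$ and $\ell(\cP)=\cL_4(\cP)=\cI(\cP)=0$; for $\vp=(4,0,0)$, set $\cL_4(\cP)=\cP$ and the rest to zero. These cases carry no content beyond the defining identity, and the bounds are trivial.

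The remaining three types are handled by the three parts of Lemma \ref{lemLlocalize}. For $\vp=(3,0,1)$, Lemma \ref{lemLlocalize}.b produces a $\cP_{\rm ren}\in\fP_{(2,1,1)}\oplus\fP_{(2,0,2)}\subset\fP_\irr$ with $\cP(\psi_*,\psi,\partial_\nu\psi)=\cP_{\rm ren}((\psi_*,\{\partial_{\nu'}\psi_*\}),(\psi,\{\partial_{\nu'}\psi\}))$; I set $\cI(\cP)=\cP_{\rm ren}$ and the others to zero. For $\vp=(1,0,1)$, Lemma \ref{lemLlocalize}.a delivers $\cP_{\rm ren}\in\fP_{(0,1,1)}\oplus\fP_{(0,0,2)}\subset\fP_\fD$, and I split it into its $\fP_{(0,1,1)}$-summand and its $\fP_{(0,0,2)}$-summand, which become $\cL_{(0,1,1)}(\cP)$ and $\cL_{(0,0,2)}(\cP)$, whose sum defines $\cL_\fD(\cP)$. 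For $\vp=(2,0,0)$, Lemma \ref{lemLlocalize}.c gives $\cP=\de\mu\int\psi_*\psi+\cP_{\rm ren}$ with $\cP_{\rm ren}\in\fP_{(1,1,0)}\oplus\fP_{(0,1,1)}\oplus\fP_{(0,0,2)}$, and I define $\ell(\cP)=\de\mu=\int dy\,K(0,y)$ and distribute $\cP_{\rm ren}$ among $\cL_{(1,1,0)}(\cP),\cL_{(0,1,1)}(\cP),\cL_{(0,0,2)}(\cP)$.

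The norm bounds then follow by comparing the weight factors of the input and output norms. For instance, if $\cP\in\fP_{(2,0,0)}$ has kernel $K$, then $\lun\cP\run=\|K\|_{\bar\fm}\wf^2$, while the piece in $\fP_{(0,1,1)}$ has kernel bounded in $\|\cdot\|_\fm$ by $\cc_\loc\|K\|_{\bar\fm}$ by Lemma \ref{lemLlocalize}.c and carries weight $\wf'_0\wf'_\sp$; this yields $\lln\cL_{(0,1,1)}(\cP)\rln\le\cc_\loc\,\wf'_0\wf'_\sp\|K\|_{\bar\fm}=\cc_\loc\frac{\wf'_0\wf'_\sp}{\wf^2}\lun\cP\run$, with the small combinatorial prefactors $2,6,9,3,18$ coming from the three substitution paths in the proof of Lemma \ref{lemLlocalize}.b,c and from the averaging over $\fS_\spat$ used there to restore symmetry. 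The formula $\ell(\cP)=\int dy\,K(0,y)$ for type $(2,0,0)$ is exactly the constant $\de\mu$ produced by Lemma \ref{lemLlocalize}.c, and its reality is noted there.

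There is no real obstacle: the work is bookkeeping — unpacking the three parts of Lemma \ref{lemLlocalize} into the four target spaces, checking that each output lands in the correct direct summand of $\fP_\fD$ or $\fP_\irr$, and tracking the weight factors. The only mild subtlety is verifying that the decompositions on types $(2,0,0)$ and $(1,0,1)$ genuinely produce $\fS$-invariant outputs, which is handled by the averaging step already built into the statement of Lemma \ref{lemLlocalize}.
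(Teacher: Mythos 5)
Your proposal is correct and is essentially the paper's own argument: the paper's proof of Proposition \ref{propLOCproj} is literally ``Just apply the previous lemma,'' i.e.\ define the maps type by type on the direct summands $\fP_\vp$ of $\fP_\rel$, using Lemma \ref{lemLlocalize}.a,b,c for the types $(1,0,1)$, $(3,0,1)$, $(2,0,0)$ and the trivial assignments for $\fP_\fD$ and $\fP_{(4,0,0)}$, with the stated constants coming from counting the pieces produced by the localization and the $\fS_\spat$ averaging. You have simply spelled out the bookkeeping the paper leaves implicit.
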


\begin{proof} Just apply the previous lemma. 
\end{proof}

\begin{definition}\label{defLOCrel}
Let $\cF(\tilde\psi_*,\tilde\psi)$ be an analytic function of the fields
in a neighbourhood of the origin in $\tilde\cH_0^{(n)}\times \tilde\cH_0^{(n)}$
that obeys $\cF(0,0)=0$. Write $\cF=\cF_\rel+\cF_\irr$ with $\cF_\rel\in\fP_\rel$
and $\cF_\irr\in\fP_\irr$. Define
\begin{equation*}
\ell(\cF)=\ell(\cF_\rel)\quad 
\cL_4(\cF)=\cL_4(\cF_\rel)\quad
\cL_\fD(\cF)=\cL_\fD(\cF_\rel)\quad
\cI(\cF) = \cI(\cF_\rel)+\cF_\irr
\end{equation*}
\end{definition}

\begin{corollary}\label{corLOCproj}
Let $\cF(\tilde\psi_*,\tilde\psi)$ be an analytic function of the fields
in a neighbourhood of the origin in $\tilde\cH_0^{(n)}\times \tilde\cH_0^{(n)}$
that obeys $\cF(0,0)=0$. 
\begin{enumerate}[label=(\alph*), leftmargin=*]
\item 
Then
\begin{align*}
\cF\big((\psi_*,\{\partial_\nu\psi_*\})\,,\,
                        (\psi,\{\partial_\nu\psi\})\big)
&=\ell(\cF)\hskip-2.5pt\int\hskip-3pt dx \ \psi_*(x) \psi(x)
+ \cL_4(\cF)(\psi_*,\psi)\\
&\hskip.5in +\cL_\fD(\cF)\big((\psi_*,\{\partial_\nu\psi_*\})\,,\,
                        (\psi,\{\partial_\nu\psi\})\big)\\
&\hskip.5in  +\cI(\cF)\big((\psi_*,\{\partial_\nu\psi_*\})\,,\,
                        (\psi,\{\partial_\nu\psi\})\big)
\end{align*}

\item 
If the monomial in $\cF$ of type $(2,0,0)$ is 
$\cF_2(\psi_*,\psi)=\int dx\,dx'\ \psi_*(x) K(x,x')\psi(x')$
then
\begin{equation*}
\ell(\cF)=\int dx\ K(0,x)=\frac{\cF_2(1,1)}{\int dx}\qquad\text{and}\qquad
|\ell(\cF)|\le \sfrac{1}{\wf^2}\lln \cF \rln
\end{equation*}

\item
We have
\begin{equation*}
\lun \cL_4(\cF)\run\ ,\ \lln \cL_\fD(\cF)\rln\ ,\ \lln \cI(\cF)\rln\ \le\ 
\Big[1+9\cc_\loc\big( \sfrac{\wf'_0}{\wf}+ \sfrac{\wf'_\sp}{\wf}\big)\Big]^2\,\lun \cF \run
\end{equation*}

\item 
Define a partial ordering\footnote{
``Converting a nonderivative field to a derivative field'' or
``adding a field'', increases $(p_u,p_0,p_\sp)$ under this 
partial ordering.} on the set of vectors $\vp=(p_u,p_0,p_\sp)$ 
by
\begin{equation*}
(p_u,p_0,p_\sp) \lesssim (p_u',p_0',p_\sp')
\,\ \iff\ \  p_0\le p_0',\ \, 
     p_\sp\le p_\sp',\ \ 
     p_u+p_0+p_\sp\le p_u'+p_0'+p_\sp'
\end{equation*}
If $\ \cF\in\fP_\vp\ $ then
$\ \cL_\fD(\cF)\in\bigoplus\limits_{\vp'\gtrsim\vp}\fP_{\vp'}$.
\end{enumerate}
\end{corollary}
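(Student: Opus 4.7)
The plan is to derive all four parts directly from Proposition \ref{propLOCproj} together with Definition \ref{defLOCrel}, exploiting the linearity of the projection operators across monomial types.

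For parts (a) and (b) I would first decompose $\cF=\cF_\rel+\cF_\irr$. Applying Proposition \ref{propLOCproj}'s identity to the polynomial $\cF_\rel$ and then adding $\cF_\irr$ evaluated at $(\psi_{(*)},\{\partial_\nu\psi_{(*)}\})$ to both sides yields (a), because Definition \ref{defLOCrel} is arranged precisely so that $\cI(\cF)=\cI(\cF_\rel)+\cF_\irr$ absorbs the extra contribution. For (b), the type-$(2,0,0)$ piece of $\cF$ lives entirely in $\cF_\rel$, so the third bullet of Proposition \ref{propLOCproj} immediately gives $\ell(\cF)=\int dx'\, K(0,x')$; the translation component of $\fS$-invariance lets me factor $\int dx\,dx'\,K(x,x')=(\int dx)(\int dy\,K(0,y))$, identifying this with $\cF_2(1,1)/\int dx$. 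The bound $|\ell(\cF)|\le |\int dy\,K(0,y)|\le \|K\|_{m=0}\le\|K\|_\fm$ combined with the fact that the weight attached to a $(2,0,0)$ monomial in $\lln\cdot\rln$ is $\wf^2$ closes out the estimate.

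Part (c) is essentially a bookkeeping exercise: decompose $\cF_\rel$ into its type-$\vp$ pieces and apply the appropriate bullet of Proposition \ref{propLOCproj} to each. The individual bounds sum to something of the form $\lun\cF\run\bigl[1+A\cc_\loc(w+v)+B\cc_\loc(w+v)^2\bigr]$ with $w=\wf'_0/\wf$ and $v=\wf'_\sp/\wf$; since $\cc_\loc$ depends only on $\fm,\bar\fm$ we may inflate it so that this is majorized by $[1+9\cc_\loc(w+v)]^2\lun\cF\run$. For $\cI$ I additionally use $\lln\cF_\irr\rln\le\lun\cF_\irr\run\le\lun\cF\run$, which is absorbed into the constant term of the square. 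Part (d) can be read straight off Proposition \ref{propLOCproj}: the nonzero outputs of $\cL_\fD$ are $(1,1,0),(0,1,1),(0,0,2)$ from input $(2,0,0)$, are $(0,1,1),(0,0,2)$ from input $(1,0,1)$, and are $\vp$ itself for $\vp\in\fD$. A short check confirms $\vp'\gtrsim\vp$ in each case, since in every transition $p_0$ and $p_\sp$ are nondecreasing while the total degree is preserved or increased.

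The heavy lifting has already been carried out in Lemma \ref{lemLlocalize} and Proposition \ref{propLOCproj}, so there is no genuinely analytic obstacle here. The only spot that needs care is the constant tracking in part (c), where one must verify that the coefficients coming out of the different monomial types package cleanly into the square $[1+9\cc_\loc(\wf'_0/\wf+\wf'_\sp/\wf)]^2$; once $\cc_\loc$ is taken large enough, this reduces to elementary arithmetic.
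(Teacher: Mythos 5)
Your argument is exactly the one the paper intends: Corollary \ref{corLOCproj} is stated without proof as an immediate consequence of Proposition \ref{propLOCproj} and Definition \ref{defLOCrel}, and your write-up (decompose $\cF=\cF_\rel+\cF_\irr$, apply the proposition type by type, use translation invariance for part (b), and absorb the per-type constants into $[1+9\cc_\loc(\wf'_0/\wf+\wf'_\sp/\wf)]^2$ after noting $\cc_\loc$ may be taken $\ge 1$) fills in precisely those implicit steps correctly.
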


\begin{remark}\label{remMXconsteigen}
The following are useful when exploiting Corollary \ref{corLOCproj}.b.
\begin{enumerate}[label=(\alph*), leftmargin=*] 
\item
Denote by $1$,$1_\crs$ and $1_\fin$ the functions on 
$\cX_0^{(n)}$, $\cX_{-1}^{(n+1)}$ and $\cX_n$, respectively,  
which alway take the value 1. Then
\begin{equation*}
Q1 =1_\crs\quad
Q^* 1_\crs=1\quad
Q_n1_\fin =1\quad
Q_n^* 1=1_\fin\quad
\fQ_n 1=a_n1\quad
D_n1_\fin=0
\end{equation*}
where, as in \eqref{eqnOSRan},
\begin{equation*}
a_n=a\big(1 +\smsum_{j=1}^{n-1}\sfrac{1}{L^{2j}}\big)^{-1}
\end{equation*}
\item 
We have
\begin{align*}
& S_n 1_\fin = S_n^* 1_\fin = \sfrac{1}{a_n} 1_\fin &\quad
& S_n(\mu) 1_\fin = S_n(\mu)^* 1_\fin = \sfrac{1}{a_n-\mu} 1_\fin \\
& S_n(\mu)^{(*)} Q_n^* \fQ_n 1 
           = \sfrac{a_n}{a_n-\mu}1_\fin &
&  {(S_n(\mu)^{(*)} Q_n^*\fQ_n)}^* 1_\fin  
          =\sfrac{a_n}{a_n-\mu}1 \\
& B_{(*)n,\mu} 1 = \sfrac{a_n}{[a_n-\mu-\de\mu][a_n-\mu]}1_\fin &
&  B_{(*)n,\mu}^* 1_\fin  = \sfrac{a_n}{[a_n-\mu-\de\mu][a_n-\mu]}1_\fin \\
& B_{(*)n,\mu,D} 1 = 0 &
&  B_{(*)n,\mu,D}^* 1_\fin = 0 \\
& B^{(-)}_{n,\mu,D} 1 = 0 &
&  B^{(-)*}_{n,\mu,D} 1_\fin = 0 
\end{align*}
where $B_{(*)n,\mu}$ and $B_{(*)n,\mu,D}$ are the operators of
\cite[Proposition \propBGEdephidemu]{BGE} and $B_{n,\mu,D}^{(-)}$
is the operator of \cite[Proposition \propBGEphivepssoln]{BGE}.
\end{enumerate}
\end{remark}

\begin{proof} (a) 
Taking Fourier transforms, both of the equations
\begin{equation*}
Q_n1_\fin =1\quad \text{and}\quad Q_n^* 1=1_\fin
\end{equation*} 
follow from the facts that
the function $u_n(p)$ of \cite[Remark \remPBSqnft.b]{POA}
obeys $u_n(k+\ell)=1$, when $k=\ell=0$ and
$u_n(k+\ell)=0$ when $k=0$ and $0\ne \ell\in\hat\cB_n$.
See \cite[Remark \remPBSqnft.e and Lemma \lemPBSunppties.b,c]{POA}.
Similarly, both of the equations
$Q1 =1_\crs$ and $Q^* 1_\crs=1$ follow from the facts that
the function $u_+(p)$ of \cite[(\eqnPBSuplusdef)]{POA}
obeys $u_+(\fk+\ell)=1$, when $\fk=\ell=0$ and
$u_+(\fk+\ell)=0$ when $\fk=0$ and $0\ne \ell\in\hat\cB^+$.
See \cite[Remark \remPBSqnft.e and Lemma \lemPBSuplusppties.c,d]{POA}.
As $\fQ_n=a\big(\bbbone
             +\smsum_{j=1}^{n-1}\sfrac{1}{L^{2j}}Q_jQ_j^*\big)^{-1}$
the equality $\fQ_n 1=a\big(1
             +\smsum_{j=1}^{n-1}\sfrac{1}{L^{2j}}\big)^{-1}1=a_n1$ follows.
That $D_n 1_\fin=0$
is true is trivial since discrete derivatives annihilate constant functions.

\Item (b) follows from part (a) and the definitions
\begin{align*}
{S_n^{(*)}}^{-1}&=D_n+Q_n^*\fQ_n Q_n\\
{S_n(\mu)^{(*)}}^{-1}&=D_n+Q_n^*\fQ_n Q_n-\mu\\
B_{(*)n,\mu}&=S_n^{(*)}\,
         \big[\bbbone -(\mu+\de\mu)S_n^{(*)}\big]^{-1} 
         S_n(\mu)^{(*)}Q_n^* \fQ_n\\
B_{(*)n,\mu,D}&=S_n(\mu)^{(*)}Q_n^* \fQ_n 
         - \big(Q_n^*\fQ_nQ_n-\mu-\de\mu\big)
          B_{(*)n,\mu}\\
B_{n,\mu,D}^{(-)}1
  &=\big[\bbbone-(Q_n^*\fQ_nQ_n-\mu)S_n(\mu)\big]Q_n^*\fQ_n
\end{align*}
\end{proof}

\section{Scaling and Bounds}\label{appSCscaling}

Let $n\ge 0$ and $0\le i,j\le n+1$. In this appendix we consider the impact 
of scaling on norms of functions
\begin{equation*}
\cF:\tilde\cH^{(n+1-j)}_{j-1}\times \tilde\cH^{(n+1-j)}_{j-1}
  \rightarrow\bbbc
\end{equation*}
and field maps
\begin{equation*}
A:\tilde\cH^{(n+1-j)}_{j-1}\times \tilde\cH^{(n+1-j)}_{j-1}
  \times\cH^{(n+1-i)}_{i-1}\times \cH^{(n+1-i)}_{i-1}
  \rightarrow \cH^{(0)}_{n}
\end{equation*}
Recall from  \cite[Definition \defSCscaling.b]{PAR1}, that
$(\bbbs \cF)(\tilde \be_*,\tilde \be)
   = \cF\big(\bbbs^{-1}\tilde\be_*,\bbbs^{-1}\tilde\be)$ maps
\begin{equation*}
\bbbs \cF:\tilde\cH^{(n+1-j)}_j\times \tilde\cH^{(n+1-j)}_j
  \rightarrow\bbbc
\end{equation*}
Similarly, define the scaled field map 
\begin{equation}\label{eqnSAscaledfm}
A^{(s)}(\tilde \be_*,\tilde \be,z_*,z)
   = \bbbl_*^{-1}\big[A\big(\bbbs^{-1}\tilde\be_*,\bbbs^{-1}\tilde\be,
           \bbbs^{-1}z_*,\bbbs^{-1}z)\big]
\end{equation}
with the $\bbbl_*$ of \cite[Definition \defHTbackgrounddomaction.a]{PAR1}.
It maps
\begin{equation*}
A^{(s)}:\tilde\cH^{(n+1-j)}_j\times \tilde\cH^{(n+1-j)}_j
  \times\cH^{(n+1-i)}_i\times \cH^{(n+1-i)}_i
  \rightarrow \cH^{(0)}_{n+1}
\end{equation*}

We fix any $\check\fm,\check\wf,\check\wf',\check\wf_\fl>0$ 
and use the norms $\|\cF(\tilde\al_*,\tilde\al)\|^{\check{}}$,
$\tn A(\tilde\al_*,\tilde\al,\ze_*,\ze)\tn^{\check{}}$ with mass $\check\fm>0$ and 
weight factors $\check\wf,\check\wf',\check\wf_\fl$ 
to measure the unscaled functions and field maps.
See  \cite[Definition \defDEFkrnel]{PAR1}.
The weight factor $\check\wf$ is used for the $\al_{(*)}$'s, 
the weight factor $\check\wf'$ is used for the $\al_{(*)\nu}$'s, 
$0\le\nu\le 3$, and  
the weight factor $\check\wf_\fl$ is used for the $\ze_{(*)}$'s.

Also, fix any $\fm$, $\wf$, $\wf'$, $\wf_\fl>0$ 
and use the norms $\|(\bbbs \cF)(\tilde \be_*,\tilde \be)\|$,
$\tn A^{(s)}(\tilde \be_*,\tilde \be,z_*,z)\tn$ with
mass $\fm>0$ and weight factors $\wf,\wf',\wf_\fl$ to measure 
the scaled functions and field maps. 
The weight factor $\wf$ is used for the $\be_{(*)}$'s, 
the weight factor $\wf'$ is used for the $\be_{(*)\nu}$'s, 
$0\le\nu\le 3$, and  
the weight factor $\wf_\fl$ is used for the $z_{(*)}$'s.

\begin{definition}[Scaling Divergence Factor]\label{defSAdd}
\ \begin{enumerate}[label=(\alph*), leftmargin=*]
\item
Let 
\begin{equation*}
\cM\big(\,(\al_*,\{\al_{*\nu}\})\,,\,
                            (\al,\{\al_\nu\})\,\big)
=\int_{\cX_{j-1}^{(n+1-j)}} dv_1\cdots dv_p\ M(v_1,\cdots,v_p)
             \smprod_{\ell=1}^p\al_{\si_\ell}(v_\ell)
\end{equation*}
be a monomial of degree $p$. Here each $\al_{\si_\ell}$ is one of 
$\al_*,\al, \big\{\al_{*\nu},\al_\nu\big\}_{\nu=0}^3$.
Denote by  
\begin{itemize}[leftmargin=*, topsep=2pt, itemsep=0pt, parsep=0pt]
\item
$p_u$, the number of $\al_{\si_\ell}$'s that is either 
$\al_*$ or $\al$ and 
\item
$p_0$, the number of $\al_{\si_\ell}$'s that is either 
$\al_{*0}$ or $\al_0$ and 
\item 
$p_\sp$, the number of $\al_{\si_\ell}$'s that is one of 
$\big\{\al_{*\nu},\al_\nu\big\}_{\nu=1}^3$.
\end{itemize}
Set
\begin{equation*}
\Sdf(\cM)=\big(\sfrac{1}{L^{3/2}}\sfrac{\wf}{\check\wf}\big)^{p_u}
         \big(\sfrac{1}{L^{7/2}}\sfrac{\wf'}{\check\wf'}\big)^{p_0}
         \big(\sfrac{1}{L^{5/2}}\sfrac{\wf'}{\check\wf'}\big)^{p_\sp}
\end{equation*}

\item
Let $\cF$ be an analytic function on a neighbourhood of the origin
in $\tilde\cH^{(n+1-j)}_{j-1}\times \tilde\cH^{(n+1-j)}_{j-1}$. 
Then $\Sdf(\cF)$ is the supremum of $\Sdf(\cM)$ with $\cM$ running over the
nonzero monomials in the power series representation of $\cF$.
\end{enumerate}
\end{definition}

\begin{lemma}\label{lemSAscaletoscale}
Assume that
$
\fm\le L\check\fm
$.
\begin{enumerate}[label=(\alph*), leftmargin=*]
\item
Let 
\begin{equation*}
\cM\big(\,(\al_*,\{\al_{*\nu}\})\,,\,
                            (\al,\{\al_\nu\})\,\big)
=\int_{\cX_{j-1}^{(n+1-j)}} dv_1\cdots dv_p\ M(v_1,\cdots,v_p)
             \smprod_{\ell=1}^p\al_{\si_\ell}(v_\ell)
\end{equation*}
be a monomial as in Definition \ref{defSAdd}.a. Then the kernel of $\bbbs\cM$ is
\begin{equation}\label{eqnSAkern}
 M^{(s)}(u_1,\cdots,u_p)
=L^{\frac{7}{2}p_u +\frac{3}{2}p_0+\frac{5}{2}p_\sp}  
   M(\bbbl u_1,\cdots,\bbbl u_p)            
\end{equation}
and
\begin{equation}\label{eqnSAkernNorm}
\big\| M^{(s)}\big\|_\fm
\le L^5 L^{-\frac{3}{2}p_u-\frac{7}{2}p_0-\frac{5}{2}p_\sp} 
                  \big\| M\big\|_{\check\fm}
\end{equation}

\item
Let $\cF$ be an analytic function on a neighbourhood of the origin
in $\tilde\cH^{(n+1-j)}_{j-1}\times \tilde\cH^{(n+1-j)}_{j-1}$. Then
\begin{equation*}
\big\|\bbbs\cF\big\|\le L^5\,\Sdf(\cF)\,\|\cF\|^{\check{}}
\end{equation*}
In the event that 
$\wf\le L^{3/2}\check\wf$ and 
$\wf'\le L^{5/2}\check\wf'$, then
$
\big\|\bbbs\cF\big\|\le L^5\,\|\cF\|^{\check{}}
$.

\item
Assume that
$
\wf\le L^{3/2}\check\wf,\
\wf'\le L^{5/2}\check\wf'$ and
$\wf_\fl\le L^{3/2}\check\wf_\fl
$.
Let $A$ be a field map defined on a neighbourhood of the origin
in $\tilde\cH^{(n+1-j)}_{j-1}\times \tilde\cH^{(n+1-j)}_{j-1}
  \times\cH^{(n+1-i)}_{i-1}\times \cH^{(n+1-i)}_{i-1}$ and taking values
in $\cH^{(0)}_{n}$. Then
$
\tn A^{(s)}\tn\le\tn A\tn^{\check{}}
$.
\end{enumerate}
\end{lemma}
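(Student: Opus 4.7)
The plan is to establish (a) by direct computation, then derive (b) and (c) from (a) via monomial-wise bounds. For (a), I would first record the action of $\bbbs^{-1}$ on individual field arguments at scale $j$: an undifferentiated component $\be_{(*)}$ picks up a factor $L^{-3/2}$, a temporal derivative component $\be_{(*)0}$ a factor $L^{-7/2}$, and a spatial derivative component $\be_{(*)\nu}$, $1\le\nu\le 3$, a factor $L^{-5/2}$, accompanied in each case by substitution of $\bbbl u$ for the argument. Substituting $\al_{\si_\ell}=(\bbbs^{-1}\be)_{\si_\ell}$ into $\cM$ and changing variables $v_\ell=\bbbl u_\ell$ produces a Jacobian $L^{5}$ per integration variable (since $\vol_{j-1}=L^5\vol_j$). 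Collecting the factors yields exactly \eqref{eqnSAkern}, after the elementary identity $5p-\tfrac{3}{2}p_u-\tfrac{7}{2}p_0-\tfrac{5}{2}p_\sp=\tfrac{7}{2}p_u+\tfrac{3}{2}p_0+\tfrac{5}{2}p_\sp$.

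For the norm estimate \eqref{eqnSAkernNorm}, I would unfold $\|\cdot\|_\fm$ as the maximum over the $p$ arguments of $\sup_{\text{fixed}}\vol^{p-1}\sum_{\text{free}}|\cdot|\,e^{\fm\cdot\mathrm{tree}}$, substitute the kernel formula for $M^{(s)}$ and reverse the change of variables, reindexing the free sums over $v\in\cX_{j-1}^{(n+1-j)}$ and converting $\vol_j^{p-1}$ to $L^{-5(p-1)}\vol_{j-1}^{p-1}$. The tree distance scales homogeneously under $\bbbl$ by a single factor of $L$ in the anisotropic lattice metric underlying the norms, so the hypothesis $\fm\le L\check\fm$ lets one dominate $e^{\fm\cdot\mathrm{tree}(u)}$ by $e^{\check\fm\cdot\mathrm{tree}(v)}$. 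The $L^{5p}$ from the kernel, the $L^{-5(p-1)}$ from the Jacobian and the $L^{-\tfrac{3}{2}p_u-\tfrac{7}{2}p_0-\tfrac{5}{2}p_\sp}$ from the field scalings then collapse to the claimed $L^{5}L^{-\tfrac{3}{2}p_u-\tfrac{7}{2}p_0-\tfrac{5}{2}p_\sp}\|M\|_{\check\fm}$.

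Part (b) follows by applying (a) monomial by monomial to the power series $\cF=\sum\cM$. For each monomial of type $(p_u,p_0,p_\sp)$ the identities $\|\cM\|^{\check{}}=\|M\|_{\check\fm}\check\wf^{p_u}{\check\wf'}^{p_0+p_\sp}$ and $\|\bbbs\cM\|=\|M^{(s)}\|_\fm\wf^{p_u}{\wf'}^{p_0+p_\sp}$ turn \eqref{eqnSAkernNorm} into $\|\bbbs\cM\|\le L^5\Sdf(\cM)\|\cM\|^{\check{}}$, whence $\|\bbbs\cF\|\le L^5\Sdf(\cF)\|\cF\|^{\check{}}$ after taking the sup over monomials. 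Under the special hypotheses $\wf\le L^{3/2}\check\wf$ and $\wf'\le L^{5/2}\check\wf'$ every factor defining $\Sdf(\cM)$ is at most $1$, yielding $\|\bbbs\cF\|\le L^5\|\cF\|^{\check{}}$.

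For part (c), the same monomial-wise recipe applies. The only additional ingredient is the outgoing $\bbbl_*^{-1}$ in the definition \eqref{eqnSAscaledfm}: it shifts the output-slot volume from $\vol_n$ to $\vol_{n+1}=L^{-5}\vol_n$, contributing an extra $L^{-5}$ that exactly cancels the $L^5$ left over in part (b). The fluctuation fields $z_{(*)}$ are handled just like undifferentiated fields -- the hypothesis $\wf_\fl\le L^{3/2}\check\wf_\fl$ plays the same role as its counterpart for $\wf$ and makes each $z$-related scaling ratio at most $1$. The main obstacle I anticipate is the metric bookkeeping: verifying that the anisotropic action of $\bbbl$ (factor $L^2$ temporally, $L$ spatially) scales the tree distance underlying the weights $e^{\fm\cdot\mathrm{tree}}$ homogeneously by the single factor $L$, so that the lone hypothesis $\fm\le L\check\fm$ suffices to convert the exponential weight at scale $j$ into the one at scale $j-1$. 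Once this compatibility is confirmed, the remainder is routine power counting of scaling factors.
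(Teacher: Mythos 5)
Your proposal is correct and follows essentially the same route as the paper: the scaled kernel via change of variables with the volume bookkeeping $\vol_{j-1}=L^5\vol_j$, the exponential tree weights controlled through $\fm\le L\check\fm$, and monomial-wise assembly for (b) and (c), with the extra $L^{-5}$ in (c) coming from the scaled output slot exactly as you say. The one point you flag but leave unverified --- that $\bbbl$ stretches the tree length by at least a factor $L$ --- is an inequality rather than a homogeneous scaling (temporal edges stretch by $L^2$, and the minimizing tree may change), and it is settled precisely as you anticipate by comparing a minimal tree with its image under $\bbbl^{-1}$, which is the paper's one-line verification.
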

\begin{proof} (a)
\cite[Remark \remSCscaling.h]{PAR1} gives \eqref{eqnSAkern}.
Then, introducing the local shorthand notation $\cX=\cX_j^{(n+1-j)}$ and
$\check\cX=\cX_{j-1}^{(n+1-j)}$,
\begin{align*}
&\big\| M^{(s)}\big\|_\fm\\
&=\max_{1\le i\le p} \max_{u_i}\!
 \int_{\cX^{p-1}}\hskip-12pt du_1 \cdots du_{i-1}\,du_{i+1} \cdots du_p\ 
   L^{\frac{7}{2}p_u +\frac{3}{2}p_0+\frac{5}{2}p_\sp}  M(\bbbl u_1,\cdots,\bbbl u_p)\, e^{\fm\tau(u_1,\cdots,u_p)}\\
&\le\sfrac{L^{\frac{7}{2}p_u +\frac{3}{2}p_0+\frac{5}{2}p_\sp}}{L^{5(p-1)}}\max_{1\le i\le p} \max_{v_i}
 \!\!\int\!\! dv_1 \cdots dv_{i-1}\,dv_{i+1} \cdots dv_p\,
     M(v_1,\cdots,v_p)\, 
   e^{L\check\fm\tau(\bbbl^{-1}v_1,\cdots,\bbbl^{-1}v_p)}\\
&\le\sfrac{L^{\frac{7}{2}p_u +\frac{3}{2}p_0+\frac{5}{2}p_\sp}}{L^{5(p-1)}}\max_{1\le i\le p} \max_{v_i}
 \int dv_1 \cdots dv_{i-1}\,dv_{i+1} \cdots dv_p\ 
     M(v_1,\cdots,v_p)\, 
   e^{\check\fm\tau(v_1,\cdots,v_p)}\\
&=L^5 L^{-\frac{3}{2}p_u-\frac{7}{2}p_0-\frac{5}{2}p_\sp} 
                  \big\| M\big\|_{\check\fm}
\end{align*}
since, if $t(v_1,\cdots,v_p)$ is the length of a specific tree $T$ that is
minimal for $\tau(v_1,\cdots,v_p)$ and if $t_L(v_1,\cdots,v_p)$ is the
length of the tree constructed from $T$ by moving the location $v$ of each
vertex of $T$ to $\bbbl^{-1}v$,
\begin{equation*}
L\tau(\bbbl^{-1}v_1,\cdots,\bbbl^{-1}v_p)
\le L t_L(v_1,\cdots,v_p)
\le t(v_1,\cdots,v_p)
=\tau(v_1,\cdots,v_p)
\end{equation*}

\Item (b) 
It suffices to consider the case that $\cF$ is a monomial as in part (a). Then
\begin{align*}
\|\bbbs \cF\| 
& = \big\| M^{(s)}\big\|_\fm\ \wf^{p_u}\,{\wf'}^{\,p_0+p_\sp}\\
& \le L^5 L^{-\frac{3}{2}p_u-\frac{7}{2}p_0-\frac{5}{2}p_\sp} 
      \big(\sfrac{\wf}{\check\wf}\big)^{p_u}\,
      \big(\sfrac{\wf'}{\check\wf'}\big)^{p_0+p_\sp}
\big\| M\big\|_{\check\fm}
  \check\wf^{p_u}\,\check{\wf'}^{p_0+p_\sp}\\
&=L^5\,\Sdf(\cF)\|\cF\|^{\check{}} 
\end{align*}

\Item (c) 
Once again it suffices to consider monomials
\begin{equation*}
A\big(\,(\al_*,\{\al_{*\nu}\})\,,\,(\al,\{\al_\nu\})\,,\,\ze_*\,,\,\ze\big)(v_0)
=\int dv_1\cdots dv_p\ M(v_0,v_1,\cdots,v_p)
             \smprod_{\ell=1}^p\al_{\si_\ell}(v_\ell)
\end{equation*}
of degree $p$. Here each $\al_{\si_\ell}$ is one of 
$\al_*,\al,\big\{\al_{*\nu},\al_\nu\big\}_{\nu=0}^3,\ze_*,\ze$.
If $\al_{\si_\ell}$ is one of $\al_*,\al,\big\{\al_{*\nu},\al_\nu\big\}_{\nu=0}^3$,
then $v_\ell$ runs over $\cX_{j-1}^{(n+1-j)}$.
If $\al_{\si_\ell}$ is one of $\ze_*,\ze$,
then $v_\ell$ runs over $\cX_{i-1}^{(n+1-i)}$. The argument $v_0$ runs
over $\cX_n$.
We denote by  
\begin{itemize}[leftmargin=*, topsep=2pt, itemsep=2pt, parsep=0pt]
\item
 $p_u$, the number of $\al_{\si_\ell}$'s that is one of 
$\al_*,\al,\ze_*,\ze$.
\item
 $p_0$, the number of $\al_{\si_\ell}$'s that is either 
$\al_{*0}$ or $\al_0$ and 
\item
 $p_\sp$, the number of $\al_{\si_\ell}$'s that is one of 
$\big\{\al_{*\nu},\al_\nu\big\}_{\nu=1}^3$.
\end{itemize}
The analog of \eqref{eqnSAkern} for $A$ is
\begin{equation*}
 M^{(s)}(u_0,u_1,\cdots,u_p)
=L^{\frac{7}{2}p_u +\frac{3}{2}p_0+\frac{5}{2}p_\sp}  M(\bbbl u_0,\bbbl u_1,\cdots,\bbbl u_p)            
\end{equation*}
The analog of \eqref{eqnSAkernNorm} for $A$ is
\begin{equation*}
\big\| M^{(s)}\big\|_\fm
\le L^{-\frac{3}{2}p_u-\frac{7}{2}p_0-\frac{5}{2}p_\sp}  
             \big\| M\big\|_{\check\fm}
\end{equation*}
and the claim follows.
\end{proof}

\newpage
\section{Notation}\label{appNotation}

The references in the following tables are to \cite{PAR1} and this paper.

%
%
\begin{center}
\renewcommand{\arraystretch}{1.3}
  \begin{tabular}{|c|c|c| }
    \hline
     Notation 
     &Definition
     &Comments \\ \hline
  $X$
  &\S\sectINTmodel
  &spatial lattice \\ \hline
  $\bh$
  &\S\sectINTmodel
  &``kinetic energy'' operator\\ \hline
  $\oh$
  &\S\sectINTmodel
  &periodization of $\bh$\\ \hline
  $\oh_0 = \th\oh$
  &after (\eqnHTaIn)
  & periodization of $\bh_0$ \\ \hline
  $\bh_0 = \nabla^*\bH\nabla$
  &\S\sectINTstartPoint
  &  \\ \hline
  $L_\sp$
  &\S\sectINTmodel
  &spatial cutoff\\ \hline
  $L_\tp=\sfrac{1}{\th k T}$
  &after (\eqnTHsmallfieldoutput)
  &temporal cutoff\\ \hline
  $\cX_0 = \big( \bbbz / L_\tp\bbbz\big) \times 
                    \big(\bbbz^3 / L_\sp\bbbz^3 \big)$
  &after (\eqnTHsmallfieldoutput)
  & unit lattice \\ \hline
  $\cX_n = \big(\sfrac{1}{L^{2n}}\bbbz \big/ \sfrac{L_\tp}{L^{2n}}\bbbz\big)
   \!\times\! \big(\sfrac{1}{L^n}\bbbz^3\! \big/\sfrac{L_\sp}{L^n}\bbbz^3 \big)$
  &  Defn. \defHTbackgrounddomaction.a
  &``fine'' scaled lattice\\ \hline
  $\cX_0^{(n)}\! 
       =\!\big(\bbbz \big/\!\sfrac{L_\tp}{L^{2n}}\bbbz\big)
        \!\times\!\big(\bbbz^3\!\big/\sfrac{L_\sp}{L^n}\bbbz^3 \big)$
  & $\!$before Defn. \defHTblockspintr$\!$
  & unit blocked lattice\\ \hline
  $\cX_{-1}^{(n+1)}\! 
    =\!\big(L^2\bbbz \big/\!\sfrac{L_\tp}{L^{2n}}\bbbz\big)
  \!\times\!\big(L\bbbz^3\!\big/\sfrac{L_\sp}{L^n}\bbbz^3 \big)\!$
  & Defn. \defHTblockspintr.a
  &``coarse'' blocked lattice \\ \hline
  $\cX_{j}^{(n)} $
  &Defn. \defHTbackgrounddomaction.a
  &blocked, scaled lattices\\ \hline
  $ L $
  & Theorem \thmTHmaintheorem
  & scaling parameter \\ \hline
  $\cH_n=L^2(\cX_n) $
  &Defn. \defHTbackgrounddomaction.a
  & \\ \hline
  $\cH_0^{(n)}=L^2(\cX_0^{(n)}) $
  &Defn. \defHTbackgrounddomaction.a
  & \\ \hline
  $\cH_j^{(n)}=L^2(\cX_j^{(n)}) $
  &Defn. \defHTbackgrounddomaction.a
  & \\ \hline
  $\< \al_1,\al_2\>_j= \int_{X^{(n)}_j} \al_1(u) \,\al_2(u) \ du$
  &  Defn. \defHTbackgrounddomaction.a
  &  bilinear form for $\cH_j^{(n)}$\\ \hline
  $\int_{X^{(n)}_j} du = \sfrac{1}{L^{5j}} \smsum_{u\in\cX_j^{(n)} }$
  & Defn. \defHTbackgrounddomaction.a
  & ``integral'' over $\cX_j^{(n)}$ \\ \hline
  $\bbbl : \cX_j^{(n)}\rightarrow \cX_{j-1}^{(n)}$
  & Defn. \defHTbackgrounddomaction.a
  &$(u_0,\bu) \mapsto (L^2u_0, L\bu)$\\ \hline
  $\bbbl_*: \cH_j^{(n)}\rightarrow \cH_{j-1}^{(n)}$
  & Defn. \defHTbackgrounddomaction.a
  & $\bbbl_*(\al)(\bbbl u) = \al(u)$ \\ \hline
  $\bbbs= L^{3/2}\,\bbbl_*^{-1}:\cH_{j-1}^{(k)}\rightarrow\cH_j^{(k)}$
  & Defn. \defSCscaling.a
  & field scaling operator \\ \hline
  $\bbbs_\nu$
  &Defn. \defSCscaling.a
  &scales differentiated fields \\ \hline
  $ Q:\cH_0^{(n)}\rightarrow\cH_{-1}^{(n+1)} $
  &Defn. \defHTblockspintr.a 
  & blockspin average \\ \hline
  $ Q_n:\cH_n^{(0)} \rightarrow \cH_0^{(n)}  $
  & Defn. \defHTbackgrounddomaction.a 
  & blockspin average \\ \hline
  $ \check Q_n $
  & Lemma \lemSCacheckOne
  & $\check Q_n = \bbbs^{-1} Q_n\bbbs=QQ_{n-1}$\\ \hline
  \end{tabular}
\renewcommand{\arraystretch}{1.0}
\end{center}

\newpage
%
%
\begin{center}
\renewcommand{\arraystretch}{1.3}
  \begin{tabular}{|c|c|c| }
    \hline
     Notation 
     &Definition
     &Comments \\ \hline
  $q$&
  Definition \defHTbasicnorm.d&
  block spin averaging profile\\ \hline
  $\cA_0(\psi_*,\psi)$&
  (\eqnHTaZero)&
  initial action\\ \hline
  $A_0(\psi_*,\psi,\mu,\cV)$&
  Definition \defHTbackgrounddomaction.b&
  dominant part of $\cA_0$    \\ \hline
  $\cA_n(\psi_*,\psi)$&
  Proposition \propSTmainProp.a& 
  scale $n$ action \\ \hline
  $A_n(\psi_*,\psi,\phi_*,\phi,\mu,\cV)$&
  Definition \defHTbackgrounddomaction.b&
  dominant part of $\cA_n$   \\ \hline
  $D_0=\bbbone - e^{-\oh_0} -e^{-\oh_0} \partial_0$&
  (\eqnHTaIn)&
   \\ \hline
  $\bD_0$&
  \S\sectINTstartPoint&
  $D_0$ is the periodization of $\bD_0$\!\\ \hline
  $D_n = L^{2n}\ \bbbl_*^{-n} \, D_0\, \bbbl_*^n$&
  Definition \defHTbackgrounddomaction.a &
  scaled $D_0$\\ \hline
  $\bv$&
  \S\sectINTmodel&
  \!original two--body interaction\!\\ \hline
  $\cV_0(\psi_*,\psi)$&
  (\eqnHTaIn), \cite[Prop. \propSZprepforblockspin]{PAR1}&
  scale zero interaction\\ \hline
  $V_0$&
  \S\sectINTstartPoint&
  kernel of $\cV_0$\\ \hline
  $\bV_0$&
  \S\sectINTstartPoint&
  $V_0$ is the periodization of $\bV_0$\\ \hline
  $\rv_0=\smsum\limits_{x_2,x_3,x_4} \bV_0(0,x_2,x_3,x_4)$&
  \S\sectINTstartPoint&
    \\ \hline
  $\fv_0= 2\|\bV_0\|_{2m}$&
  \S\sectINTstartPoint&
    \\ \hline
  $\fv_n$&
  after (\eqnOSFabbrevwt)&
  $\sfrac{\fv_0}{L^n} = 2\|\cV_n^{(u)}\|_{2m}$ \\ \hline
  $\cV_n^{(u)}$&
  Definition \defHTbackgrounddomaction.b&
  $n$--fold scaled $\cV_0$\\ \hline
  $V_n^{(u)}$&
  Definition \defHTbackgrounddomaction.b&
  kernel of $\cV_n^{(u)}$\\ \hline
  $ \cV_n(\phi_*,\phi)$ &
  Theorem \thmTHmaintheorem&
  scale $n$ interaction\\ \hline
  $\cR_0(\psi_*,\psi)$&
  (\eqnHTaIn), \cite[Prop. \propSZprepforblockspin]{PAR1}&
     \\ \hline
  $\cE_0(\psi_*,\psi)$&
  (\eqnHTaIn), \cite[Prop. \propSZprepforblockspin]{PAR1}&
     \\ \hline
  $\mu$&
  \S\sectINTmodel&
  original chemical potential\\ \hline
  $\mu_0$&
  (\eqnHTaIn), \cite[Prop. \propSZprepforblockspin]{PAR1}&
  scale zero chemical potential\\ \hline
  $\mu_*$&
  (\eqnINTmustardef)&
  $\mu_*+\fv_0^{\sfrac{4}{3}-16\eps}\le 
                    \mu_0 \le \fv_0^{\sfrac{8}{9} +\eps}$\\ \hline
  $\mu_n$&  
  Theorem \thmTHmaintheorem&
  scale $n$ chemical potential\\ \hline
  $\bbbt$&
  Definition \defHTblockspintr.b   &
  block spin transformation\\ \hline
  $a=1$&
  Definition \defHTblockspintr.b &
  block spin parameter\\ \hline
  $a_n$&
  (\eqnOSRan) &
   $a\big(1 +\smsum_{j=1}^{n-1}\sfrac{1}{L^{2j}}\big)^{-1}$\\ \hline
  \end{tabular}
\renewcommand{\arraystretch}{1.0}
\end{center}

\newpage
%
%
\begin{center}
\renewcommand{\arraystretch}{1.3}
  \begin{tabular}{|c|c|c| }
    \hline
     Notation 
     &Definition
     &Comments \\ \hline
  $\fQ_n$&
  Definition \defHTbackgrounddomaction.b &
  $\!a\big(1 +\smsum_{j=1}^{n-1}\sfrac{1}{L^{2j}}Q_jQ_j^*\big)^{-1}\!$
      if $n\ge 2$\\ \hline
  $\check \fQ_n$&
  Lemma \lemSCacheckOne&
  $\check \fQ_n=\sfrac{1}{L^2}\bbbs^{-1} \fQ_n\bbbs$\\ \hline
  $N^{(n)}_\bbbt$&
  Definition \defHTblockspintr.b&
  normalization constant for $\bbbt$\\ \hline
  $\bbbt_n^{(SF)}$&
  Definition \defHTapproximateblockspintr&
  small field blockspin transformation \\ \hline
  $ \tilde N^{(n)}_\bbbt$&
  Definition \defHTapproximateblockspintr&
  normalization constant for $\bbbt_n^{(SF)}$\\ \hline
  $\phi_{(*)n}(\psi_*,\psi,\mu,\cV)$&
  Proposition \propHTexistencebackgroundfields&
  background fields\\ \hline
  $\phi_{(*)n}^{(\ge 3)}(\psi_*,\psi,\mu,\cV)$&
  Proposition \propHTexistencebackgroundfields&
  part of $\phi_{(*)n}$ of degree at least 3\\ \hline
  $\psi_{(*)n}(\th_*,\th,\mu_n,\cV_n)$&
  Proposition \propHTexistencecriticalfields&
  critical fields\\ \hline
  $\psi_{(*)n}^{(\ge 3)}(\th_*,\th,\mu_n,\cV_n)$&
  Proposition \propHTexistencecriticalfields&
  part of $\psi_{(*)n}$ of degree at least 3\\ \hline
  $\De^{(n)}$&
  (\eqnHTden)&
    \\ \hline
  $ C^{(n)}$&
  (\eqnHTcn)&
  covariance\\ \hline
  $D^{(n)}$&
  before (\eqnHTcn)&
  square root of $C^{(n)}$ \\ \hline
  $C^{(n)}(\mu)$&
  Proposition \propHTexistencecriticalfields&
  $C^{(n)}(\mu)=\big(\sfrac{a}{L^2}Q^*Q+\De^{(n)}(\mu)\big)^{-1}$\\ \hline
  $\De^{(n)}(\mu)$&
  Proposition \propHTexistencecriticalfields&
    \\ \hline
  $\de\psi_{(*)}$&
  (\eqnHTfirstchangeintvar)&
  fluctuation fields\\ \hline
  $\de\psi_{(*)}=D^{(n)(*)} \ze_{(*)}$&
  after (\eqnHTcn)&
  fluctuation fields\\ \hline
  $z(w) = \ze(\bbbl w)$&
  before (\eqnSTintuncheck)&
  fluctuation field\\ \hline
  $\tilde\al = \big(\al,\!\{\al_\nu \}_{\nu=0,1,2,3}\big)\!$&
  (\eqnTHdefexpandedstates)&
  $\al,\al_\nu\in \cH_j^{(n)}$\\ \hline
  $\tilde \cH_j^{(n)}$&
  (\eqnTHdefexpandedstates)&
   $\big\{\tilde\al\big\}=\tilde\cH^{(n)^{\oplus 4}}_j$\\ \hline
  $\vp=(p_u,p_0,p_\sp)$&
  Definition \defINTmonomialtype&
  monomial type\\ \hline
  $\fD$&
  (\eqnINTfDdef)&
  low degree watch list\\ \hline
  $\fD_\rel$&
  Definition \defINTrelmonomial&
  scaling/weight relevant monomial types \\ \hline
  $\sdf(\vp;C)$&
   Definition \defOSFsdf&
   Scaling divergence factor\\ \hline
  $\sdf(C)$&
   Definition \defOSFsdf&
   $\sup_{\vp\notin\fD_\rel}\sdf(\vp;C)$\\ \hline
  $\De(\vec p)$&
   Definition \defOSFsdf&
   $\sfrac{3}{2}p_u+\sfrac{7}{2}p_0+\sfrac{5}{2}p_\sp$ where
        $\vp=(p_u,p_0,p_\sp)$\\ \hline
  $\tilde\cR_0(\tilde\psi_*,\tilde\psi)$&
  \S\sectINTstartPoint&
  $\!\cR_0(\psi_*,\psi) \!=\! \tilde\cR_0\big((\psi_*,\{\partial_\nu\psi_*\}),(\psi,\{\partial_\nu\psi\})
          \big)\!\!$\\ \hline
  $\tilde\cR_0^{(\vp)}$&
  \S\sectINTstartPoint&
  part of $\tilde\cR_0$ of type $\vp$\\ \hline
  $\tilde\cR_n^{(\vp)}(\tilde\phi_*,\tilde\phi)$&
  Theorem \thmTHmaintheorem&
  polynomial of type $\vp$\\ \hline
  \end{tabular}
\renewcommand{\arraystretch}{1.0}
\end{center}

\newpage
%
%
\begin{center}
\renewcommand{\arraystretch}{1.3}
  \begin{tabular}{|c|c|c| }
    \hline
     Notation 
     &Definition
     &Comments \\ \hline
  $ \tilde\cR_n(\tilde\phi_*,\tilde\phi)$&
  Thm. \thmTHmaintheorem &
  $\,\tilde\cR_n(\tilde\phi_*,\tilde\phi)=\sum_{\vp\in\fD}
               \tilde\cR_n^{(\vp)}(\tilde\phi_*,\tilde\phi)\,$\\ \hline
  $\cR_n(\phi_*,\phi )$&
  Thm. \thmTHmaintheorem&
    $\cR_n(\phi_*,\phi )
= \tilde\cR_n\big((\phi_*,\{\partial_\nu\phi_*\}),(\phi,\{\partial_\nu\phi\})
          \big)$\\ \hline
  $ \tilde\cE_n(\tilde\psi_*,\tilde\psi)$&
  Thm. \thmTHmaintheorem&
  scaling/weight irrelevant function\\ \hline
  $ \cE_n(\psi_*,\psi )$&
   Thm. \thmTHmaintheorem&
  $\cE_n(\psi_*,\psi )
= \tilde\cE_n\big((\psi_*,\{\partial_\nu\psi_*\}),(\psi,\{\partial_\nu\psi\})
          \big)$ \\ \hline
  $ \cZ_n$&
  Thm. \thmTHmaintheorem&
  normalization constant\\ \hline
  $ \tilde\cZ_n$&
  (\eqnHTmultiBS)&
  $\tilde \cZ_n=\smprod_{j=1}^n L^{3|\cX_0^{(j)}|}$\\ \hline
  $ \cZ'_n$&
   Prop. \propOSFmainprop&
   normalization constant\\ \hline
  $\|f\|_\fm$&
  Defn. \defHTkernelnorm&
  $\ell^1$--$\ell^\infty$ norm with mass $\fm$ of $f:\cX\rightarrow\bbbc$\\ \hline
  $   $&
  Defn. \defHTabstractnorm&
  norm with mass $\fm$ and weights $\ka_1,\cdots,\ka_s$\\ \hline
  $\tn A\tn$&
  \cite[Defn. \defDEFkrnel]{PAR1}&
  $\!$field--map norm of mass $\fm$ and weights $\ka_j\!\!$\\ \hline
  $\ka(n)=\sfrac{L^{\eta n}}{\fv_0^{1/3-\eps}}$&
  Defn. \defHTbasicnorm.a&
  weight for $\psi_{(*)}$ in the $n^{\rm th}$ step\\ \hline
  $\eta=\sfrac{1}{2}+\sfrac{1}{3}\sfrac{\log\fv_0}{\log(\mu_0-\mu_*)}$&
  Defn. \defHTbasicnorm.a&
  $\sfrac{3}{4}+2\eps  < \eta < \sfrac{7}{8} -\sfrac{\eps}{3}$\\ \hline
  $\ka'(n)=\sfrac{L^{\eta' n}}{\fv_0^{1/3-\eps}}$&
  Defn. \defHTbasicnorm.a&
  weight for $\partial_\nu\psi_{(*)}$ in the $n^{\rm th}$ step\\ \hline
  $\eta'=\sfrac{3}{2}\!-\!\sfrac{\log\fv_0}{\log(\mu_0-\mu_*)}\!-\!\eps\!$&
  Defn. \defHTbasicnorm.a&
  $\sfrac{3}{8}  < \eta' < \sfrac{3}{4} -8\eps$\\ \hline
  $\fe_\fl(n)=L^{\eta_\fl n}\,\fv_0^{\sfrac{1}{3}-2\eps }$&
  Defn. \defHTbasicnorm.a&
  bound on fluctuation integral of $n^{\rm th}$ step\\ \hline
  $\eta_\fl\!=\!\big(\sfrac{2}{3}\!-\!4\eps\big)\sfrac{\log\fv_0}{\log(\mu_0-\mu_*)}\!$&
  Defn. \defHTbasicnorm.a&
    \\ \hline
  $\bar\ka$&
  (\eqnOSFabbrevwt)&
  $\ka(n+1)$\\ \hline
  $\bar\ka'$&
  (\eqnOSFabbrevwt) &
  $\ka'(n+1)$\\ \hline
  $\bar\ka_\fl$&
  (\eqnOSFabbrevwt)&
  $\ka_\fl(n+1)=4r_n$\\ \hline
  $\bar\ka^\vp$&
  Lemma \lemOSFmainlem.c &
  $\bar\ka^\vp=\bar\ka^{p_u}\bar\ka'^{p_0+p_\sp}$ 
              where $\vp=(p_u,p_0,p_\sp)$\\ \hline
  $\|\tilde \cE(\tilde\psi_*,\tilde\psi)\|^{(n)}$&
    Defn. \defHTbasicnorm.a&
    norm with mass $m$ and weights $\ka(n)$, $\ka'(n)$\!\\ \hline
  $\|\tilde \cE(\tilde\psi_*,\tilde\psi)\|_m$&
    Defn. \defHTbasicnorm.a&
    norm with mass $m$ and weights all one\\ \hline
  $n_p\le\log_L\sfrac{1}{\fv_0^{\sfrac{2}{3}-8\eps}}$&
    Defn. \defHTbasicnorm.b&
    number of steps in the ``parabolic flow'' \\ \hline
  $r_n = \sfrac{1}{4} \ka_\fl(n+1) $&
    Defn. \defHTbasicnorm.c&
    radius of domain of integration in $n^{\rm th}$ step\!\\ \hline
  $\ka_\fl(n)= \big( \sfrac{L^n}{\fv_0} \big)^{\eps/2}$&
    Defn. \defHTbasicnorm.c&
    \\ \hline
  $ \fr_{\vp}(n,C)$&
   Remark \remHTpreciseinduction&
  $\big\|\tilde\cR_n^{(\vp)}\big\|_m \le \fr_{\vp}(n,\CC_\cR)$\\ \hline
  $ \Pi^n_\ell(C)$&
   Remark \remHTpreciseinduction&
   \\ \hline
  \end{tabular}
\renewcommand{\arraystretch}{1.0}
\end{center}

\newpage
%
%
\begin{center}
\renewcommand{\arraystretch}{1.3}
  \begin{tabular}{|c|c|c| }
    \hline
     Notation 
     &Definition
     &Comments \\ \hline
  $ \CC_\cR$&
   Remark \remHTpreciseinduction&
  $n$, $L$, independent constant \\ \hline
  $ \CC_{\de\cV}$&
   Remark \remHTpreciseinduction&
  $n$, $L$, independent constant \\ \hline
  $ \CC_\fl$&
  Lemma \lemOSFmainlem&
  $n$, $L$, independent constant\\ \hline
  $ \CC_\ren$&
  Lemma \lemRENcRcE&
  $n$, $L$, independent constant\\ \hline
  $\Gam_\op$&
  Convention \convBGEconstants&
  $n$, $L$, independent constant \\ \hline
  $ \mu_{\rm up}$&
  Convention \convBGEconstants&
  $n$, $L$, independent constant \\ \hline
  $ \GGa_1, \GGa_2, \cdots$&
  Convention \convBGEconstants&
  $n$, $L$, independent constants \\ \hline
  $\GGa_\bg$&
  Convention \convBGEconstants&
  $\max_j\GGa_j$ \\ \hline
  $\rrho_1,\rrho_2,\cdots$&
  Convention \convBGEconstants&
  $n$, $L$, independent constants \\ \hline
  $\rrho_\bg$&
  Convention \convBGEconstants&
  $\min\big\{\sfrac{1}{8}\,,\,\min_j\rrho_j\big\}$ \\ \hline
  $\LLa_{\de\mu}$&
  Lemma \lemRENmunppties&
  \!$n$ independent, $L$ dependent constant\!\\ \hline
  $\cc_\loc$&
  Lemma \lemLlocalize&
  $n$, $L$, independent constant\\ \hline
  $\cc_A$&
  (\eqnRENavarbndsa)&
  $n$, $L$, independent constant\\ \hline
  $\cc_\Om$&
  Lemma \lemRENpsitophi&
  $n$, $L$, independent constant\\ \hline
  $\cc_{\de\cV}$&
  Lemma \lemRENreninteraction&
  $n$, $L$, independent constant\\ \hline
  $\GGa_\Phi$&
  (\eqnRENPhiDeltaPhiestimates), (\eqnRENnormsvarphi)&
  $n$, $L$, independent constant\\ \hline
  $\cc_\gar$&
  Lemma \lemRENrngarbage&
  $n$, $L$, independent constant\\ \hline
  $\cc_{\mu_*}$&
  Lemma \lemOSImustar&
  $n$, $L$, independent constant\\ \hline
  $S_n=(D_n\!+\!Q_n^*\fQ_n Q_n)^{-1}$&
  Theorem \HTthminvertibleoperators&
  Green's functions\\ \hline
  $S_n(\mu)$&
  Theorem \HTthminvertibleoperators&
  $S_n(\mu)=(D_n+Q_n^*\fQ_n Q_n-\mu)^{-1}$\\ \hline
  $e_\r,\ e_\R,\ e_\mu,\ e_\RP $&
  before \cite[(\eqnSZuvoutput)]{PAR1}&
  parameters in \cite[Hypothesis 2.14]{UV}\\ \hline
  $\cZ_\th$&
  \cite[(\eqnSZspa)]{PAR1}&
  normalization constant\\ \hline
  $\cZ_\In $&
  \cite[Prop. \propSZprepforblockspin]{PAR1}&
  $\cZ_\In=\cZ_\th e^{-\th\mu}$\\ \hline
  $j(t)=e^{-t(h-\mu)}$&
  \cite[(\eqnSZspa)]{PAR1}&
    \\ \hline
  $V_\th(\al^*,\be)$&
  \cite[(\eqnSZspa)]{PAR1}&
  interaction output from \cite{UV}\\ \hline
  $\cR_\theta(\al_*,\be)$&
  \cite[(\eqnSZspa)]{PAR1}&
  degree two output from \cite{UV} \\ \hline
  $\cE_\theta(\al_*,\be)$&
  \cite[(\eqnSZspa)]{PAR1}&
  higher degree output from \cite{UV}\\ \hline
  \end{tabular}
\renewcommand{\arraystretch}{1.0}
\end{center}

\newpage
%
%
\begin{center}
\renewcommand{\arraystretch}{1.3}
  \begin{tabular}{|c|c|c| }
    \hline
     Notation 
     &Definition
     &Comments \\ \hline
  $\cD_\theta(\al_*,\be)$&
  \cite[(\eqnSZspa)]{PAR1}&
  $\cD_\theta(\al_*,\be) 
         = \cR_\theta(\al_*,\be) +\cE_\theta(\al_*,\be)$\\ \hline
  $ \check A_n(\th_*,\th,\check\phi_*,\check\phi,\mu,\cV)$&
  Definition \defSCacheck&
  $A_n(\bbbs\th_*,\bbbs\th,\bbbs\check\phi_*,\bbbs\check\phi,L^2\mu,\bbbs\cV)$\\ \hline
  $ \check\phi_{(*)n}(\th_*,\th,\mu,\cV)$&
  Definition \defBGAphicheck&
  $\bbbs^{-1}\big[\phi_{(*)n}(\bbbs\th_*,\bbbs\th,L^2\mu,\bbbs\cV)\big]$\\ \hline
  $\de\phi_{(*)n}\big(\psi_{*},\psi,\de\psi_*,\de\psi,\mu,\cV\big)\!$&
  Definition \defBGAbckgndVarn&
     \\ \hline
  $\de\check\phi_{(*)n}\big(\th_{*},\th,\de\psi_*,\de\psi,\mu,\cV\big)\!$&
  Definition \defBGAbckgndVarn&
     \\ \hline
  $\de{\check\phi_{(*)n}}^{(+)}\big(\th_*,\th;\de\psi_*,\de\psi,\mu,\cV\big)$&
  Definition \defBGAbckgndVarn&
     \\ \hline
  $\hat \psi_{(*)n}(\psi_*,\psi,\mu,\cV)$&
  (\eqnSThatpsi)&
     $\bbbs\big[\psi_{*n}(\bbbs^{-1}\psi_*,\bbbs^{-1}\psi,\mu,\cV)\big]$\\ \hline
  $\de\hat\phi_{(*)n+1}(\psi_*,\psi,z_*,z)$&
  (\eqnSTdehatphidef)&
     \\ \hline
  $\de\hat\phi_{(*)n+1}^{(+)}(\psi_*,\psi,z_*,z)$&
  (\eqnOSAhatphiplus)&
     \\ \hline
  $\check\cC_n(\th_*,\th)$&
  beginning \S \chapSTstrategy &
     \\ \hline
  $\check\cF_n(\th_*,\th) $&
  beginning \S \chapSTstrategy &
     \\ \hline
  $\check\cE_{n+1,1}(\th_*,\th)$&
  beginning \S \chapSTstrategy &
  $\cE_n\big( \psi_{*n}(\th_*,\th,\mu_n,\cV_n),\psi_n(\th_*,\th,\mu_n,\cV_n)\big)\!$\\ \hline
  $\de\check\cE_n(\th_*,\th,\de\psi_*,\de\psi)$&
  beginning \S \chapSTstrategy &
     \\ \hline
  $\de\check\cR_n(\th_*,\th,\de\psi_*,\de\psi)$&
  beginning \S \chapSTstrategy &
     \\ \hline
  $\de \check A_n(\th_*,\th,\de\psi_*,\de\psi)$&
  beginning \S \chapSTstrategy &
     \\ \hline
  $\cC_n(\psi_*,\psi)$&
  before (\eqnSThatpsi) &
     \\ \hline
  $\cF_n(\psi_*,\psi)$&
  (\eqnOSAfluctInt) &
     Also see Proposition \propOSFmainprop\\ \hline
  $\cE_{n+1,1}(\psi_*,\psi)$&
  (\eqnSThatpsi)&
   $(\bbbs\cE_n)\big( \hat\psi_{(*)n}(\psi_*,\psi,\mu_n,\cV_n)\big)$\\ \hline
  $ \de\cE_n(\psi_*,\psi,z_*,z)$&
  (\eqnOSAdeEndef)&
   \\ \hline
  $ \de\cR_n(\psi_*,\psi,z_*,z)$&
   (\eqnOSAdeRndef)&
   \\ \hline
  $ \de A_n(\psi_*,\psi,z_*,z)$&
  (\eqnOSAdeAndef)&
   \\ \hline
  $ \de A_n^{(2)},\ \de A_n^{(\ge 3)}$&
  Lemma \lemOSFmainlem.a&
   \\ \hline
  $ \tilde\cE_{n+1,1}(\tilde\psi_*,\tilde\psi) $&
  Lemma \lemOSFmainlem.b&
   $\!\cE_{n+1,1}(\psi_*,\psi)\!=\!
\tilde\cE_{n+1,1}\big(\!(\psi_{(*)},\{\partial_\nu\psi_{(*)}\})\big)\!$\\ \hline
  $ \tilde\cE_{n+1,2}(\tilde\psi_*,\tilde\psi) $&
  Lemma \lemRENcRcE&
   $\tilde\cE_{n+1}=\tilde\cE_{n+1,1}+\tilde\cE_{n+1,2}$\\ \hline
  $ \de\tilde\cE_n(\tilde\psi_*,\tilde\psi,z_*,z) $&
  Lemma \lemOSFmainlem.b&
   \\ \hline
  $ \de\tilde\cR_n^{(\vp)}(\tilde\psi_*,\tilde\psi,z_*,z)$&
  Lemma \lemOSFmainlem.c&
   \\ \hline
  $ \de\tilde\cR_n(\psi_*,\psi,z_*,z)$&
  Lemma \lemOSFmainlem.c&
   \\ \hline
  \end{tabular}
\renewcommand{\arraystretch}{1.0}
\end{center}

\newpage
%
%
\begin{center}
\renewcommand{\arraystretch}{1.3}
  \begin{tabular}{|c|c|c| }
    \hline
     Notation 
     &Definition
     &Comments \\ \hline
  $  \si_n(\vp)$&
  Lemma \lemOSFmainlem.c&
   \\ \hline
  $ \tilde\cE_\fl(\tilde\psi_*,\tilde\psi)$&
  Proposition \propOSFmainprop&
   \\ \hline
  $\tilde\cD(\tilde\psi_*,\tilde\psi,z_*,z)$&
  before Lemma \lemOSFcDtwo  &
     \\ \hline
  $P^\psi_2$&
  before Lemma \lemOSFcDtwo  &
  degree 1 in each of $\psi_*$, $\psi$, any 
  degree in $z_{(*)}$\\ \hline
  $P^\psi_1$&
  before Lemma \lemOSFcDtwo &
  extracts degree 1 in $\psi_{(*)}$, any degree in $z_{(*)}$\\ \hline
  $P^\psi_0$&
  before Lemma \lemOSFcDtwo  &
  degree 0 in $\psi_{(*)}$, $\psi_{(*)\nu}$, 
     any degree in $z_{(*)}$\\ \hline
  $\cM_n$&
  Lemma \lemOSFcDtwo.a &
     \\ \hline
  $P_{\psi_*\psi}$&
   Proposition \propOSFcEtwo.a  &
   degree 1 in each of $\psi_*$, $\psi$, 
     degree 0 in $\psi_{(*)\nu}$\\ \hline
  $M'_n $&
  Proposition \propOSFcEtwo.a &
     \\ \hline
  $ M_n $&
  Proposition \propOSFcEtwo.b &
     \\ \hline
  \end{tabular}
\renewcommand{\arraystretch}{1.0}
\end{center}

\newpage
\bibliographystyle{plain}
\bibliography{refs}

\end{document}